\newtheorem{theorem}{Theorem}[section]
\newtheorem{definition}[theorem]{Definition}
\newtheorem{lemma}[theorem]{Lemma}
\newcounter{note}[section]
\renewcommand{\thenote}{\thesection.\arabic{note}}
\newcommand{\lknote}[1]{\stepcounter{note}\textcolor{blue}{$\ll${\bf Loukas~\thenote:} {\sf #1}$\gg$\marginpar{\tiny\bf lk~\thenote}}}
\newcommand{\cA}{\ensuremath{\mathcal{A}}}
\newcommand{\cE}{\ensuremath{\mathcal{E}}}
\newcommand{\cF}{\ensuremath{\mathcal{F}}}
\newcommand{\cL}{\ensuremath{\mathcal{L}}}
\newcommand{\cS}{\ensuremath{\mathcal{S}}}
\newcommand{\repeattheorem}[1]{%
  \begingroup
  \renewcommand{\thetheorem}{\ref{#1}}%
  \expandafter\expandafter\expandafter\theorem
  \csname reptheorem@#1\endcsname
  \endtheorem
  \endgroup
}
\xdef\csname reptheorem@#1\endcsname{%
    \unexpanded\expandafter{\BODY}%
  }%
\unskip\label{#1}\endtheorem
\DeclareMathOperator{\acost}{AccessCost}
\DeclareMathOperator{\ALG}{ALG}
\DeclareMathOperator*{\argmin}{arg\,min}
\DeclareMathOperator*{\expect}{\mathbb{E}}
\DeclareMathOperator{\cost}{Cost}
\DeclareMathOperator{\costmax}{C_{\max}}
\DeclareMathOperator{\deterministic}{Derand}
\DeclareMathOperator{\kt}{KT}
\DeclareMathOperator{\LRA}{Lazy-Rounding}
\DeclareMathOperator{\MAE}{MAE}
\DeclareMathOperator{\mcost}{MovingCost}
\DeclareMathOperator{\MTF}{MTF}
\DeclareMathOperator{\MWU}{MWU}
\DeclareMathOperator{\OPT}{OPT}
\DeclareMathOperator{\phase}{start-phase}
\DeclareMathOperator{\pc}{\textbf{c}}
\DeclareMathOperator*{\Prob}{Pr}
\DeclareMathOperator{\tv}{TV}
\newcommand*{\rom}[1]{\expandafter\@slowromancap\romannumeral #1@}
\def\hf{\selectfont\sffamily\bfseries}
\newcommand{\lrp}[1]{\left( #1 \right)}
\renewcommand{\paragraph}[1]{\vspace{0.14cm} \noindent \textbf{#1}}
\title{The Online Min-Sum Set Cover Problem}
\author{
Dimitris Fotakis\thanks{National Technical University of Athens, Greece.  \texttt{fotakis@cs.ntua.gr,lukaskavouras@gmail.com}. Dimitris Fotakis is supported by the Hellenic Foundation for Research and Innovation (H.F.R.I.) under the ``First Call for H.F.R.I. Research Projects to support Faculty members and Researchers' and the procurement of high-cost research equipment grant'', project BALSAM, HFRI-FM17-1424.}\qquad
Loukas Kavouras\footnotemark[1]\thanks{Supported by a scholarship from the State
Scholarships Foundation,
co-financed by Greece and the European Union (European Social Fund-ESF).}
\qquad
Grigorios Koumoutsos\thanks{Universit\'{e} libre de Bruxelles, Belgium. \texttt{gregkoumoutsos@gmail.com}. Supported by Fonds de la Recherche Scientifique-FNRS Grant no MISU F 6001. Part of this work was carried out while visiting that National Technical University of Athens (NTUA), supported by FNRS Mobility Grant no 35282070.}\\[1.2ex] 
\qquad
Stratis Skoulakis\thanks{Singapore University of Technology and Design. \texttt{efstratios@sutd.edu.sg}. Supported by NRF 2018 Fellowship NRF-NRFF2018-07. Part of this work was carried out while  the author was a PhD student at NTUA.}\qquad
Manolis Vardas\thanks{ETH Zurich. \texttt{evardas@student.ethz.ch}. This research was carried out while the author was an undergraduate student at NTUA. }
}
\title{The Online Min-Sum Set Cover Problem}
\begin{document}
\date{}
\maketitle
\begin{abstract}
We consider the online Min-Sum Set Cover (MSSC), a natural and intriguing generalization of the classical list update problem. In Online MSSC, the algorithm maintains a permutation on $n$ elements based on subsets $S_1, S_2, \ldots$ arriving online. The algorithm serves each set $S_t$ upon arrival, using its current permutation $\pi_{t}$, incurring an access cost equal to the position of the first element of $S_t$ in $\pi_{t}$. Then, the algorithm may update its permutation to $\pi_{t+1}$, incurring a moving cost equal to the Kendall tau distance of $\pi_{t}$ to $\pi_{t+1}$. The objective is to minimize the total access and moving cost for serving the entire sequence. We consider the $r$-uniform version, where each $S_t$ has cardinality $r$. List update is the special case where $r = 1$.

We obtain tight bounds on the competitive ratio of deterministic online algorithms for MSSC against a static adversary, that serves the entire sequence by a single permutation. First, we show a lower bound of $(r+1)(1-\frac{r}{n+1})$ on the competitive ratio. Then, we consider several natural generalizations of successful list update algorithms and show that they fail to achieve any interesting competitive guarantee. On the positive side, we obtain a $O(r)$-competitive deterministic algorithm using ideas from online learning and the multiplicative weight updates (MWU) algorithm.

Furthermore, we consider efficient algorithms. We propose a memoryless online algorithm, called \emph{Move-All-Equally}, which is inspired by the Double Coverage algorithm for the $k$-server problem. We show that its competitive ratio is $\Omega(r^2)$ and $2^{O(\sqrt{\log n  \cdot \log r})}$, and conjecture that it is $f(r)$-competitive. We also compare Move-All-Equally against the dynamic optimal solution and obtain (almost) tight bounds by showing that it is $\Omega(r \sqrt{n})$ and $O(r^{3/2} \sqrt{n})$-competitive.
\end{abstract}

\newpage
\setcounter{page}{1}
\includecomment{onlymain}
\excludecomment{onlyapp}

\begin{onlymain}
\section{Introduction}

In Min-Sum Set Cover (MSSC), we are given a universe $U$ on $n$ elements and a collection of subsets $\cS = \lbrace S_1, \dotsc, S_m \rbrace$, with $S_t \subseteq U$, and the task is to construct a permutation (or list) $\pi$ of elements of $U$. The cost $\pi(S_t)$ of covering a set $S_t$ (a.k.a. the cover time of $S_t$) with a permutation $\pi$ is the position of the first element of $S_t$ in $\pi$, i.e., $\pi(S_t) = \min \lbrace i \,|\, \pi(i) \in S_t \rbrace $. The goal is to minimize the overall cost $\sum_{t} \pi(S_t)$ of covering all subsets of $\cS$.

The MSSC problem generalizes various NP-hard problems such as Min-Sum Vertex Cover and Min-Sum Coloring and it is well-studied. Feige, Lovasz and Tetali~\cite{FLT04} showed that the greedy algorithm, which picks in each position the element that covers the most uncovered sets, is a 4-approximation (this was also implicit in~\cite{BBHST98}) and that no $(4-\epsilon)$-approximation is possible, unless $\text{P} = \text{NP}$. 
Several generalizations have been considered over the years with applications in various areas (we discuss some of those problems and results in Section~\ref{sec:related_work}).

\paragraph{Online Min-Sum Set Cover.} In this paper, we study the online version of Min-Sum Set Cover. Here, the sets arrive online; at time step $t$, the set $S_t$ is revealed. An online algorithm is charged the \emph{access cost} of its current permutation $\pi_t(S_t)$; then, it is allowed to change its permutation to $\pi_{t+1}$ at a \emph{moving cost} equal to the number of inversions between $\pi_t$ and $\pi_{t+1}$, known as the Kendall tau distance $d_{\kt}(\pi_{t},\pi_{t+1})$. The goal is to minimize the total cost, i.e., $\sum_{t} \big( \pi_t(S_t) + d_{\kt}(\pi_{t},\pi_{t+1}) \big)$. This is a significant generalization of the classic list update problem, which corresponds to the special case where $|S_t|=1$ for all sets $S_t \in \cS$.




\paragraph{Motivation.} Consider a web search engine, such as Google. Each query asked might have many different meanings depending on the user. For example, the query ``Python'' might refer to an animal, a programming language or a movie. Given the pages related to ``Python'', a goal of the search engine algorithm is to rank them such that for each user, the pages of interest appear as high as possible in the ranking (see e.g., \cite{DKNS01}). 
Similarly, news streams include articles covering different reader interests each. We want to rank the articles so that every reader finds an article of interest as high as possible.
The MSSC problem serves as a theoretical model for practical problems of this type, 
where we want to aggregate disjunctive binary preferences (expressed by the input sets) into a total order.  
E.g., for a news stream, the universe $U$ corresponds to the available articles and the sets $S_t$ correspond to different user types. The cost of a ranking (i.e., permutation on $U$) for a user type is the location of the first article of interest. Clearly, in such applications, users arrive online and the algorithm might need to re-rank the stream (i.e., change the permutation) based on user preferences.

\paragraph{Benchmarks.} 
For the most part, we evaluate the performance of online algorithms by comparing their cost against the cost of an optimal offline solution that knows the input in advance and chooses an optimal permutation $\pi$. Note that this solution is \textit{static}, in the sense that it does not change permutations over time. This type of analysis, called \textit{static optimality}, is typical in online optimization and online learning. It was initiated in the context of adaptive data structures by the landmark result of Sleator and Tarjan~\cite{ST85b}, who showed that \textit{splay trees} are asymptotically as fast as any \textit{static} tree. Since then, it has been an established benchmark for various problems in this area (see e.g.~\cite{IM12,BCK03}); it is also a standard benchmark for several other problems in online optimization (e.g., online facility location~\cite{F08,Meyerson01}, minimum metric matching~\cite{GGPW19,KN03,NR17}, Steiner tree~\cite{NPS11}, etc.). 

A much more general benchmark is the \textit{dynamic Min-Sum Set Cover} problem, where the algorithm is compared against an optimal solution allowed to change permutations over time. This problem has not been studied even in the offline case. In this work, we define the problem formally and obtain first results for the online case.

We remark that the online dynamic MSSC problem belongs to a rich class of problems called \textit{Metrical Task Systems} (MTS)~\cite{BLS92}. MTS is a far-reaching generalization of several fundamental online problems and provides a unified framework for studying online problems (we discuss this in more detail in Section~\ref{sec:related_work}). Indeed, our results suggest that solving the online dynamic MSSC requires the development of powerful generic techniques for online problems, which might have further implications for the broader setting of MTS.

Throughout this paper, whenever we refer to online problems, like Min-Sum Set Cover or list update, we assume the static case, unless stated otherwise. 

\paragraph{Previous Work on List Update.} 
Prior to our work, the only version of online MSSC studied is the special case where $|S_t|=1$ for all sets; this is the celebrated list update problem and it has been extensively studied (an excellent reference is~\cite{BEY98}). It is known that the deterministic competitive ratio it least $2- \frac{2}{n+1}$ and there are several 2-competitive algorithms known; most notably, the Move-to-Front (MTF) algorithm, which moves the (unique) element of $S_t$ to the first position of the permutation, and the Frequency Count algorithm, which orders the elements in decreasing order according to their frequencies. 

The dynamic list update problem has also been extensively studied. MTF is known to be 2-competitive~\cite{ST85} and there are several other 2-competitive algorithms~\cite{Albers98,EY96}. 

\subsection{Our Results}


In this work, we initiate a systematic study of the online Min-Sum Set Cover problem. We consider the $r$-uniform case, where all request sets have the same size $|S_t| = r$. This is without loss of generality, as we explain in Section~\ref{sec:prelim}. 

The first of our main results is a tight bound on the deterministic competitive ratio of Online MSSC. We show that the competitive ratio of deterministic algorithms is $\Omega(r)$.

\begin{theorem}
\label{thm:lb}
Any deterministic online algorithm for the Online Min-Sum Set Cover problem has competitive ratio at least $(r+1) (1 - \frac{r}{n+1})$. 
\end{theorem}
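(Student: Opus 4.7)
The plan is to use the standard ``cruel adversary'' idea from list update, generalized to $r$-sets, together with an averaging argument over all static permutations. Fix any deterministic online algorithm \ALG, let $\pi_t$ be its permutation right before serving request $t$, and define the request sequence adaptively by
\[
S_t \;=\; \{\pi_t(n-r+1),\, \pi_t(n-r+2),\, \ldots,\, \pi_t(n)\},
\]
i.e.\ the $r$ elements currently sitting in the last $r$ positions. Since \ALG{} is deterministic, $\pi_t$ is a function of $S_1,\ldots,S_{t-1}$, so this recursion is well-defined and produces a legal sequence of $r$-sets.

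Next I would lower bound \ALG's cost. By construction, the first element of $S_t$ in $\pi_t$ sits at position exactly $n-r+1$, so the access cost of round $t$ is $\pi_t(S_t)=n-r+1$. Any moving cost $d_{\kt}(\pi_t,\pi_{t+1})$ is nonnegative, hence after $T$ rounds
\[
\cost(\ALG)\;\ge\; T\,(n-r+1).
\]

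The key step is to upper bound the static optimum by averaging over the $n!$ permutations of $U$. For a fixed $r$-set $S$ and a uniformly random permutation $\pi$, the positions $\pi^{-1}(S)$ form a uniformly random $r$-subset of $\{1,\ldots,n\}$, so its minimum $\pi(S)$ is the first order statistic of $r$ points placed uniformly in $\{1,\ldots,n\}$. A standard symmetry argument (the $r+1$ gaps around the $r$ points have equal expected length) gives $\expect_\pi[\pi(S)]=(n+1)/(r+1)$. Summing over $t$ and swapping the two sums,
\[
\frac{1}{n!}\sum_{\pi}\sum_{t=1}^{T}\pi(S_t)\;=\;\sum_{t=1}^{T}\expect_\pi[\pi(S_t)]\;=\;T\cdot\frac{n+1}{r+1},
\]
so some static permutation $\pi^\ast$ achieves cost at most $T(n+1)/(r+1)$, and this upper bounds $\cost(\OPT)$.

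Combining the two bounds yields
\[
\frac{\cost(\ALG)}{\cost(\OPT)}\;\ge\;\frac{T(n-r+1)}{T(n+1)/(r+1)}\;=\;(r+1)\cdot\frac{n-r+1}{n+1}\;=\;(r+1)\!\left(1-\frac{r}{n+1}\right),
\]
as claimed; letting $T\to\infty$ absorbs any additive competitive slack. There is no real obstacle: the only mild subtlety is making sure the moving-cost term on \ALG's side is correctly handled (it only helps the lower bound since it is nonnegative) and verifying the specialization to $r=1$ recovers the classical $2-2/(n+1)$ list-update bound of Sleator--Tarjan, which it does.
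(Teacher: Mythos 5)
Your proposal is correct and follows essentially the same route as the paper: the same cruel adversary requesting the last $r$ elements of the online permutation, the same per-request access-cost lower bound of $n-r+1$, and the same averaging over all $n!$ static permutations. The only difference is cosmetic — you obtain the average access cost $\frac{n+1}{r+1}$ via the order-statistic/equal-gaps symmetry argument, whereas the paper counts permutations with access cost exactly $i$ (there are $\binom{n-i}{r-1}r!(n-r)!$ of them) and sums a binomial identity; both computations are valid and yield the identical bound.
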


Note that for $r=1$, this bound evaluates  to $2 - \frac{2}{n+1}$, which is exactly the best known lower bound for the list update problem. 

We complement this result by providing a matching (up to constant factors) upper bound. 

\begin{theorem}
\label{thm:static_ub}
There exists a $(5r+2)$-competitive deterministic online algorithm for the Online Min-Sum Set Cover problem.
\end{theorem}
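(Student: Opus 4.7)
The plan is a two-stage deterministic algorithm. Stage 1 is a fractional algorithm, based on the Multiplicative Weights Update (MWU) method, that implicitly maintains a distribution over permutations and pays a fractional access plus fractional movement cost. Stage 2 is a lazy rounding that turns the fractional solution into an integral permutation while losing only a constant factor.

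For the LP relaxation, I would introduce, for each element $e$ and each level $i\in\{0,1,\dots,n-1\}$, variables $x_{e,i}\in[0,1]$ with the interpretation that $e$ lies in one of the top $i$ positions, subject to $x_{e,i}\le x_{e,i+1}$ and $\sum_e x_{e,i}=i$. The cover time of $S_t$ then relaxes to $\pi(S_t) = \sum_{i=0}^{n-1}\bigl(1-\sum_{e\in S_t} x_{e,i}\bigr)^{+}$, which yields an LP lower bound on $\OPT$. I would run online MWU separately at each level $i$: on arrival of $S_t$, for every level whose fractional coverage $\sum_{e\in S_t} x_{e,i}$ is bounded away from $1$, multiplicatively boost the weights of elements in $S_t$ by a factor $1+\eta$ with $\eta=\Theta(1/r)$, projecting back onto the capped simplex $\{y : \sum_e y_e=i,\,0\le y\le 1\}$. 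A KL-divergence potential against any fixed optimal fractional solution gives, by the standard MWU argument, both an amortized fractional access cost and a fractional movement cost $\sum_{e,i}|\Delta x_{e,i}|$ that are $O(r)$ times the LP cost.

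For the rounding, I would maintain a single permutation consistent with the fractional solution in a lazy way. Each element $e$ is summarized by a one-dimensional statistic of its profile $(x_{e,i})_i$ (for instance, the level at which $x_{e,i}$ crosses $1/2$), and two adjacent elements are swapped only once the fractional mass separating them has flipped sign. A two-point exchange argument then bounds the integral access cost on $S_t$ by a constant times the fractional access cost, while each Kendall--tau inversion is charged to a unit of fractional flow that has already crossed the corresponding cut. Combining the two bounds and tuning $\eta$ and the swap threshold should yield the specific constant $5r+2$ quoted in the theorem.

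The main obstacle is not the MWU analysis, which is standard, but the rounding: controlling $d_{\kt}(\pi_t,\pi_{t+1})$ by a constant times $\sum_{e,i}|\Delta x_{e,i}|$ requires a potential-based prepayment, since a single adjacent swap could otherwise reflect an arbitrarily small fractional motion. Designing the lazy-swap rule so that every integral swap is entirely paid for by accumulated fractional flow, while simultaneously keeping the access-cost overhead at a small multiplicative constant, is the delicate step that determines whether the final ratio is really $5r+2$ rather than a larger $O(r)$.
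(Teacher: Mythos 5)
Your approach is genuinely different from the paper's: you work with a fractional ``element-level'' relaxation (variables $x_{e,i}$ with $\sum_e x_{e,i}=i$) updated by per-level MWU, whereas the paper runs MWU over all $n!$ permutations with a tiny learning rate $e^{-1/n^3}$, so that the \emph{randomized} stage is already $5/4$-competitive in access cost, and then concentrates the entire factor-$r$ loss in a greedy derandomization (place the $r$-set of minimum expected cover cost first, remove it, recurse), made lazy by re-rounding only when the total variation distance of the MWU distribution drifts by $1/n$ within a phase.

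The gap in your proposal is the budget you assign to the rounding. A deterministic permutation $\rho$ derived from the fractional state must be simultaneously good for \emph{every} possible request set of size $r$ (the adversary sees $\rho_t$ before choosing $S_t$), and against the cover cost $\sum_{i}\lrp{1-\sum_{e\in S}x_{e,i}}^{+}$ this forces an $\Omega(r)$ loss, not $O(1)$: take the fractional state in which all $n$ elements have the identical profile $x_{e,i}=i/n$; then every $r$-set has fractional cover cost about $n/(2r)$, while the set consisting of the last $r$ elements of $\rho$ has integral access cost $n-r+1$, a ratio of roughly $2r$. (This is exactly the tightness of the paper's Theorem~\ref{thm:greedy_rounding}, which proves a $2r$ upper bound for its rounding.) Your per-element statistic (the level where $x_{e,i}$ crosses $1/2$) does not evade this, since the adversary can request $r$ elements that each carry only $1/r$ of the covering mass at low levels. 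Consequently, pairing an $O(1)$-loss rounding with an $O(r)$-competitive fractional stage is not achievable; with your stated budgets ($\eta=\Theta(1/r)$ giving $O(r)$ fractionally, times the unavoidable $\Omega(r)$ in rounding) you would land at $O(r^2)$. To recover $O(r)$ you would need the fractional stage to be $O(1)$-competitive in both access and movement against the LP --- which is essentially what the paper achieves by running MWU over permutations with learning rate $1/n^3$ and paying for each integral move of cost at most $n^2$ out of at least $n^2$ of accumulated MWU access cost per phase (Lemmas~\ref{lem:switch_cost} and~\ref{lem:mwu-cost-dtv}). There are also secondary issues you would need to address: independent per-level multiplicative updates with per-level projections need not preserve the monotonicity constraint $x_{e,i}\le x_{e,i+1}$, and the conversion of $\sum_{e,i}|\Delta x_{e,i}|$ into Kendall-tau movement of the integral permutation is not established by the sketch.
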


Interestingly, all prior work on the list update problem (case $r=1$) does not seem to provide us with the right tools for obtaining an algorithm with such guarantees! As we discuss in Section~\ref{sec:lower_bounds}, virtually all natural generalizations of successful list update algorithms (e.g., Move-to-Front, Frequency Count) end up with a competitive ratio way far from the desired bound. In fact, even for $r=2$, most of them have a competitive ratio depending on $n$, such as $\Omega(\sqrt{n})$ or even $\Omega(n)$. 

This suggests that online MSSC has a distinctive combinatorial structure, very different from that of list update, whose algorithmic understanding calls for significant new insights. The main reason has to do with the disjunctive nature of the definition of the access cost $\pi(S_t)$. In list update, where $r = 1$, the optimal solution is bound to serve a request $S_t$ by its unique element. The only question is how fast an online algorithm should upgrade it (and the answer is ``as fast as possible''). In MSSC, the hard (and crucial) part behind the design of any competitive algorithm is how to ensure that the algorithm learns fast enough about the element $e_t$ used by the optimal solution to serve each request $S_t$. This is evident in the highly adaptive nature of the deceptively simple greedy algorithm of \cite{FLT04} and in the adversarial request sequences for generalizations of Move-to-Front, in Section~\ref{sec:lower_bounds}. 

To obtain the asymptotically optimal ratio of Theorem~\ref{thm:static_ub}, we develop a rounding scheme and use it to derandomize the multiplicative weights update (MWU) algorithm. 
Our analysis bounds the algorithm's access cost in terms of the optimal cost, but it does not account for the algorithm's moving cost. We then refine our approach, by performing lazy updates to the algorithm's permutation, and obtain a competitive algorithm for online MSSC. 

We also observe (in Section~\ref{sec:prelim}) that based on previous work of Blum and Burch~\cite{BB00}, there exists a (computationally inefficient) randomized algorithm with competitive ratio $1+\epsilon$, for any $\epsilon \in (0, 1/4)$. This implies that no lower bound is possible, if randomization is allowed, and gives a strong separation between deterministic and randomized algorithms. 




\paragraph{Memoryless Algorithms.} 
While the bounds of Theorems~\ref{thm:lb} and ~\ref{thm:static_ub} are matching, our algorithm from Theorem~\ref{thm:static_ub} is computationally inefficient since it simulates the MWU algorithm, which in turn, maintains a probability distribution over all $n!$ permutations. This motivates the study of trade-offs between the competitive ratio and computational efficiency. To this end, we propose a memoryless algorithm, called \emph{Move-All-Equally} ($\MAE$), which moves all elements of set $S_t$ towards the beginning of the permutation at the same speed until the first reaches the first position. This is inspired by the Double Coverage algorithm from $k$-server~\cite{CKPV91,CL91}. We believe that $\MAE$ achieves the best guarantees among all memoryless algorithms. We show that this algorithm can not match the deterministic competitive ratio. 

\begin{reptheorem}{thm:mae-lb} 
 The competitive ratio of the Move-All-Equally algorithm is $\Omega(r^2)$.
\end{reptheorem}

Based on Theorem~\ref{thm:mae-lb}, we conjecture that an $O(r)$ guarantee cannot be achieved by a memoryless algorithms. We leave as an open question whether $\MAE$ has a competitive ratio $f(r)$, or a dependence on $n$ is necessary. To this end, we show that the competitive ratio of $\MAE$ is at most $2^{O(\sqrt{\log n \cdot \log r})}$ (see Section~\ref{sec:algorithm} for details).

\paragraph{Dynamic Min-Sum Set Cover.} 
We also consider the dynamic version of online MSSC. Dynamic MSSC is much more general and the techniques developed for the static case do not seem adequately powerful. This is not surprising, since the MWU algorithm is designed to perform well against the best static solution. 
We investigate the performance of the $\MAE$ algorithm. First, we obtain an upper bound on its competitive ratio.

\begin{reptheorem}{thm:mae-dyn-ub}
 The competitive ratio of the Move-All-Equally algorithm for the dynamic online Min-Sum Set Cover problem is $O(r^{3/2} \sqrt{n})$.
\end{reptheorem}

Although this guarantee is not very strong, we show that, rather surprisingly, it is essentially tight and no better guarantees can be shown for this algorithm.

\begin{reptheorem}{thm:mae-dyn-lb}
For any $r \geq 3$, the competitive ratio of the Move-All-Equally algorithm for the dynamic online Min-Sum Set Cover problem is $\Omega(r \sqrt{n})$.
\end{reptheorem}

 
This lower bound is based on a carefully crafted adversarial instance; this construction reveals the rich structure of this problem and suggests that more powerful generic techniques are required in order to achieve any $f(r)$ guarantees. In fact, we conjecture that the lower bound of Theorem~\ref{thm:lb} is the best possible (ignoring constant factors) even for the dynamic problem and that using a work-function based approach such a bound can be obtained. 


\subsection{Further Related Work}
\label{sec:related_work}


\paragraph{Multiple Intents Re-ranking.} This is a generalization of MSSC where for each set $S_t$, there is a \textit{covering requirement} $K(S_t)$, and the cost of covering a set $S_t$ is the position of the $K(S_t)$-th element of $S_t$ in $\pi$. The MSSC problem is the special case where $K(S_t)=1$ for all sets $S_t$. Another notable special case is the Min-Latency Set Cover problem, which corresponds to the other extreme case where $K(S_t) = |S_t|$~\cite{HL05}. Multiple Intents Re-ranking was first studied by Azar et. al.~\cite{AGY09}, who presented a $O(\log r)$-approximation; later $O(1)$-approximation algorithms were obtained~\cite{BGK10,SW11,ISZ14}.
Further generalizations have been considered, such as the Submodular Ranking problem, studied by Azar and Gamzu~\cite{AG11}, which generalizes both Set Cover and MSSC, and the Min-Latency Submodular Cover, studied by Im et.al~\cite{INZ16}. 

\paragraph{Prediction from Expert Advice and Randomized MSSC.} In prediction from  expert advice, there are $N$ experts and expert $i$ incurs a cost $c_i^t$ in each step. A learning algorithm decides which expert $i_t$ to follow (before the cost vector $\pc^t$ is revealed) and incurs a cost of $c^t_{i_t}$. The landmark technique for solving this problem is the multiplicative weights update (MWU - a.k.a. Hedge) algorithm. For an in-depth treatment of MWU, we refer to~\cite{LW94,FS97,AHK12}. 

In the classic online learning setting, there is no cost for moving probability mass
between experts. However, in a breakthrough result, Blum and Burch~\cite{BB00} showed that MWU is $(1+\epsilon)$-competitive against the best expert, even if there is a cost $D$ for moving probability mass between experts. By adapting this result to online MSSC (regarding permutations as experts), we can get an (inefficient) randomized algorithm with competitive ratio $(1+\epsilon)$, for any constant $\epsilon \in (0,1/4)$. A detailed description is deferred to the full version of this paper.


\paragraph{Metrical Task Systems and Online Dynamic MSSC.} 
The online dynamic Min-Sum Set Cover problem belongs to a rich family of problems called Metrical Task Systems (MTS). In MTS, we are given a set of $N$ states and a metric function $d$ specifying the cost of moving between the states. At each step, a task arrives; the cost of serving the task at state $i$ is $c_i$. An algorithm has to choose a state to process the task. If it switches from state $i$ to state $j$ and processes the task there, it incurs a cost $d(i,j)+c_j$. Given an initial state and a sequence of requests, the goal is to process all tasks at minimum cost.

It is easy to see that the online version of dynamic MSSC problem is a MTS, where the states correspond to permutations, thus $N = n!$, and the distance between two states is their Kendall tau distance. For a request set $S_t$, the request is a vector specifying the cost $\pi(S_t)$ for every permutation $\pi$. 

Several other fundamental online problems (e.g., $k$-server, convex body chasing) are MTS. 
Although there has been a lot of work on understanding the structure of MTS problems~\cite{BLS92,CL19,BCLL19,KP95,Sit14,Sel20,AGGT20,BEK17,BEKN18}, there is not a good grasp on how the structure relates to the hardness of MTS problems. Getting a better understanding on this area is a long-term goal, since it would lead to a systematic framework for solving online problems.


\subsection{Preliminaries}
\label{sec:prelim}

\paragraph{Notation.} Given a request sequence $\cS = \lbrace S_1, \dotsc, S_m \rbrace $, for any algorithm $\ALG$ we denote $\cost(\ALG(\cS))$ or simply $\cost(\ALG)$ the total cost of $\ALG$ on $\cS$. Similarly we denote $\acost(\ALG)$ the total access cost of $\ALG$ and $\mcost(\ALG)$ the total movement cost of $\ALG$. For a particular time step $t$, an algorithm using permutation $\pi_t$ incurs an access cost $\acost(\ALG(t)) = \pi_t(S_t)$. We denote by $\pi_t[j]$ the position of element $j \in U$ in the permutation $\pi_t$.

\paragraph{Online Min-Sum Set Cover.} We focus on the $r$-uniform case, i.e., when all sets $S_t$ have size $r \ll n$. This is essentially without loss of generality, because we can always let $r = \max_{t}|S_t|$ and add the $r - |S_t|$ last unrequested elements in the algorithm's permutation to any set $S_t$ with $|S_t| < r$. Assuming that $r \leq n/2$, this modification cannot increase the optimal cost and cannot decrease the online cost by more than a factor of $2$.

\end{onlymain}

\begin{onlymain}

\section{Lower Bounds on the Deterministic Competitive Ratio}
\label{sec:lower_bounds}

We start with a lower bound on the deterministic competitive ratio of online MSSC.

\smallskip
\noindent {\hf Theorem~\ref{thm:lb}.} {\em Any deterministic online algorithm for the Online Min-Sum Set Cover problem has competitive ratio at least $(r+1)(1 - \frac{r}{n+1})$. }
\smallskip

For the proof, we employ an averaging argument, similar to those in lower bounds for list update and $k$-server~\cite{MMS90,ST85}. In each step, the adversary requests the last $r$ elements in the algorithm's permutation. Hence, the algorithm's cost is at least $(n-r+1)$. Using a counting argument, we show that for any fixed set $S_t$ of size $r$ and any $i \in [n-r+1]$, the number of permutations $\pi$ with access cost $\pi(S_t) = i$ is $\binom{n-i}{r-1}r!(n-r)!$\,. Summing up over all permutations and dividing by $n!$, we get that the average access cost for $S_t$ is $\binom{n+1}{r+1}\frac{r!(n-r)!}{n!} = \frac{n+1}{r+1}$. Therefore, the cost of the optimal permutation is a most $\frac{(n+1)}{r+1}$, and the competitive ratio of the algorithm at least $\frac{(n-r+1)(r+1)}{n+1}$. The details can be found in Appendix~\ref{app:lower_bounds}. 

\paragraph{Lower Bounds for Generalizations of Move-to-Front.} %
For list update, where $r=1$, simple algorithms like Move-to-Front (MTF) and Frequency Count achieve an optimal competitive ratio. We next briefly describe several such generalizations of them and show that their competitive ratio depends on $n$, even for $r = 2$. Missing details can be found in Appendix~\ref{app:lower_bounds}.



\noindent
\textbf{$\text{MTF}_{\text{first}}$}: Move to the first position (of the algorithm's permutation) the element of $S_t$ appearing first in $\pi_{t}$\,. This algorithm is $\Omega(n)$-competitive when each request $S_t$ consists of the last two elements in $\pi_{t}$. Then, the last element in the algorithm's permutation never changes and is used by the optimal permutation to serve the entire sequence! 

\noindent
\textbf{$\text{MTF}_{\text{last}}$:} Move to the first position the element of $S_t$ appearing last in $\pi_{t}$\,.

\noindent
\textbf{$\text{MTF}_{\text{all}}$}: Move to the first $r$ positions all elements of $S_t$ (in the same order as in $\pi_{t}$). 

\noindent
\textbf{$\text{MTF}_{\text{random}}$}: Move to the first position an element of $S_t$ selected uniformly at random. 

$\text{MTF}_{\text{last}}$, $\text{MTF}_{\text{all}}$ and $\text{MTF}_{\text{random}}$ have a competitive ratio of $\Omega(n)$ when each request $S_t$ consists of a fixed element $e$ (always the same) and the last element in $\pi_{t}$, because they all incur an (expected for $\text{MTF}_{\text{random}}$) moving cost of $\Theta(n)$ per request. 

The algorithms seen so far fail for the opposite reasons: $\text{MTF}_{\text{first}}$ cares only about the first element and ignores completely the second, and the others are very aggressive on using the second ($r$th) element. A natural attempt to balance those two extremes is the following.

\noindent
\textbf{$\text{MTF}_{\text{relative}}$}: Let $i$ be the position of the first element of $S_t$ in $\pi_{t}$. Move to the first positions of the algorithm's permutation (keeping their relative order) all elements of $S_t$ appearing up to the position $c\cdot i$ in $\pi_{t}$, for some constant $c$. The bad instance for this algorithm is when each request $S_t$ consists of the last element and the element at position $\lfloor n/c \rfloor - 1$ in $\pi_{t}$; it never uses the $n$th element and the adversary serves all requests with it at a cost of 1. 

All generalizations of MTF above are memoryless and they all fail to identify the element by which optimal serves $S_t$. The following algorithm tries to circumvent this by keeping memory  and in particular the frequencies of reqested elements.

\noindent
\textbf{$\text{MTF}_{\text{count}}$}: Move to the first position the most frequent element of $S_t$ (i.e., the element of $S_t$ appearing in most requested sets so far). 



This algorithm behaves better in easy instances, however with some more work we can show a lower bound of $\Omega(\sqrt{n})$ on its competitive ratio. 
Let $e_1, \ldots, e_n$ be the elements indexed according to the initial permutation $\pi_0$ and $b = \sqrt{n}$. The request sequence proceeds in $m/n$ phases of length $n$ each. The first $n - b$ requests of each phase are $\{ e_1, e_2 \}, \{ e_1, e_3 \}, \ldots, \{ e_1, e_{n-b} \}$, and the last $b$ requests consist of $e_{n-b+i}$ and the element at position $n-b$ at the current algorithm's permutation, for $i = 1, \ldots, b$. An optimal solution can cover all the requests by the elements $e_1, e_{n-b+1}, \ldots, e_{n}$ with total cost $\Theta(m+n\sqrt{n})$. The elements $e_{n-b+1}, \ldots, e_{n}$ are never upgraded by $\text{MTF}_{\text{count}}$. Hence, the algorithm's cost is $\Theta(m \sqrt{n})$.

\end{onlymain}

\begin{onlyapp}
\section{Deferred Proofs of Section \ref{sec:lower_bounds}}\label{app:lower_bounds}

In this section, we include the proofs deferred from section~\ref{sec:lower_bounds}. First, we prove the general lower bound for the competitive ratio of any deterministic online algorithm for the Online Min Sum Set Cover problem.

\noindent {\bf Theorem~\ref{thm:lb}.} {\em Any deterministic online algorithm for the Online Min-Sum Set Cover problem has competitive ratio at least $(r+1) \cdot (1 - \frac{r}{n+1})$. }
\begin{proof}
Let $\ALG$ be any online algorithm. The adversary creates a request sequence in which every request is composed by the $r$ last elements of the current permutation of $\ALG$.
At each round $t$, $\ALG$ incurs an accessing cost of at least $n-(r-1)$. Thus for the whole request sequence of $m$ requests, $\cost(\ALG) \geq m \cdot (n-r+1)$.

The non-trivial part of the proof is to estimate the cost of the optimal static permutation. We will count total cost of all $n!$ static permutations and use the average cost as an upper bound on the optimal cost. For any request set $S_t$, 
we intend to find the total cost of the $n!$ permutations for $S_t$. To do this, we will count the number permutations that have access cost of $i$, for every $1 \leq i \leq n-(r-1)$. For such counting, there are two things to consider. First, in how many different ways we can choose the positions where the $r$ elements of $S_t$ are located and second how many different orderings on elements of $S_t$ and of $U \setminus S_t$ exist. We address those two separately.

 
 \begin{enumerate}[(i)]\itemsep.4em
 \item For a permutation $\pi$ that incurs an access cost of $i$, it follows that, from the elements in $S_t$, the first one in $\pi$ is located in position $i$ and no other element from the set is located in positions $j < i$. The other $r-1$ elements of $S_t$ are located among the last $n-i$ positions of $\pi$. There are $\binom{n-i}{r-1} $ different ways to choose the locations of those elements. 

 \item Once the positions of elements of $S_t$ have been fixed, there are $r!$ different ways to assign the elements in those positions, equal to the number of permutations on $r$ elements. Similarly, there are $(n-r)!$ different ways to assign elements of $U \setminus S_t$ to the $n-r$ remaining positions.
 \end{enumerate}

\noindent 
Gathering the above, we conclude that the number of permutations that incur access cost exactly $i$ for a fixed request $S_t$ is $$ \binom{n-i}{r-1}  r! (n-r)!.$$ 
The latter implies two basic facts:
\begin{enumerate}
    \item $\sum_{i=1}^{n-r+1}\binom{n-i}{r-1} r! (n-r)! = n!$ (since each permutation has a specific cost for request $S_t$).
    
    \item The total sum of access costs for fixed request of size $r$ is:
 \begingroup
 \allowdisplaybreaks
 \begin{align*}
 \text{Total-Access-Cost } &=  \sum_{i=1}^{n-r+1} i\cdot \binom{n-i}{r-1}  r! (n-r)!\\
 &= \sum_{i=1}^{n-r+1} \sum_{j = i}^{n-r+1} \binom{n-j}{r-1}r! (n-r)! & \text{By reordering the terms}\\
  &= \sum_{i=1}^{n-r+1} 
  \underbrace{\sum_{j = i}^{n-r+1} \binom{n-j}{r-1}r! (n-r)!}_{\text{permutations with access cost} \geq i }\\
  &= \sum_{i=1}^{n-r+1} \binom{n- i + 1}{r} r!(n-r)!\\
 &= r!(n-r)!\binom{n+1}{r+1}
 \end{align*}
 \endgroup
\end{enumerate}
\noindent where the last equality follows by the fact that $\sum_{i=1}^{n-r+1}\binom{n-i}{r-1} r! (n-r)! = n!$ (see number $1$ above) with $n\leftarrow n+1$ and $r\leftarrow r+1$. Hence for a request sequence of length $m$, we get that

$$
\cost(\OPT) \leq m \cdot \frac{r!(n-r)!}{n!}\binom{n+1}{r+1}
= m \cdot \frac{n+1}{r+1}.
$$
\noindent
We conclude that for any deterministic algorithm $\ALG$, we have:

$$\frac{\cost(\ALG)}{\cost(\OPT)} \geq  \frac{m \cdot (n-r+1)}{m \cdot \frac{n+1}{r+1}} =  (r+1) \cdot \bigg( 1 - \frac{r}{n+1}  \bigg) = r+1 - \frac{r(r+1)}{n+1}. \qedhere$$

\end{proof}
\medskip

\paragraph{Lower bounds for various algorithms.} Next, we prove the lower bounds for the competitive ratio of the several online algorithms generalizing the MTF algorithm. For all lower bounds we use request sets of size $r=2$. Recall that $\pi_{t}(j)$ the $j$th element of the permutation $\pi_t$ for $1\leq j \leq n$ 

\paragraph{$\text{MTF}_{\text{first}}$}: Move to the first position (of the algorithm's permutation) the element of $S_t$ appearing first in $\pi_{t}$\,.\\
\noindent   
\textbf{Lower bound}:
Let the request sequence $S_1,S_2, \ldots S_m$, in which
$S_t$ contains the last two elements of $\text{MTF}_{\text{first}}$' s permutation at round $t-1$. Formally, $S_t = \lbrace \pi_{t}(n-1), \pi_t(n) \rbrace$. $\text{MTF}_{\text{first}}$ moves the first element of the request in the first position and in every round the last element in $\text{MTF}_{\text{first}}$' s permutation remains the same ($\pi_{t}(n) = \pi_0(n)$). As a result, $\text{MTF}_{\text{first}}$ pays $\Omega(n)$ in each request, whereas $\OPT$ has the element $\pi_{0}[n]$ in the first position and just pays $1$ per request.

\paragraph{$\text{MTF}_{\text{last}}$}: Move to the first position the element of $S_t$ appearing last in $\pi_{t}$\,. \\
\noindent
\textbf{Lower bound}:
Let the request sequence $S_1,S_2, \ldots S_m$, in which each set $S_t$ always contains the last element of $\pi_t-1$ and the fixed element $1$. Clearly $\text{MTF}_{\text{last}}$ pays $\Omega(m \cdot n)$, while $cost(\OPT)=m$ by having element $1$ in the first place.

\paragraph{$\text{MTF}_{\text{all}}$}: Move to the first $r$ positions all elements of $S_t$ (in the same order as in $\pi_{t}$) .\\
\noindent
\textbf{Lower bound}: The same as previous.

\paragraph{$\text{MTF}_{\text{random}}$}: Move to the first position an element of $S_t$ selected uniformly at random. \\
\noindent
\textbf{Lower bound}: Let the request sequence $S_1,S_2, \ldots S_m$, in which each set $S_t$ always contains an element selected uniformly at random from $\pi_{t}$ and the fixed element $1$. Therefore elements in the last $n/2$ positions of $\pi_{t}$ have probability $1/2$ to be chosen.
At each round $t$, $\text{MTF}_{\text{random}}$ moves 
with probability $1/2$ to the first position of the list, the element of $S_t$ that was randomly selected. Thus at each round $t$, 
$\text{MTF}_{\text{random}}$ pays with probability $1/4$, moving cost at least $n/2$, meaning that the overall expected cost is at least $m\cdot n/8$. As a result, the ratio is $\Omega(n)$ since $\OPT$ pays $m$ by keeping element $1$ in the first position.

\paragraph{$\text{MTF}_{\text{relative}}$}: Let $i$ be the position of the first element of $S_t$ in $\pi_{t}$. Move to the first positions of the algorithm's permutation (keeping their relative order) all elements of $S_t$ appearing up to the position $c\cdot i$ in $\pi_{t}$, for some constant $c$. \\
\noindent
\textbf{Lower bound}: Let the request sequence $S_1, \ldots S_n$ in which $S_t$ contains the $\lfloor \frac{n-1}{c} \rfloor$th and the $n$th element of the list at round $t-1$. $\text{MTF}_{\text{relative}}$ never moves the last element and thus $\pi_n(0)$ belongs in all sets $S_t$. As in first case, this provides an $\Omega(n)$ ratio.

\paragraph{$\text{MTF}_{\text{count}}$}: Move to the first position the most frequent element of $S_t$ (i.e., the element of $S_t$ appearing in most requested sets so far).\\
\textbf{Lower bound}: The request sequence $S_1,\ldots,S_m$ is specifically constructed so that
$\text{MTF}_{\text{count}}$ never moves the last $b$ elements of the initial permutation $\pi_0$.
\begin{center}
    $\pi_0=[ \underbrace{x_1, \ldots, x_{n-b}}_{n-b  \text{ elements}}, \underbrace{x_{n-b+1}, \ldots ,x_n}_{b \text{ elements}} ]$
\end{center}
The constructed request sequence $S_1,\ldots,S_m$ will be composed by $m/n$ sequences of length $n$. Each piece of length $n$ will have the following form:
\medskip
\begin{enumerate}
     \item $n-b$ requests $\{ x_1,x_2 \}, \{x_1, x_3 \}, \ldots \{ x_1, x_{n-b} \}$ (all requests contain $x_1$).
    \item $\{
    \text{element in position } n-b , x_{n-b+i}\}$ for $i=1$ to $b$ (additional $b$ requests). 
\end{enumerate}
\medskip
After the requests of type 1, the list is the same as the initial one, since $x_1$ has frequency $n-b$ and $x_2, \ldots x_{n-b}$ have frequency $1$. Now consider the requests of type $2$. $\text{MTF}_{\text{count}}$ moves always to the front the element which is in position $n-b$, since has already been involved in a type $1$ request and has greater frequency. Therefore, $\text{MTF}_{\text{count}}$ pays $b\cdot(n-b)$. Repeating the same request sequence $m/n$ times, we can construct a sequence of length $m$. In this request sequence, $\OPT$ keeps the element $x_1$ in the first position and the elements $\{x_{n-b+1},\ldots,x_n\}$ in the next $b$ positions. Thus, $\OPT$ pays $(n-b)\cdot m/n$ for the requests of type~$1$ and 
$b^2\cdot m/n$ for the requests of type~$2$.
$\text{MTF}_{\text{count}}$ pays $(n-b)\cdot m/n$ for the requests of type~$1$ (same as $\OPT$), but 
$(n-b)\cdot b\cdot m/n$ for the requests of type~2.
Setting $b = \sqrt{n}$, we get a $\Omega(\sqrt{n})$ lower bound for the competitive ratio of $\text{MTF}_{\text{count}}$, which concludes this section.
\end{onlyapp}

\begin{onlymain}
\section{An Algorithm with Asymptotically Optimal Competitive Ratio}
\label{sec:static-ub}

Next, we present algorithm $\LRA$ (Algorithm~\ref{alg:lazy-det-MWU}) and analyze its competitive ratio. The following is the main result of this section: 

\medskip
\noindent  {\hf Theorem~\ref{thm:static_ub}.} {\em Deterministic online algorithm $\LRA$, presented in Algorithm~\ref{alg:lazy-det-MWU}, is $(5r+2)$-competitive for the static version of the Online Min-Sum Set Cover problem. }
\medskip

The remainder of this section is devoted to the proof of Theorem~\ref{thm:static_ub}. At a high-level, our approach is summarized by the following three steps:

\begin{enumerate}
\item We use as black-box the multiplicative weights update ($\MWU$) algorithm with learning rate $1/n^3$. Using standard results from learning theory, we show that its expected access cost is within a factor $5/4$ of $\OPT$, i.e.,
    $\acost(\MWU) \leq \frac{5}{4} \cost (\OPT)$
    (Section~\ref{sec:mwu}).
    
\item We develop an online rounding scheme, which turns any randomized algorithm $\mathcal{A}$ into a deterministic one, denoted $\deterministic(\cA)$, with access cost at most  $ 2r \cdot \expect [\acost(\mathcal{A})]$ (Section~\ref{sec:rounding}). However, our rounding scheme does not provide any immediate guarantee on the moving cost of $\deterministic(\cA)$.
    
    \item $\LRA$ is a lazy version of $\deterministic(\MWU)$ that updates its permutation only if $\MWU$'s distribution has changed a lot. A \textit{phase} corresponds to a time interval that $\LRA$ does not change its permutation. We show that during a phase:
   
    \begin{enumerate}[(i)]
        \item  The upper bound on the access cost increases, compared to $\deterministic(\MWU)$, by a factor of at most 2, i.e., $\acost(\LRA) \leq 4r \cdot \expect[\acost(\MWU)]$ (Lemma~\ref{lem:alg-acc_cost}).
        
        \item The (expected) access cost of $\MWU$ is at least $n^2$.  Since our algorithm moves only once per phase, its movement cost is at most $n^2$. Thus we get that (Lemma~\ref{lem:moving_cost}): $$\mcost(\LRA) \leq \expect[\acost(\MWU)] \, .$$
    \end{enumerate}
For the upper bound on the moving cost above, we relate how much $\MWU$'s distribution changes during a phase, in terms of the total variation distance, to the cost of $\MWU$ and the cost of our algorithm. 
\end{enumerate}

Based on the above properties, we compare the access and the moving cost of $\LRA$ against the access cost of $\MWU$ and to get the desired competitive ratio:
$$\cost(\LRA) \leq (4r + 1)\expect[\acost(\MWU)] \leq (5r + 2)\cost(\OPT) \, .$$


Throughout this section we denote by $d_{\tv}(\delta,\delta')$ the total variation distance of two discrete probability distributions $\delta, \delta': [N] \rightarrow [0,1] $, defined as $d_{\tv}(\delta,\delta') = \sum_{i=1}^{N} \max \lbrace 0, \delta(i) - \delta'(i) \rbrace$.

\subsection{Using Multiplicative Weights Update in Online Min-Sum Set Cover}
\label{sec:mwu}
In this section, we explain how the well-known $\MWU$ algorithm~\cite{LW94,FS97} is used in our context. 

\paragraph{The MWU Algorithm.} Given $n!$ permutations of elements of $U$, the algorithm has a parameter $\beta \in [0,1]$ and a weight $w_{\pi}$ for each permutation $\pi \in [n!]$, initialized at 1. At each time step the algorithm chooses a permutation according to distribution  $\mathrm{P}^t_\pi = w_\pi^t / (\sum_{\pi \in [n!]}w_\pi^t)$. When request $S_t$ arrives, $\MWU$ incurs an expected access cost of $$\expect[\acost(\MWU (t) )]= \sum_{\pi \in [n!]} \mathrm{P}_{\pi}^t \cdot \pi(S_t)$$
and updates its weights $w^{t+1}_\pi = w^{t}_\pi \cdot \beta^{\pi(S_t)}$, where $\beta = e^{-1/n^3}$; this is the so-called \textit{learning rate} of our algorithm. Later on, we discuss the reasons behind choosing this value.

\paragraph{On the Access Cost of MWU.} Using standard results from learning theory~\cite{LW94,FS97} and adapting them to our setting, we get that the (expected) access cost of $\MWU$ is bounded by $\cost(\OPT)$. This is formally stated in Lemma~\ref{lem:mwu-scaled-final} (and is proven in Appendix~\ref{sec:app-static-ub}).

\begin{lemma}
\label{lem:mwu-scaled-final}
For any request sequence $\sigma = (S_1,\ldots,S_m)$ we have that 
$$\expect[\acost(\MWU)] \leq \frac{5}{4} \cdot \cost(\OPT) + 2 n^4 \ln n \, .$$
\end{lemma}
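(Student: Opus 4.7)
The plan is to carry out the textbook potential-function analysis of MWU (in the style of Littlestone--Warmuth and Freund--Schapire) using $W^t = \sum_{\pi \in [n!]} w_\pi^t$ as the potential, and then specialize the resulting inequality to the learning rate $\beta = e^{-1/n^3}$ to extract the explicit constants $5/4$ and $2 n^4 \ln n$. The only non-standard ingredient is that the per-step losses $\pi(S_t)$ live in $[1,n]$ rather than $[0,1]$, so one has to do the $[0,n]$ version of the standard scaling argument.

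\emph{Step 1 (upper bound on $W^{m+1}$).} I would start from the identity $W^{t+1} = W^t \sum_\pi \mathrm{P}^t_\pi \beta^{\pi(S_t)}$ and apply the convexity estimate $\beta^x = (\beta^n)^{x/n} \leq 1 - (1-\beta^n) x/n$, valid for all $x \in [0,n]$. Since $\sum_\pi \mathrm{P}^t_\pi \pi(S_t) = \expect[\acost(\MWU(t))]$, this gives
\[
W^{t+1} \;\leq\; W^t \Bigl(1 - \tfrac{1-\beta^n}{n}\, \expect[\acost(\MWU(t))]\Bigr) \;\leq\; W^t \exp\!\Bigl(-\tfrac{1-\beta^n}{n}\, \expect[\acost(\MWU(t))]\Bigr),
\]
and telescoping from $W^1 = n!$ yields $W^{m+1} \leq n!\cdot\exp\!\bigl(-\tfrac{1-\beta^n}{n}\,\expect[\acost(\MWU)]\bigr)$.

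\emph{Step 2 (lower bound on $W^{m+1}$).} Picking the optimal static permutation $\pi^\ast$ gives $W^{m+1} \geq w_{\pi^\ast}^{m+1} = \beta^{\sum_t \pi^\ast(S_t)} = \beta^{\cost(\OPT)}$.

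\emph{Step 3 (combine and plug in).} Taking logarithms and rearranging the two bounds produces the textbook-looking inequality
\[
\expect[\acost(\MWU)] \;\leq\; \frac{n \ln(1/\beta)}{1-\beta^n}\cdot \cost(\OPT) \;+\; \frac{n \ln(n!)}{1-\beta^n}.
\]
With $\beta = e^{-1/n^3}$, so that $\beta^n = e^{-1/n^2}$, the plan is to use Taylor bounds on $1-e^{-1/n^2}$ at two different precisions. The loose bound $1 - e^{-1/n^2} \geq 1/(2n^2)$ together with $\ln(n!) \leq n \ln n$ absorbs the additive term into $2 n^4 \ln n$. For the multiplicative coefficient I would use the sharper bound $1 - e^{-1/n^2} \geq (1/n^2)(1 - 1/(2n^2))$, which collapses $\frac{n \ln(1/\beta)}{1-\beta^n} = \frac{1/n^2}{1-e^{-1/n^2}}$ to at most $\tfrac{1}{1-1/(2n^2)} \leq 5/4$ for $n \geq 2$ (the case $n = 1$ being trivial).

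\emph{Where the difficulty sits.} The conceptual content of the argument is completely routine; the whole exercise is about tracking constants. The only place that calls for care is Step~3: the target constant $5/4$ leaves essentially no slack beyond the asymptotic $\frac{1/n^2}{1-e^{-1/n^2}} = 1 + O(1/n^2)$, so one has to be precise with the Taylor estimate of $1-e^{-1/n^2}$ and not waste anything. The secondary wrinkle is the $[0,n]$-scaling of the losses, which is cleanly absorbed by working with $\beta^n$ as the effective base rather than invoking the usual $[0,1]$-normalized statement of the MWU regret lemma.
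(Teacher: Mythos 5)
Your proposal is correct and follows essentially the same route as the paper: the paper simply cites the standard Littlestone--Warmuth/Freund--Schapire regret bound $\expect[\acost(\MWU)] \leq \frac{\ln(1/\beta)}{1-\beta}\cost(\OPT) + \frac{\ln(n!)}{1-\beta}$ and then plugs in $\beta = e^{-1/n^3}$ with the estimates $\frac{\ln(1/\beta)}{1-\beta}\leq 5/4$ and $1-\beta\geq 1/(2n^3)$, which is exactly your Step~3 computation. If anything, your derivation is the more careful one: by writing $\beta^{x}=(\beta^{n})^{x/n}$ and using the chord inequality on $[0,1]$ you correctly handle the fact that the losses $\pi(S_t)$ lie in $[1,n]$ rather than $[0,1]$, a scaling issue the paper glosses over when it invokes the textbook bound in its unnormalized form (your resulting coefficient $\frac{n\ln(1/\beta)}{1-\beta^{n}}$ is slightly weaker than the paper's $\frac{\ln(1/\beta)}{1-\beta}$, but still comfortably at most $5/4$, and the additive term still fits under $2n^4\ln n$).
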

\end{onlymain}

\begin{onlyapp}
\section{Deferred Proofs of Section~\ref{sec:static-ub}}
\label{sec:app-static-ub}
In this section we include the proofs deferred from Section~\ref{sec:static-ub}.

\subsection{Proofs Related to MWU Algorithm}
\label{app:mwu}
Here we include omitted proofs related to MWU algorithm. 

\paragraph{Access Cost of MWU.} We first show that the MWU is $5/4$-competitive for access costs. 

\noindent {\bf Lemma~\ref{lem:mwu-scaled-final}. } {\em For any request sequence $\sigma = (S_1,\ldots,S_m)$ we have that 
$$\expect[\acost(\MWU)] \leq \frac{5}{4} \cdot \cost(\OPT) + 2 n^4 \ln n.$$}
\begin{proof}

By the standard results in learning theory~\cite{LW94,FS97}, we know that for any sequence $\sigma = (S_1,\ldots,S_m)$, the MWU algorithm satisfies

\begin{equation*}
\label{eq:mwu-bound}
\sum_{t=1}^m \sum_{\pi \in [n!]}\mathrm{P}^t_\pi \cdot \acost (\pi,S_t)  \leq \frac{\ln(1/\beta)}{1-\beta} \cdot \min_{\pi \in [n!]}\sum_{t=1}^m \pi(S_t)+ \frac{\ln (n!)}{1 - \beta}.
\end{equation*}
where $\beta = e^{-1/n^3}$. Thus, $3/4 < \beta < 1$, for any $n\geq 2 $. Using standard inequalities we get that $\frac{\ln (1/\beta)}{1 - \beta} \leq 5/4 $ and $1-\beta \geq 1/2n^3$ for any $n \geq 2$. We finally get that,
$$
\expect[\acost(\MWU)] \leq \frac{5}{4} \cdot \cost(\OPT) + 2 n^4 \ln n.$$
\end{proof}
\end{onlyapp}

\begin{onlymain}
\paragraph{On the Distribution of MWU.} We now relate the expected access cost of the MWU algorithm to the total variation distance among $\MWU$'s distributions. More precisely, we show that if the total variation distance between $\MWU$'s distributions at times $t_1$ and $t_2$ is large, then $\MWU$ has incurred a sufficiently large access cost. The proof of the following makes a careful use of $\MWU$'s properties and is deferred to Appendix~\ref{sec:app-static-ub}.

\begin{lemma}
\label{lem:switch_cost}
Let $\mathrm{P}^t$ be the probability distribution of the MWU algorithm at time $t$. Then, 

$$    \text{d}_{\tv}(\mathrm{P}^t,\mathrm{P}^{t+1}) \leq \frac{1}{n^3} \cdot \expect[\acost(\MWU(t))]. $$
\end{lemma}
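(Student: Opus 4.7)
\textbf{Proof plan for Lemma~\ref{lem:switch_cost}.} The plan is to compare $\mathrm{P}^{t+1}$ to $\mathrm{P}^t$ directly via the MWU update rule and bound the one-sided mass loss elementwise. Writing $Z_t := \sum_{\pi'} \mathrm{P}^t_{\pi'}\cdot e^{-\pi'(S_t)/n^3}$ for the normalization constant, the update gives $\mathrm{P}^{t+1}_\pi = \mathrm{P}^t_\pi\cdot e^{-\pi(S_t)/n^3}/Z_t$, so everything can be expressed in terms of $\mathrm{P}^t$ and the scalar $Z_t$.

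First I would observe that since every access cost $\pi(S_t)$ is nonnegative, every term $e^{-\pi(S_t)/n^3}$ is at most $1$, hence $Z_t \leq 1$. Plugging this into the update rule yields the pointwise lower bound $\mathrm{P}^{t+1}_\pi \geq \mathrm{P}^t_\pi\cdot e^{-\pi(S_t)/n^3}$ for every permutation $\pi$. Rearranging,
\[
\mathrm{P}^t_\pi - \mathrm{P}^{t+1}_\pi \;\leq\; \mathrm{P}^t_\pi\bigl(1 - e^{-\pi(S_t)/n^3}\bigr) \;\leq\; \mathrm{P}^t_\pi \cdot \frac{\pi(S_t)}{n^3},
\]
where the last step uses the elementary inequality $1 - e^{-x} \leq x$ valid for $x\geq 0$. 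Note that the right-hand side is already nonnegative, so this bound also controls $\max\{0,\mathrm{P}^t_\pi - \mathrm{P}^{t+1}_\pi\}$.

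Finally, I would sum this pointwise bound over all $n!$ permutations to obtain
\[
d_{\tv}(\mathrm{P}^t,\mathrm{P}^{t+1}) \;=\; \sum_\pi \max\{0,\mathrm{P}^t_\pi - \mathrm{P}^{t+1}_\pi\} \;\leq\; \frac{1}{n^3}\sum_\pi \mathrm{P}^t_\pi\cdot\pi(S_t) \;=\; \frac{1}{n^3}\cdot\expect[\acost(\MWU(t))],
\]
which is exactly the claimed inequality. There is no real obstacle here: the only subtlety is recognizing that the choice of learning rate $\beta = e^{-1/n^3}$ together with $Z_t \leq 1$ (a free consequence of having nonnegative losses) makes the update contract each coordinate's probability by at most a multiplicative factor $1 - \pi(S_t)/n^3$, so the total one-sided mass loss is governed by the MWU algorithm's own expected access cost at step $t$.
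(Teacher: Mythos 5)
Your proof is correct and is essentially the paper's argument in normalized form: your observation that $Z_t \leq 1$ is exactly the paper's $W^{t+1}\leq W^t$ (since $Z_t = W^{t+1}/W^t$), and both proofs then conclude via $1-e^{-x}\leq x$ and summing over all permutations. No meaningful difference in approach.
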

\end{onlymain}

\begin{onlyapp}
\paragraph{Total Variation Distance.} We now proceed on the proofs of lemmas relating the total variation distance of the distribution maintained by the MWU algorithm to its access cost.

{\bf Lemma~\ref{lem:switch_cost}. }{\em Let $\mathrm{P}^t$ be the probability distribution of MWU algorithm at time $t$. Then, the probability distribution $\mathrm{P}^{t+1}$ of the algorithm satisfies

$$    \text{d}_{\tv}(\mathrm{P}^t,\mathrm{P}^{t+1}) \leq \frac{1}{n^3} \cdot \expect[\acost(\MWU(t))]. $$
 }

\begin{proof}
To simplify notation, let $W^t = \sum_{\pi \in [n!]} w_\pi^t$. We remind that by the definition of $\MWU$, $w_{\pi}^{t+1} = w_{\pi}^{t} \cdot e^{-\pi(S_t)/n^3}$. Moreover, by the definition of total variation distance,

\begingroup
\allowdisplaybreaks
\begin{align*}
 d_{\tv}(\mathrm{P}^t,\mathrm{P}^{t+1}) 
 &= \sum_{\pi: \mathrm{P}_\pi^t > \mathrm{P}_\pi^{t+1}}\mathrm{P}_\pi^t - \mathrm{P}_\pi^{t+1} = \sum_{\pi: \mathrm{P}_\pi^t > \mathrm{P}_\pi^{t+1}} \bigg(  \frac{w_\pi^t}{W^t} - \frac{w_\pi^{t+1}}{W^{t+1}} \bigg)\\ 
 &\leq \sum_{\pi: \mathrm{P}_\pi^t > \mathrm{P}_\pi^{t+1}} \bigg(  \frac{w_\pi^t}{W^t} - \frac{w_\pi^{t+1}}{W^t} \bigg)   \\[15pt]
 &\leq \sum_{\pi \in [n!]} \bigg(  \frac{w_\pi^t}{W^t} - \frac{w_\pi^{t+1}}{W^t} \bigg)= \sum_{\pi \in [n!]} \frac{w_\pi^t}{W^t} \cdot \bigg( 1  - e^{-\pi(S_t)/n^3}  \bigg) \\[15pt]
 & =  \sum_{\pi \in [n!]} \mathrm{P}_\pi^t \cdot \bigg( 1  -  e^{-(\pi(S_t)/n^3)} \bigg) \leq \sum_{\pi \in [n!]} \mathrm{P}_\pi^t \cdot \frac{\pi(S_t)}{n^3}    \\[15pt]
 &= \frac{1}{n^3} \cdot \expect[\acost(\MWU(t))] \, .       
\end{align*}
\endgroup

\noindent In the first inequality we used that $W^{t+1} \leq W^t$. In the second inequality we used that for all $\pi$ we have that $w_\pi^{t+1} \leq w_\pi^t$ which implies that $\frac{w_\pi^t - w_\pi^{t+1}}{W^t} \geq 0$. In the last inequality we used that $1-e^{x} \leq -x$, for any $x$. \qedhere
\end{proof}
\end{onlyapp}

\begin{onlymain}
The following is useful for the analysis of $\LRA$. Its proof follows from Lemma~\ref{lem:switch_cost} and the the triangle inequality and is deferred to Appendix~\ref{sec:app-static-ub}.

\begin{lemma}
\label{lem:mwu-cost-dtv}
Let $t_1$ and $t_2$ two different time steps such that $d_{\tv}(\mathrm{P}^{t_1},\mathrm{P}^{t_2}) \geq 1/n$. Then, 
$$\sum_{t=t_1}^{t_2-1} \expect[\acost(\MWU(t))] \geq n^2 \, .$$
\end{lemma}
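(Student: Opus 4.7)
The plan is to combine the one-step bound from Lemma~\ref{lem:switch_cost} with the triangle inequality for the total variation distance. First I would note that $d_{\tv}$ is a metric on the space of probability distributions (this is a standard fact, immediate from viewing $d_{\tv}$ as half the $\ell_1$ distance). In particular, telescoping along the intermediate distributions $\mathrm{P}^{t_1}, \mathrm{P}^{t_1+1}, \ldots, \mathrm{P}^{t_2}$ of $\MWU$ gives the bound
\[
d_{\tv}(\mathrm{P}^{t_1},\mathrm{P}^{t_2}) \;\leq\; \sum_{t=t_1}^{t_2-1} d_{\tv}(\mathrm{P}^{t},\mathrm{P}^{t+1}).
\]

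Next I would apply Lemma~\ref{lem:switch_cost} to each summand on the right, obtaining
\[
d_{\tv}(\mathrm{P}^{t_1},\mathrm{P}^{t_2}) \;\leq\; \frac{1}{n^3} \sum_{t=t_1}^{t_2-1} \expect[\acost(\MWU(t))].
\]
Finally, plugging in the hypothesis $d_{\tv}(\mathrm{P}^{t_1},\mathrm{P}^{t_2}) \geq 1/n$ and rearranging yields $\sum_{t=t_1}^{t_2-1} \expect[\acost(\MWU(t))] \geq n^2$, as desired.

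There is no real obstacle here: the statement is a direct corollary of Lemma~\ref{lem:switch_cost} via the triangle inequality. The only point that warrants a brief comment in the write-up is the validity of the triangle inequality for $d_{\tv}$ as defined in this paper (using $\sum_i \max\{0, \delta(i) - \delta'(i)\}$), which follows since this expression equals $\tfrac{1}{2}\|\delta - \delta'\|_1$ and $\|\cdot\|_1$ is a norm.
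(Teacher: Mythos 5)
Your proposal is correct and follows essentially the same route as the paper's proof: sum the one-step bound of Lemma~\ref{lem:switch_cost} over $t \in [t_1, t_2)$, apply the triangle inequality for $d_{\tv}$ to lower-bound the sum by $d_{\tv}(\mathrm{P}^{t_1},\mathrm{P}^{t_2})$, and rearrange. Your remark justifying the triangle inequality for the paper's one-sided definition of $d_{\tv}$ is a reasonable extra precaution but does not change the argument.
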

\end{onlymain}

\begin{onlyapp}

\noindent {\bf Lemma~\ref{lem:mwu-cost-dtv}. } {\em Let $t_1$ and $t_2$ two different time steps such that $d_{\tv}(\mathrm{P}^{t_1},\mathrm{P}^{t_2}) \geq 1/n$. Then, during the time interval $[t_1,t_2)$ the cost of the $\MWU$ algorithm is at least $n^2$.}
$$\sum_{t=t_1}^{t_2-1} \expect[\acost(\MWU(t))] \geq n^2.$$
\begin{proof}

By Lemma~\ref{lem:switch_cost} and summing over all $t$ such that $t_1 \leq t <t_2$, we have that 
\begin{equation}
\label{ref:eq_mwu-sum}
\sum_{t=t_1}^{t_2-1} d_{\tv}(\mathrm{P}^{t},\mathrm{P}^{t+1}) \leq \frac{1}{n^3} \cdot \sum_{t=t_1}^{t_2-1} \expect[\acost(\MWU(t))]. 
\end{equation}
By triangle inequality we have that $d_{\tv}(\mathrm{P}^{t_1},\mathrm{P}^{t_2}) \leq \sum_{t=t_1}^{t_2-1} d_{\tv}(\mathrm{P}^t,\mathrm{P}^{t+1}) $. Combined with~\eqref{ref:eq_mwu-sum}, this implies that
\[ d_{\tv}(\mathrm{P}^{t_1},\mathrm{P}^{t_2}) \leq \frac{1}{n^3} \sum_{t=t_1}^{t_2-1} \expect[\acost(\MWU(t))]. \]

By rearranging and using that $d_{\tv}(\mathrm{P}^{t_1},\mathrm{P}^{t_2}) \geq 1/n$, we get that 
\[ \sum_{t=t_1}^{t_2-1} \expect[\acost(\MWU(t))] \geq n^3 \cdot \frac{1}{n} = n^2 \qedhere \]
\end{proof}
\end{onlyapp}

\begin{onlymain}

\subsection{Rounding}
\label{sec:rounding}

Next, we present our rounding scheme. Given as input a probability distribution $\delta$ over permutations, it outputs a fixed permutation $\rho$ such that for each possible request set $S$ of size $r$, the cost of $\rho$ on $S$ is within a $O(r)$ factor of the expected cost of the distribution $\delta$ on $S$. For convenience, we assume that $n/r$ is an integer. Otherwise, we use $\lceil n/r \rceil$.

\begin{algorithm}[H]
  \caption{Greedy-Rounding  (derandomizing probability distributions over the permutations)}\label{alg:derand-sec}
  \textbf{Input:} A probability distribution $\delta$ over $[n!]$.\\
  \textbf{Output:} A permutation $\rho \in [n!]$.

 \begin{algorithmic}[1]
 
        \STATE R $\leftarrow U$
        \FOR{$i = 1$ \text{ to } $n/r$}
        
        \STATE $S^i \leftarrow \argmin_{S \in \{R\}^r} \expect_{\pi \sim \delta}[ \pi(S)]$
        
        \STATE Place the elements of $S^i$ (arbitrarily) from positions $(i-1)\cdot r + 1$ to $i \cdot r$ of $\rho$.
        \STATE $R \leftarrow R \setminus S^i$
        \ENDFOR
        
        \RETURN $\rho$
  \end{algorithmic}
\end{algorithm}

Our rounding algorithm is described in Algorithm~\ref{alg:derand-sec}. At each step, it finds the request $S$ with minimum expected covering cost under the probability distribution $\delta$ and places the elements of $S$ as close to the beginning of the permutation as possible. Then, it removes those elements from set $R$ and iterates. The main claim is that the resulting permutation has the following property:
\textit{any request $S$ of size $r$ has covering cost at most
$O(r)$ times of its expected covering cost under the probability distribution $\delta$}.

\begin{theorem}
\label{thm:greedy_rounding}
Let $\delta$ be a distribution over permutations and let $\rho$ be the permutation output by Algorithm~\ref{alg:derand-sec} on $\delta$. Then, for any set $S$, with $|S|=r$, $$\rho(S) \leq 2  r \cdot \expect_{\pi \sim \delta} [\pi( S)] \, .$$
\end{theorem}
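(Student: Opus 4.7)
The plan is to bound the cost $\rho(S)$ by tracking the first block of $\rho$ that intersects $S$, and then to use the greedy choice together with a disjointness argument to control its index.

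\smallskip
\noindent\textbf{Step 1: Identifying the relevant block.} Let $i^\ast$ be the smallest index such that $S^{i^\ast}\cap S\neq\emptyset$; such an $i^\ast$ exists since the sets $S^1,\dots,S^{n/r}$ partition $U$. By construction of $\rho$, every element of $S$ lies in positions $(i^\ast-1)r+1$ or later, and at least one of them lies in positions $(i^\ast-1)r+1,\dots,i^\ast r$. Therefore $\rho(S)\leq i^\ast r$, and it suffices to prove that $i^\ast \leq 2\,\expect_{\pi\sim\delta}[\pi(S)]$.

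\smallskip
\noindent\textbf{Step 2: Using the greedy rule.} For every $j\leq i^\ast$, the set $S$ is entirely contained in the residual universe $R$ at the start of iteration $j$ (this is exactly the definition of $i^\ast$), so $S$ is a feasible candidate in line~3. Because the algorithm picks $S^j$ to minimize the expected covering cost among all such candidates,
\begin{equation*}
\expect_{\pi\sim\delta}[\pi(S^j)]\;\leq\;\expect_{\pi\sim\delta}[\pi(S)]\qquad\text{for each }j=1,\dots,i^\ast.
\end{equation*}
Summing over $j$ gives the upper bound $\sum_{j=1}^{i^\ast}\expect_{\pi\sim\delta}[\pi(S^j)]\leq i^\ast\cdot\expect_{\pi\sim\delta}[\pi(S)]$.

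\smallskip
\noindent\textbf{Step 3: A matching lower bound via disjointness (the main point).} Fix any permutation $\pi$. The sets $S^1,\dots,S^{i^\ast}$ are pairwise disjoint, so their ``first elements'' in $\pi$ are distinct elements of $U$ and hence occupy distinct positions. Consequently $\pi(S^1),\dots,\pi(S^{i^\ast})$ are $i^\ast$ distinct positive integers, whose sum is therefore at least $1+2+\cdots+i^\ast = i^\ast(i^\ast+1)/2$. Taking expectation over $\pi\sim\delta$ preserves this inequality:
\begin{equation*}
\sum_{j=1}^{i^\ast}\expect_{\pi\sim\delta}[\pi(S^j)]\;\geq\;\frac{i^\ast(i^\ast+1)}{2}.
\end{equation*}

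\smallskip
\noindent\textbf{Step 4: Combining the two bounds.} Chaining the upper and lower bounds from Steps 2 and 3 yields $i^\ast(i^\ast+1)/2\leq i^\ast\cdot\expect_{\pi\sim\delta}[\pi(S)]$, i.e.\ $i^\ast\leq 2\,\expect_{\pi\sim\delta}[\pi(S)]-1$. Plugging into $\rho(S)\leq i^\ast r$ from Step 1 gives $\rho(S)\leq 2r\cdot\expect_{\pi\sim\delta}[\pi(S)]$, as required. The only non-obvious step is Step 3: without the observation that disjointness forces the first-hit positions to be distinct (and hence their sum to grow quadratically in $i^\ast$), one can only extract the trivial bound $i^\ast\leq i^\ast\cdot\expect[\pi(S)]$; this quadratic gain is precisely what converts the $O(i^\ast)$ greedy inequality into the desired linear bound on $i^\ast$.
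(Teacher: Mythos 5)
Your proof is correct and follows essentially the same route as the paper's: your $i^\ast$ is the paper's iteration index $k$, Step 2 is the same greedy-minimality observation (the paper phrases it via the two cases $S=S^k$ and $S\neq S^k$), and Step 3 is exactly the content of the paper's Lemma~\ref{lem:rounding-opt-lower-bound}, where disjointness forces the first-hit positions $\pi(S^1),\dots,\pi(S^{i^\ast})$ to be distinct positive integers summing to at least $i^\ast(i^\ast+1)/2$. The only cosmetic difference is that you sum the greedy inequalities directly rather than routing through the averaged bound $\cE_j\geq\frac{j+1}{2}$.
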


\begin{proof}[Proof Sketch.]
The key step is to show that if the element used by $\rho$ to serve the request $S$ was picked during the $k$th iteration of the rounding algorithm, then $\expect_{\pi \sim \delta} [\pi( S)] \geq k/2$. Clearly, $\rho(S) \leq k \cdot r$ and the theorem follows. Full proof is in Appendix~\ref{app:rounding}.
\end{proof}

\end{onlymain}

\begin{onlyapp}

\subsection{Rounding}
\label{app:rounding}

\noindent {\bf Theorem~\ref{thm:greedy_rounding}.}{\em  Let $\delta$ be a distribution over permutations and let $\rho$ be the permutation output by Algorithm~\ref{alg:derand-sec} on $\delta$. Then, for any set $S$, with $|S|=r$, $$\rho(S) \leq 2  r \cdot \expect_{\pi \sim \delta} [\pi( S)] \, .$$}

\begin{proof}
Let $e$ be the element used by $\rho$ to serve the request on set $S$. Pick $k$ such that $(k-1) \cdot r +1 \leq \acost(\rho,S) \leq k \cdot r$. That means, $e$ was placed at the permutation $\rho$ at the $k$th iteration of the rounding algorithm. 

Let $S^1,\dotsc,S^{n/r}$ be the sets chosen during the rounding algorithm. Recall that $\rho$ uses an element from $S^k$ to serve the request. To this end, we use the technical Lemma~\ref{lem:rounding-opt-lower-bound} in order to get a lower bound on the expected cost of $\delta$. We distinguish between two cases: 

\begin{enumerate}
    \item Case $S = S^k$. In that case, by Lemma~\ref{lem:rounding-opt-lower-bound} we get that $\expect_{\pi \sim \delta} [\pi(S)] \geq \frac{k+1}{2}$.

    \item Case $S \neq S^k$. That means, $e$ is one element of $S$ in $S_k$ and no elements of $S$ are in sets $S_1,\dotsc,S_{k-1}$. By construction of or rounding algorithm, we have that$$\expect_{\pi \sim \delta} [\pi(S)] \geq \expect_{\pi \sim \delta}[\pi(S_k)] \geq \frac{k+1}{2}. $$
\end{enumerate}
We get that in both cases $\expect_{\pi \sim \delta} [\pi(S)] \geq \frac{k+1}{2}$. We conclude that

$$ \frac{\acost(\rho,S)}{\expect_{\pi \sim \delta}[\acost(\delta,S)]} \leq \frac{k \cdot r}{\frac{k+1}{2}} \leq 2 \cdot r \qedhere $$ 

\end{proof}

We now proceed to the lemma omitted in the proof of Theorem~\ref{thm:greedy_rounding}.

\begin{lemma}
\label{lem:rounding-opt-lower-bound}
Let $\delta$ be a probability distribution over permutations and $1 \leq k \leq \frac{n}{r}$. Let $S_1,\dotsc,S_{k}$ be disjoint sets such that $S_j \subseteq U$ and $|S_j| = r$ for all $1 \leq j \leq k$. Let $\cE_j = \expect_{\pi \sim \delta}[\pi(S_j)]$ for any $1\leq j \leq k$. If $\cE_1 \leq \dotsc \leq \cE_k$, then, we have that $\cE_j \geq \frac{j+1}{2}$, for $1 \leq j \leq k$.
\end{lemma}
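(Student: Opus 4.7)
The plan is to exploit the disjointness of the sets $S_1,\ldots,S_k$ to argue that the random variables $\pi(S_1),\ldots,\pi(S_k)$ (over the draw $\pi \sim \delta$) take, on every realization, $k$ \emph{pairwise distinct} positive integer values. Concretely, if there were a permutation $\pi$ and indices $i \neq i'$ with $\pi(S_i) = \pi(S_{i'}) = p$, then by the definition of the cover time the element $\pi(p)$ would lie in both $S_i$ and $S_{i'}$, contradicting $S_i \cap S_{i'} = \emptyset$.

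Given this, the proof proceeds by a simple averaging argument on prefixes. Since any $j$ distinct positive integers sum to at least $1 + 2 + \cdots + j = \frac{j(j+1)}{2}$, every realization satisfies
\[
\sum_{i=1}^{j} \pi(S_i) \;\geq\; \frac{j(j+1)}{2}.
\]
Taking expectations over $\pi \sim \delta$ and using linearity of expectation gives $\sum_{i=1}^{j} \cE_i \geq \frac{j(j+1)}{2}$. Finally, the sortedness hypothesis $\cE_1 \leq \cdots \leq \cE_k$ yields $j \cdot \cE_j \geq \sum_{i=1}^{j} \cE_i \geq \frac{j(j+1)}{2}$, i.e., $\cE_j \geq \frac{j+1}{2}$, as required.

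There is no real obstacle here: the whole argument is essentially a counting step (distinctness of $\pi(S_i)$'s) plus the trivial lower bound on the sum of $j$ distinct positive integers. The only point that requires a touch of care is recognizing that one should bound the \emph{sum of the first $j$} expectations (rather than $\cE_j$ directly) and then invoke the ordering to transfer this to $\cE_j$ via the maximum.
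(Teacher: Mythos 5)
Your proposal is correct and is essentially identical to the paper's proof: both use the disjointness of the sets to argue that the $\pi(S_\ell)$ are $j$ distinct positive integers summing to at least $\frac{j(j+1)}{2}$, apply linearity of expectation, and then use the ordering $\cE_1 \leq \cdots \leq \cE_j$ to conclude $\cE_j \geq \frac{1}{j}\sum_{\ell=1}^{j}\cE_\ell \geq \frac{j+1}{2}$. You even spell out the distinctness of the cover positions slightly more explicitly than the paper does.
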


\begin{proof}
We have that 

\begin{align*}
    \cE_j \geq \frac{1}{j} \sum_{\ell=1}^{j} \cE_{\ell} & = \frac{1}{j} \sum_{\ell=1}^{j} \sum_{\pi}  \Prob_{\delta}[\pi] \cdot \pi(S_{\ell}) & \text{Using $ \cE_1 \leq \dotsc \leq \cE_j$}   \\ &=  \frac{1}{j} \sum_{\pi \in [n!]} \Prob_{\delta}[\pi] \cdot \sum_{\ell=1}^{j} \pi(S_{\ell})  & \text{Linearity of summation}  \\
    & \geq \frac{1}{j} \sum_{\pi \in [n!]} \Prob_{\delta}[\pi] \cdot \frac{j(j+1)}{2} \\
    &= \frac{j+1}{2} \sum_{\pi} \Prob_{\delta}[\pi]    = \frac{j+1}{2}   
\end{align*}
where
$\sum_{\ell=1}^{j} \pi(S_{\ell}) \geq \frac{j(j+1)}{2}$ follows by the fact that $\pi(S_{\ell})$ take $j$ different positive integer values (the sets $S_\ell$ are disjoint).
\end{proof}
\end{onlyapp}

\begin{onlymain}
\subsection{The Lazy Rounding Algorithm}
\label{sec:alg}

$\LRA$, presented in Algorithm~\ref{alg:lazy-det-MWU}, is essentially a lazy derandomization of $\MWU$. At each step, it calculates the distribution on permutations maintained by $\MWU$. At the beginning of each \textit{phase}, it sets its permutation to that given by Algorithm~\ref{alg:derand-sec}. Then, it sticks to the same permutation for as long as the total variation distance of $\MWU$'s distribution at the beginning of the phase to the current $\MWU$ distribution is at most $1/n$. As soon as the total variation distance exceeds $1/n$, $\LRA$ starts a new phase. 


\begin{algorithm}[b!]
  \caption{Lazy Rounding}\label{alg:lazy-det-MWU}
  \textbf{Input:} Sequence of requests $(S_1,\ldots,S_m)$ and the initial permutation $\pi_0 \in [n!]$.\\
  \textbf{Output:} A permutation $\pi_t$ at each round $t$, which serves request $S_t$.

 \begin{algorithmic}[1]

 \STATE $\phase \leftarrow 0$
 \STATE $\mathrm{P^1}\leftarrow \text{uniform distribution over permutations}$
 \FOR{ each round $t \geq 1$ }
 
\IF{$\text{d}_{\text{tv}}(\mathrm{P}^{t},\mathrm{P}^{\phase}) \leq 1/n$}
        
        \STATE $\pi_t \leftarrow \pi_{t-1}$ 
        
        \ELSE
       \STATE $\pi_t \leftarrow \text{Greedy-Rounding}(\mathrm{P}^t)$
       \STATE $\phase \leftarrow t$
        \ENDIF
 
    \STATE Serve request $S_{t}$ using permutation $\pi_{t}$.
    \STATE  $w^{t+1}_\pi = w^{t}_\pi \cdot e^{-\pi(S_t)/n^3} $, for all permutations $\pi \in [n!]$.                               
    \STATE $\mathrm{P}^{t+1} \leftarrow$ Distribution on permutations of MWU, $ \mathrm{P}^{t+1}_\pi = w^{t}_\pi / (\sum_{\pi \in [n!]}w_\pi ^t)$.

  \ENDFOR
\end{algorithmic}
\end{algorithm}

The main intuition behind the design of our algorithm is the following. In Section~\ref{sec:rounding} we showed that Algorithm~\ref{alg:derand-sec} results in a deterministic algorithm with access cost no larger than $2r \expect[\acost(\MWU)]$. However, such an algorithm may incur an unbounded moving cost; even small changes in the distribution of $\MWU$ could lead to very different permutations after rounding. To deal with that, we update the permutation of $\LRA$ only if there are substantial changes in the distribution of $\MWU$. Intuitively, small changes in $\MWU$'s distribution should not affect much the access cost (this is formalized in Lemma~\ref{lem:dtv}). Moreover, $\LRA$ switches to a different permutation only if it is really required, which we use to bounds $\LRA$'s moving cost. 

\paragraph{Bounding the Access Cost.} 
We first show that the access cost of $\LRA$ is within a factor of $4r$ from the expected access cost of $\MWU$ (Lemma~\ref{lem:alg-acc_cost}). To this end, we first show that if the total variation distance between two distributions is small, then sampling from those distributions yields roughly the same expected access cost for any request $S$. The proof of the following is based on the optimal coupling lemma and can be found in Appendix~\ref{app:alg}.

\begin{lemma}
\label{lem:dtv}
Let $\delta$ and $\delta'$ be two probability distributions over permutations. If that $d_{\tv}(\delta,\delta') \leq 1/n$, for any request set $S$ of size $r$, we have that $$\expect_{\pi \sim \delta'} [\pi(S)] \leq 2 \cdot \expect_{\pi \sim \delta} [\pi(S)].$$
\end{lemma}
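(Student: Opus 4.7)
The plan is to apply the optimal coupling lemma. Since $d_{\tv}(\delta,\delta') \leq 1/n$, there exists a coupling $(\pi,\pi')$ of the two distributions with $\pi \sim \delta$, $\pi' \sim \delta'$, and $\Pr[\pi \neq \pi'] = d_{\tv}(\delta,\delta') \leq 1/n$. Under this coupling we have $\expect_{\pi \sim \delta}[\pi(S)] = \expect[\pi(S)]$ and $\expect_{\pi' \sim \delta'}[\pi'(S)] = \expect[\pi'(S)]$.

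Next, I would split the expectation according to whether the two coupled samples agree:
\begin{align*}
\expect[\pi'(S)] &= \expect\bigl[\pi'(S)\,\mathbf{1}_{\pi=\pi'}\bigr] + \expect\bigl[\pi'(S)\,\mathbf{1}_{\pi\neq\pi'}\bigr].
\end{align*}
On the first event $\pi'(S)=\pi(S)$, so that term is at most $\expect[\pi(S)]$. On the second event I just use the crude bound $\pi'(S)\leq n$, which together with $\Pr[\pi\neq\pi']\leq 1/n$ contributes at most $1$.

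The final step is the one observation that makes the factor of $2$ work: for any request set $S$ and any permutation $\pi$, we have $\pi(S)\geq 1$, hence $\expect_{\pi \sim \delta}[\pi(S)] \geq 1$. Combining, $\expect[\pi'(S)] \leq \expect[\pi(S)] + 1 \leq 2\,\expect[\pi(S)]$, which is exactly the claim. There is no real obstacle here; the only subtle point is to remember to use the trivial lower bound $\pi(S)\geq 1$ to absorb the additive slack from the coupling into a multiplicative factor of $2$.
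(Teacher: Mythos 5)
Your proposal is correct and follows essentially the same route as the paper's proof: both invoke the optimal coupling lemma, bound the contribution of the event $\pi\neq\pi'$ by $n\cdot\Pr[\pi\neq\pi']\leq 1$, and absorb the additive $1$ into a factor of $2$ using $\expect_{\pi\sim\delta}[\pi(S)]\geq 1$. The only cosmetic difference is that you split the expectation over the coupling event directly, whereas the paper bounds the expected difference of the two access costs; the arguments are interchangeable.
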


\end{onlymain}

\begin{onlyapp}

\subsection{Our Algorithm}
\label{app:alg}
In this section we include all proofs related to the analysis of our algorithm.

\noindent {\bf Lemma~\ref{lem:dtv}} {\em Let $\delta$ and $\delta'$ be two probability distributions over permutations. If that $d_{\tv}(\delta,\delta') \leq 1/n$, for any request set $S$ of size $r$, we have that $$\expect_{\pi \sim \delta'} [\pi(S)] \leq 2 \cdot \expect_{\pi \sim \delta} [\pi(S)].$$}

\begin{proof}
By coupling lemma there exists a coupling $(X,Y)$ such that 
\begin{equation}
    \label{eq:dtv}
    \Prob[X \neq Y] = d_{\tv}(\delta,\delta').
\end{equation}
Clearly, $\expect[\acost(X,S)] = \expect_{\pi \sim \delta} [\pi(S)]$ and $\expect[\acost(Y,S)] = \expect_{\pi \sim \delta'} [\pi(S)]$. Note that since minimum cost for serving a request is 1 and maximum $n$, it will always hold that $$1 \leq \expect[\acost(X,S)],\expect[\acost(Y,S)] \leq n.$$

We will show that $\expect[\acost(Y,S) - \acost(X,S)] \leq 1$. This implies the lemma as follows:
\begin{align*}
\expect_{\pi \sim \delta'} [\pi(S)]  &=   \expect[\acost(Y,S)]\\ &\leq \expect[\acost(X,S)] + 1 \leq 2 \expect[\acost(X,S)] \\&= 2 \expect_{\pi \sim \delta} [\pi(S)].
\end{align*}
Thus it remains to show that $\expect[\acost(Y,S) - \acost(X,S)] \leq 1$.
For notational convenience, let the random variable $Z = \acost(X,S) - \acost(Y,S)$. It suffices to show that $\expect[Z] \leq 1$. We have that
\begin{equation}
    \label{eq:dtv-expect-1}
    \expect[Z] \leq  \Prob[\acost(X,S) = \acost(Y,S)] \cdot 0 \\
    + \Prob[\acost(X,S) \neq \acost(Y,S)] \cdot n,  
\end{equation}
since whenever $X \neq Y$, the difference in the cost is upper bounded by $n$.

Observe that $\Prob[X(S) \neq Y(S)] \leq \Prob[X \neq Y]$; this is because if $X = Y$, then $\acost(X,S) = \acost(Y,S)$, but it may happen that $X \neq Y$ but $\acost(X,S) = \acost(Y,S)$. From~\eqref{eq:dtv-expect-1} we get that
\begin{equation}
\label{eq:dtv-expect-2}
\expect[Z] \leq \Prob[X \neq Y] \cdot n.    
\end{equation}
Combining~\eqref{eq:dtv} with~\eqref{eq:dtv-expect-2} we get that 
\[ \expect[Z] \leq d_{\tv}(\delta,\delta') \cdot n \leq \frac{1}{n} \cdot n = 1 . \qedhere \]
\end{proof}
\end{onlyapp}

\begin{onlymain}
We are now ready to upper bound the access cost of our algorithm.
\begin{lemma}
\label{lem:alg-acc_cost}
$\acost(\LRA) \leq  4 r \cdot \expect[\acost(\MWU)]$.
\end{lemma}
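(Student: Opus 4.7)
The plan is to combine the rounding guarantee from Theorem~\ref{thm:greedy_rounding} with the stability statement of Lemma~\ref{lem:dtv}, applied phase by phase.

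First I would partition the time horizon $\{1, \ldots, m\}$ into the maximal phases determined by the condition in line 4 of Algorithm~\ref{alg:lazy-det-MWU}. Fix one such phase and let $t_0$ be the time at which it starts. By construction, $\LRA$ uses the same permutation $\rho := \text{Greedy-Rounding}(\mathrm{P}^{t_0})$ throughout the phase, and for every round $t$ in the phase we have $d_{\tv}(\mathrm{P}^t, \mathrm{P}^{t_0}) \leq 1/n$ (at $t=t_0$ the distance is $0$, and the phase ends at the first round where the condition fails).

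Now I would bound $\acost(\LRA(t)) = \rho(S_t)$ for each such $t$ in two steps. By Theorem~\ref{thm:greedy_rounding} applied to the distribution $\mathrm{P}^{t_0}$ that produced $\rho$,
\[ \rho(S_t) \leq 2r \cdot \expect_{\pi \sim \mathrm{P}^{t_0}}[\pi(S_t)]. \]
Next, since the total variation distance is symmetric under the paper's definition and bounded by $1/n$, Lemma~\ref{lem:dtv} (with $\delta = \mathrm{P}^t$, $\delta' = \mathrm{P}^{t_0}$) yields
\[ \expect_{\pi \sim \mathrm{P}^{t_0}}[\pi(S_t)] \leq 2 \cdot \expect_{\pi \sim \mathrm{P}^{t}}[\pi(S_t)] = 2 \cdot \expect[\acost(\MWU(t))]. \]
Chaining the two bounds gives $\acost(\LRA(t)) \leq 4r \cdot \expect[\acost(\MWU(t))]$ for every round $t$ in the phase.

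Finally, since the phases partition the time horizon, I would sum this pointwise inequality over all $t \in \{1, \ldots, m\}$ to conclude $\acost(\LRA) \leq 4r \cdot \expect[\acost(\MWU)]$. There is no real obstacle here; the only subtlety is verifying that the pointwise bound holds even at the first round of a phase (where $\pi_t$ was just updated) — this is immediate because at $t = t_0$ we have $\mathrm{P}^{t_0} = \mathrm{P}^t$ so Lemma~\ref{lem:dtv} is trivially valid, and $\rho$ was computed precisely from $\mathrm{P}^{t_0}$ so Theorem~\ref{thm:greedy_rounding} applies directly.
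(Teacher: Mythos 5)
Your proposal is correct and follows exactly the paper's own argument: within each phase apply Theorem~\ref{thm:greedy_rounding} to the distribution $\mathrm{P}^{t_0}$ used for rounding (factor $2r$), then Lemma~\ref{lem:dtv} using the phase condition $d_{\tv}(\mathrm{P}^t,\mathrm{P}^{t_0}) \leq 1/n$ (factor $2$), and sum over $t$. The extra care you take at the first round of a phase and with the orientation of the total variation distance is fine but not needed beyond what the paper already does.
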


\begin{proof}
Consider a phase of $\LRA$ starting at time $t_1$. We have that at any round $t \geq t_1$,
$\pi_t = \text{Greedy-Rounding}(\mathrm{P}^{t_1})$, as long as $\text{d}_{\text{TV}}
(\mathrm{P}^t, \mathrm{P}^{t_1})
\leq 1/n$.
By Theorem~\ref{thm:greedy_rounding} and Lemma~\ref{lem:dtv}, we have that,
$$  \acost(\LRA(t)) = \pi_t(S_t) \leq 2r \cdot \expect_{\pi \sim \mathrm{P}^{t_1}}[\pi(S_t)] \leq 4r\expect_{\pi \sim \mathrm{P}^\text{t}}[\pi(S_t)].$$
 Overall we get,
$\acost(\LRA) = \sum_{t=1}^m \pi_t(S_t) \leq 4r \expect[\acost(\MWU)]. \qedhere$
\end{proof}

\paragraph{Bounding the Moving Cost.} 
We now show that the moving cost of $\LRA$ is upper bounded by the expected access cost of $\MWU$.

\begin{lemma}\label{lem:moving_cost}
$\mcost(\LRA) \leq \expect[\acost(\MWU)]$.
\end{lemma}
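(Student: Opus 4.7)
The plan is to partition the request sequence into phases determined by $\LRA$ itself, and argue that in each phase the algorithm performs a single update whose cost is dominated by the access cost $\MWU$ accumulates during that phase.

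Concretely, let $t_1 < t_2 < \ldots < t_k$ denote the time steps at which $\LRA$ enters the \textbf{else} branch (i.e.\ the starts of the phases). Between consecutive phase starts $t_i$ and $t_{i+1}$ the algorithm keeps $\pi_t = \pi_{t-1}$, so it incurs no moving cost inside a phase; the only moving cost occurs at the time steps $t_i$, when it switches from the previous permutation to the output of $\text{Greedy-Rounding}(\mathrm{P}^{t_i})$. The first crucial observation is that the moving cost of any single such switch is at most $\binom{n}{2} \leq n^2$, since the Kendall tau distance between two permutations on $n$ elements is bounded by the total number of pairs.

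The second ingredient is that each completed phase is ``expensive'' for $\MWU$. Indeed, by construction the condition triggering the new phase at time $t_{i+1}$ is precisely $d_{\tv}(\mathrm{P}^{t_i},\mathrm{P}^{t_{i+1}}) > 1/n$. I would apply Lemma~\ref{lem:mwu-cost-dtv} with these two time steps to conclude that
\[
\sum_{t=t_i}^{t_{i+1}-1} \expect[\acost(\MWU(t))] \;\geq\; n^2.
\]
Summing over the phases $i=1,\ldots,k-1$, and using that the phase intervals are disjoint subsets of $\{1,\ldots,m\}$, gives $\sum_{i=1}^{k-1} n^2 \leq \expect[\acost(\MWU)]$.

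Finally I would combine the two ingredients: $\mcost(\LRA) \leq k \cdot n^2$ (at most one switch of cost $\leq n^2$ per phase start), and since the last phase contributes at most the trivial bound which is absorbed by the MWU cost on that phase (or handled by a small additive term), the telescoping gives $\mcost(\LRA) \leq \expect[\acost(\MWU)]$. The only mildly subtle point is bookkeeping of the first and last phases: the first switch at $t_1$ can be charged against the MWU cost of the first phase (or absorbed by the fact that $\mathrm{P}^1$ is uniform and any initial rounding is fixed), and the final, possibly incomplete, phase contributes no further switch. No other obstacle arises, so the bound $\mcost(\LRA) \leq \expect[\acost(\MWU)]$ follows directly.
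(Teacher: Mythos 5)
Your proposal is correct and follows essentially the same argument as the paper: charge the (at most $n^2$) Kendall tau moving cost of each switch to the preceding completed phase, whose MWU access cost is at least $n^2$ by Lemma~\ref{lem:mwu-cost-dtv} since the total variation distance across the phase exceeds $1/n$. The boundary bookkeeping you flag (the very first switch) is glossed over in the paper as well and is harmless, as it only contributes an additive $O(n^2)$ term absorbed in the additive constant of Theorem~\ref{thm:static_ub}.
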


\begin{proof}
$\LRA$ moves at the end of a phase incurring a cost of at most $n^2$. Let $t_1$ and $t_2$ be the starting times of two consecutive phases. By the definition of $\LRA$, $d_{\tv}(P^{t_1},P^{t_2}) > 1/n$. By Lemma~\ref{lem:mwu-cost-dtv}, we have that the access cost of $\MWU$ during $t_1$ and $t_2$ is at least $n^2$. We get that
$$ \frac{\mcost(\ALG)}{\expect[\acost(\MWU)]} \leq \frac{n^2 \#\text{ different phases}}{n^2 \#\text{different phases}} = 1 .\qedhere  $$ 
\end{proof}

Theorem~\ref{thm:static_ub} follows from lemmas~\ref{lem:alg-acc_cost},~\ref{lem:moving_cost}~and~\ref{lem:mwu-scaled-final}. The details can be found in Appendix~\ref{app:alg}.

\end{onlymain}

\begin{onlyapp}

\paragraph{Proof of Theorem~\ref{thm:static_ub}.} We now give the formal proof of the competitive ratio of $\LRA$ algorithm.

\noindent  {\bf Theorem~\ref{thm:static_ub}.} {\em There exists a $(5r+2)$-competitive deterministic online algorithm for the static version of the Online Min-Sum Set Cover problem. }

\begin{proof}

We show that our algorithm $\ALG$ is $(5r+2)$-competitive. 

By lemmata~\ref{lem:alg-acc_cost} and~\ref{lem:moving_cost} we get that 
\begin{equation}
\label{eq:alg-mwu-compare}
\cost(\LRA) \leq (4r+1) \cdot \acost(\MWU). 
\end{equation}

Now, we connect the access cost of $\MWU$ to the optimal cost. By Lemma~\ref{lem:mwu-scaled-final} we have that 
\begin{equation}
\label{eq:mwu-opt-direct}    
\acost(\MWU) \leq \frac{5}{4} \cdot \cost(\OPT) + 2 n^4 \ln n.
\end{equation}

By~\eqref{eq:alg-mwu-compare} and~\eqref{eq:mwu-opt-direct} we get that 
\[  \cost(\LRA) \leq (5r+2) \cdot \cost(\OPT)+ 2\cdot (4r+1) \cdot n^4 \ln n \qedhere \]
\end{proof}

\end{onlyapp}

\begin{onlymain}

\paragraph{Remark.} Note that to a large extent, our approach is generic and can be used to provide static optimality for a wide range of online problems. The only requirement is that there is a maximum access cost $\costmax$ and a maximum moving cost $D$; then, we should use MWU with learning rate $1/(D \cdot \costmax)$ and move when $d_{\tv} \geq 1/\costmax$.  Here we used $D = n^2$ and $\costmax = n$. The only problem-specific part is the rounding of Section~\ref{sec:rounding}. We explain the details in Appendix~\ref{sec:app-static-ub}. We believe it is an interesting direction to use this technique for generalizations of this problem, like multiple intents re-ranking or interpret known algorithms for other problems like the BST problem using our approach.
\end{onlymain}

\begin{onlyapp}

\paragraph{On our Approach.} We now show how our approach for static optimality can be used to a wide class of online problems. 

Consider any MTS problem with $N$ states such that in each request the service cost of each state is at most $\costmax$ and the diameter of the state space is $D$. Assume that there exists a rounding scheme providing a derandomization of MWU such that the service cost is within $\alpha$ factor of the expected service cost of MWU. We explain how our technique from Section~\ref{sec:static-ub} can be used to obtain a $O(\alpha)$-competitive algorithm against the best state. 

\paragraph{Algorithm.} The algorithm is essentially the same as Algorithm~\ref{alg:lazy-det-MWU}, using the MWU algorithm with learning rate $1/(D \cdot \costmax)$ and moving when the total variation distance between the distributions exceeds $1/\costmax$.

This way, Lemma~\ref{lem:switch_cost} would give a bound of 
$$d_{\tv}(\mathrm{P}^t,\mathrm{P}^{t+1}) \leq \frac{1}{D} \cdot \acost(\MWU(t))$$
and as a consequence in Lemma~\ref{lem:mwu-cost-dtv} we will get that the (expected) cost of MWU during a phase is at least $D$.  Since the algorithm moves only at the end of the phase, incurring a cost of at most $D$, Lemma~\ref{lem:moving_cost} will still hold. Last, it is easy to see that Lemma~\ref{lem:dtv} will then hold for  $d_{\tv}(\delta,\delta') \leq 1/ \costmax$. This way, Lemma~\ref{lem:alg-acc_cost} would give that 
\[ \acost(\ALG) \leq 2 \alpha \expect[\cost(\MWU)]. \]

Combining the above, we get that the algorithm is $O(\alpha)$-competitive. 

\end{onlyapp}

\begin{onlymain}
\section{A Memoryless Algorithm}
\label{sec:algorithm}
In this section we focus on memoryless algorithms. We present an algorithm, called Move-All-Equally  (MAE), which seems to be the ``right'' memoryless algorithm for online MSSC. MAE decreases the index of all elements of the request  $S_t$ at the same speed until one of them reaches the first position of the permutation (see Algorithm~\ref{alg:move_all}). Note that $\MAE$ belongs to the \emph{Move-to-Front} family, i.e., it is a generalization of the classic MTF algorithm for the list update problem. $\MAE$ admits two key properties that substantially differentiate it from the other algorithms in the \text{Move-to-Front} family presented in Section~\ref{sec:lower_bounds}.

\begin{algorithm}[t]
  \caption{Move-All-Equally}\label{alg:move_all}
  \textbf{Input:} A request sequence $(S_1,\ldots,S_m)$ and the initial permutation $\pi_0 \in [n!]$\\
  \textbf{Output:} A permutation $\pi_t$ at each round $t$.

 \begin{algorithmic}[1]
  
  \FOR{each round $t \geq 1$}
  
  \STATE $k_t \leftarrow \min \{i \,|\, \pi_{t-1}[i] \in S_t  \} $ 
  
  \STATE Decrease the index of all elements of $S_t$ by $k_t-1$.
  
  \ENDFOR
  \end{algorithmic}
\end{algorithm}

\begin{enumerate}[(i)]
    \item Let $e_t$ denote the element used by $\OPT$ to cover the request $S_t$. $\MAE$ always moves the element $e_t$ towards the beginning of the permutation.
    \item It balances moving and access costs: if the access cost at time $t$ is $k_t$, then the moving cost of $\MAE$ is roughly $r \cdot k_t$ (see Algorithm~\ref{alg:move_all}). The basic idea is that the moving cost of $\MAE$ can be compensated by the decrease in the position of element $e_t$. This is why it is crucial all the elements to be moved with the same speed.
\end{enumerate}

\paragraph{Lower Bound.} First, we show that this algorithm, besides its nice properties, fails to achieve a tight bound for the online MSSC problem. 

\repeattheorem{thm:mae-lb}

In the lower bound instance, the adversary always requests the last $r$ elements of the algorithm's permutation. Since MAE moves all elements to the beginning of the permutation, we end up in a request sequence where $n/r$ disjoint sets are repeatedly requested. Thus the optimal solution incurs a cost of  $\Theta(n/r)$ per request, while MAE incurs a cost of $\Omega(n \cdot r)$ per request (the details are in  Appendix~\ref{app:mae}) . 
Note that in such a sequence,  MAE  loses a factor of $r$ by moving all elements, instead of one. However, this extra movement seems to be the reason that MAE outperforms all other memoryless algorithms and avoids poor performance in trivial instances, like other MTF-like algorithms. 

\end{onlymain}


\begin{onlyapp}

\section{Deferred Proofs of Section~\ref{sec:algorithm}}\label{app:mae}
In this section, we provide the proofs omitted from section~\ref{sec:algorithm}. More specifically, we prove a lower bound and an an upper bound on the competitive ratio of the $\MAE$ algorithm against the static optimal solution.

\subsection{Lower Bound}

 We start with the lower bound of $\Omega(r^2)$ on the competitive ratio of MAE against $\OPT$.
\repeattheorem{thm:mae-lb}
\begin{proof}
$\MAE$ is given an initial permutation $\pi_{0}$ with size $n$, where $n=r \cdot k$, for integers $r,k$ both greater than $1$. At each round $t$, the adversary gives requests $S_t$, which consist of the last $r$ elements in the permutation $\pi_{t-1}$ of $\MAE$. Since $\MAE$ moves all the elements of $S_t$ to the beginning of $\pi_t$, the request $S_{t+1}$ contains the $r$ elements preceding the elements of $S_t$ in $\pi_{t}$. Thus, the request sequence can be divided in $m/k$ requests containing the same $k$ pairwise disjoint requests, denoted as $S^*=\{ S^*_{1}, \ldots S^*_{k} \} $. The optimal static solution can serve the request sequence by using only $k$ elements, where each of these elements belongs to exactly one of the  $S^*_{j}, 1 \leq j \leq k$.  $\OPT$ has these elements in the first $k$ positions and pays $1+2+\ldots +k \leq k^2$ for every $k$ consecutive requests $(S^*_{1}, \ldots S^*_{k} ) $.   
$\MAE$ clearly pays $n-r+1$ access cost and $r \cdot(n-r+1)$ moving cost on every request, therefore $\MAE =m\cdot (r+1) \cdot(n-r+1)$. Then, the competitive ratio of $\MAE$ is at least \begin{eqnarray*}
\frac{\cost(\MAE)}{\cost(\OPT)} \geq \frac{m\cdot (r+1)\cdot (n-r+1)}{(m/k)\cdot k^2}=\frac{(r+1)\cdot (r\cdot k-r+1)}{k}=\Omega(r^2)\qedhere   
\end{eqnarray*}
\end{proof}
\end{onlyapp}

\begin{onlymain}

\paragraph{Upper Bounds.} Let $\cL$ denote the set of elements used by the optimal permutation on a request sequence such that $|\cL| = \ell$. That means, $\OPT$ has those $\ell$ elements in the beginning of its permutation, and it never uses the remaining $n-\ell$ elements. Consider a potential function $\Phi(t)$ being the number of inversions between elements of $\cL$ and $U \setminus \cL$ in the permutation of $\MAE$ (an inversion occurs when an element of $\cL$ is behind an element of $U \setminus \cL$). Consider the request $S_t$ at time $t$ and let $k_t$ be the access cost of $\MAE$. 

Let $e_t$ be the element used by $\OPT$ to serve $S_t$. Clearly, in the permutation of MAE, $e_t$ passes (i.e., changes relative order w.r.t) $k_t-1$ elements. Among them, let $L$ be the set of elements of $\cL$ and $R$ the elements of $U \setminus \cL$. Clearly, $|L| + |R| = k_t-1$ and $|L| \leq |\cL| = \ell$. We get that the move of $e_t$ changes the potential by $-|R|$. The moves of all other elements increase the potential by at most $(r-1) \cdot \ell$. We get that 
\begin{eqnarray*}
k_t + \Phi(t) - \Phi(t-1) &\leq&  |L| + |R| - |R| + (r-1) \cdot \ell 
                          \leq |L| +  (r-1) \cdot \ell \leq r \cdot \ell. 
\end{eqnarray*}
Since the cost of MAE at step $t$ is no more than $(r+1) \cdot k_t$, we get that the amortized cost of $\MAE$ per request is $O(r^2 \cdot \ell)$. This implies that for all sequences such that $\OPT$ uses all elements of $\cL$ with same frequencies (i.e, the $\OPT$ pays on average $\Omega(\ell)$ per request), $\MAE$ incurs a cost within $O(r^2)$ factor from the optimal. Recall that all other MTF-like algorithms are $\Omega(\sqrt{n})$ competitive even in instances where $\OPT$ uses only one element! 

While this simple potential gives evidence that $\MAE$ is $O(r^2)$-competitive, it is not enough to provide satisfactory competitiveness guarantees. We generalize this approach and define the potential function $ \Phi(t) = \sum_{j=1}^n \alpha_j \cdot \pi_t(j)$, where $\pi_t(j)$ is the position of element $j$ at round $t$ and $\alpha_j$ are some non-negative coefficients. The potential we described before is the special case where $\alpha_j = 1$ for all elements of $\cL$ and $\alpha_j = 0$ for elements of $U \setminus \cL$. 


By refining further this approach and choosing coefficients $\alpha_j$ according to the frequency that $\OPT$ uses element $j$ to serve requests (elements of high frequency are ``more important'' so they should have higher values $\alpha_j$), we get an improved upper bound.

\begin{reptheorem}{t:move_all_static}
The competitive ratio of $\MAE$ algorithm is at most $2^{O(\sqrt{\log n \cdot \log r})}$.
\end{reptheorem}

Note that this guarantee is $o(n^\epsilon)$ and $\omega(\log n)$. The proof is based on the ideas sketched above but the analysis is quite involved and is deferred to Appendix~\ref{app:mae}.

\end{onlymain}

\begin{onlyapp}

\subsection{Upper Bound}

 We now proceed to the proof of the upper bound on the competitive ratio of MAE algorithm. 

\repeattheorem{t:move_all_static}
Let the frequency of an element $e$ be the fraction of requests served by $e$ in the optimal permutation.
At a high-level, we divide the optimal permutation into $k+1$ \textit{blocks} where in each of the first $k$ blocks the frequencies of all elements are within a factor of $\beta$ and the last block contains all elements with frequencies at most $1/n^2$. By construction, in worst-case $k \approx \frac{\log n}{\log \beta}$. 

The main technical contribution of this section is proving the following lemma. 

\begin{lemma}\label{l:main}
For any parameter $\beta > 1$,
$$\cost(\MAE)\leq 4\beta(2r)^{k+1} \cdot \cost(\OPT) + (2r)^{k}n^2$$
where $k = \lceil 2 \log n / \log \beta \rceil$.
\end{lemma}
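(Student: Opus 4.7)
The plan is to use the weighted potential $\Phi(t)=\sum_{j=1}^n \alpha_j \pi_t(j)$, where the weights $\alpha_j$ are chosen to reflect the frequency structure of $\cost(\OPT)$. Let $f_j$ denote the fraction of requests that OPT serves using element $j$, so $\sum_j f_j \le 1$. I would partition the elements into $k+1$ buckets: for $i=0,\ldots,k-1$, bucket $B_i$ collects the elements with $f_j \in [\beta^{-(i+1)},\beta^{-i})$, and the tail bucket $B_k$ collects the remaining elements with $f_j<\beta^{-k}\le 1/n^2$, where the last inequality uses $k=\lceil 2\log n/\log\beta\rceil$. I then assign the geometric weights $\alpha_j=(2r)^{k-i}$ to every $j\in B_i$, so that weights range from $(2r)^k$ on $B_0$ down to $1$ on $B_k$.

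For a single round, MAE shifts each element of $S_t$ forward by $k_t-1$ positions while each non-$S_t$ element shifts back by at most $r$ and the total backward shift equals $r(k_t-1)$. This gives
\[
k_t+\Phi(t)-\Phi(t-1) \;\le\; k_t-(k_t-1)\alpha_{e_t}+r\sum_{j\in P_t}\alpha_j,
\]
where $P_t$ is the multiset of non-$S_t$ elements that get passed (with multiplicities equal to the number of $S_t$ elements passing them). The key step is to bound the increase $r\sum_{j\in P_t}\alpha_j$ using the geometric weight structure. Splitting by the bucket index $i^{*}$ of $e_t$, the contribution from heavier buckets $i<i^{*}$ can be charged against the decrease $(k_t-1)\alpha_{e_t}$ by the ratio-$2r$ telescoping, while the contribution from lighter buckets $i\ge i^{*}$ is at most $O(r\alpha_{e_t}k_t)$ since $\alpha_j\le\alpha_{e_t}$ there. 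This yields an amortized bound of the form $k_t+\Phi(t)-\Phi(t-1)\le O(r\alpha_{e_t}k_t)$.

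Summing over all $t$ and telescoping gives $\sum_t k_t\le O(r)\sum_t\alpha_{e_t}k_t+\Phi(0)$. Since $\Phi(0)\le(2r)^k\cdot n(n+1)/2$, this accounts for the $(2r)^k n^2$ additive term. For the multiplicative part, I would lower-bound $\cost(\OPT)$ by the optimal arrangement that places the highest-frequency elements first in its permutation $\pi^{*}$: the position of any $j\in B_i$ satisfies $\pi^{*}(j)\ge |B_0|+\cdots+|B_{i-1}|+1$, and combining $f_j\ge\beta^{-(i+1)}$ on $B_i$ with $\alpha_j=(2r)^{k-i}$ yields $\sum_t\alpha_{e_t}k_t\le O(\beta(2r)^k)\cdot\cost(\OPT)$, where the factor $\beta$ absorbs the within-bucket frequency spread. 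Multiplying by $(1+r)$ to convert $\sum_t k_t$ into $\cost(\MAE)$ (access plus moving cost) produces the claimed $4\beta(2r)^{k+1}\cost(\OPT)+(2r)^k n^2$.

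The main obstacle will be making the case analysis on $i^{*}$ rigorous, specifically the bound on $r\sum_{j\in P_t}\alpha_j$: a naive estimate gives $r(k_t-1)(2r)^k$, which is far too weak to give any nontrivial competitive ratio. The geometric ratio of exactly $2r$ is engineered so that the increase from passes by higher-bucket elements telescopes against the decrease $(k_t-1)\alpha_{e_t}$, while passes from equal or lighter buckets contribute only $O(r\alpha_{e_t}k_t)$. A secondary delicate issue is the tail bucket $B_k$: its elements essentially never serve OPT but can still appear in $P_t$; their contribution must be absorbed into the additive $(2r)^k n^2$ term using the frequency bound $f_j<1/n^2$ and $|B_k|\le n$.
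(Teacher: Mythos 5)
Your overall architecture matches the paper's: the potential $\Phi(t)=\sum_j\alpha_j\pi_t(j)$ with geometric weights $(2r)^{k-i}$ assigned by frequency bucket, a lower bound on $\cost(\OPT)$ in terms of the bucket sizes, and special treatment of the tail bucket via $f_j\le 1/n^2$. However, your key charging step is inverted, and as written the argument is circular. In the paper's Lemma~\ref{l:4}, when $e_t$ lies in bucket $i$, it is the \emph{strictly lighter} buckets (weights at most $\alpha_{e_t}/(2r)$) whose contribution is absorbed by the decrease: their total displacement is at most $rk_t$, so they add at most $\frac{\alpha_{e_t}}{2r}\cdot rk_t=\frac{\alpha_{e_t}k_t}{2}$, which together with $k_t(1-\alpha_{e_t})$ is nonpositive because $\alpha_{e_t}\ge 2r$. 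The \emph{heavier-or-equal} buckets cannot be charged against the decrease at all --- a single passed element of weight $(2r)^k$ contributes up to $r(2r)^k$, which dwarfs $(k_t-1)\alpha_{e_t}$ whenever $e_t$ sits in a light bucket and $k_t$ is small --- and are instead bounded by the static quantity $r(2r)^{k-1}|\cF_i|$, which is then compared to $\cost(\OPT)$ through Lemma~\ref{l:2}. You propose the opposite assignment: telescoping the heavy buckets against the decrease (which fails for the reason just given) and leaving a surviving term of order $r\alpha_{e_t}k_t$ from the light buckets.

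That surviving term is the fatal point: your amortized inequality $k_t+\Phi(t)-\Phi(t-1)\le O(r\alpha_{e_t}k_t)$ still contains the online access cost $k_t$ on the right-hand side, so summing gives $\sum_t k_t\le O(r)\sum_t\alpha_{e_t}k_t+\Phi(0)$, which is vacuous (the right side dominates the left whenever $\alpha_{e_t}\ge1$), and the subsequent claim $\sum_t\alpha_{e_t}k_t\le O(\beta(2r)^k)\cdot\cost(\OPT)$ has no justification, since $k_t$ is exactly the quantity being bounded. The repair is the paper's: the per-request amortized cost must be bounded by a quantity depending only on the bucket of $e_t$ and not on $k_t$, namely $(2r)^k|\cF_i|$, which then pairs with the lower bound $\cost(\OPT)\ge\frac m2\sum_i|E_i|\,|\cF_i|\,f_{\min}^i$ of Lemma~\ref{l:2}. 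A smaller slip: the tail-bucket requests are handled by counting --- they occur at most $m/n$ times, so their total amortized cost $O((2r)^k n\cdot m/n)$ is absorbed into the multiplicative $O((2r)^{k+1})\cdot\cost(\OPT)$ term via $m\le\cost(\OPT)$, not into the additive $(2r)^kn^2$ term, which comes solely from $\Phi(0)$; an additive term independent of $m$ cannot absorb a contribution that grows with $m$.
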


Note that Theorem~\ref{t:move_all_static} follows from Lemma~\ref{l:main} by balancing the values of $\beta$ and $(2r)^{k+1}$. It is easy to verify that by setting $\beta = 2^{\sqrt{8 \log n \cdot  \log 2r }}$, we obtain a competitive ratio at most $ \beta^2  \leq 2^{8 \sqrt{\log n \log r}} $







\paragraph{Notation and Definitions.} Let $f_j\in [0,1]$ denote the \emph{covering frequency}, which is the total fraction of requests served by the optimal permutation with element $j$. For convenience, we reorder the elements such that $f_1 \geq f_2 \geq \ldots \geq f_n$. As a result,  $\cost(\OPT) = m\cdot \sum_{j=1}^n j \cdot f_j$.

We partition the elements of $U$ into $k+1$ \textit{blocks}, as follows. The last block, $E_{k+1}$ contains all elements $j$ with $f_j \leq 1/n^2$; intuitively those are the least important elements. The block $E_i$ for $1 \leq i\leq k$ contains all elements with frequencies in the interval $[f_1/\beta^{i-1},f_1/\beta^{i})$. Note that in worst case, $k = \lceil 2 \log n / \log \beta \rceil$; we need that $f_1/\beta^{k} \leq 1/n^2$, which is equivalent to $k \geq f_1 \cdot \frac{2 \log n}{\log \beta}  $. 


    
    



\paragraph{Lower Bound on optimal cost.} Using this structure, we can express a neat lower bound on the cost of the optimal static permutation, which is formally stated in the following lemma. Lemma~\ref{l:2} provides a lower bound on $\cost(\OPT)$ using  the first $k$ blocks. For the rest of this section, let $\cF_i = \cup_{ j = 1}^i E_j $ the set of all elements in the first $i$ blocks, for $1\leq i \leq k$.  Let also $f_{\max}^i ,f_{\min}^i$ denote the maximum and minimum covering frequencies in the set $E_i$.

\begin{lemma}\label{l:2}
For any request sequence the optimal cost is at least
$$\cost(\OPT) \geq \frac{m}{2} \sum_{i=1}^k |E_i|\cdot |\cF_i| \cdot f_{\min}^i .$$
\end{lemma}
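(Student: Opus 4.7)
The plan is to start from the closed-form expression $\cost(\OPT) = m \sum_{j=1}^n j \cdot f_j$, which was recorded just before the lemma, and lower-bound it by partitioning the sum over positions according to the blocks $E_1, \ldots, E_{k+1}$. The reindexing $f_1 \geq f_2 \geq \cdots \geq f_n$ combined with the fact that the optimal static permutation sorts elements by decreasing covering frequency (a standard exchange argument: swapping an earlier low-frequency element with a later high-frequency one cannot increase the cost) means the elements of $E_i$ occupy exactly positions $|\cF_{i-1}|+1, \ldots, |\cF_i|$ in the optimal ordering.

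Grouping by block and replacing every frequency $f_j$ inside $E_i$ by the smaller quantity $f_{\min}^i$ gives
$$ \cost(\OPT) \;=\; m \sum_{i=1}^{k+1} \sum_{j \in E_i} j \cdot f_j \;\geq\; m \sum_{i=1}^{k+1} f_{\min}^i \sum_{j=|\cF_{i-1}|+1}^{|\cF_i|} j. $$
The inner arithmetic progression evaluates to $\frac{(|\cF_{i-1}|+1+|\cF_i|)|E_i|}{2}$, which is at least $\frac{|\cF_i| \cdot |E_i|}{2}$ because $|\cF_{i-1}|+1 \geq 1 \geq 0$. Substituting this lower bound and discarding the nonnegative $i=k+1$ term yields $\cost(\OPT) \geq \frac{m}{2} \sum_{i=1}^{k} |E_i| \cdot |\cF_i| \cdot f_{\min}^i$, as claimed.

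There is no real obstacle here: the only point that deserves a brief justification is the observation that OPT places elements in decreasing order of covering frequency, after which the argument is purely a partition of the arithmetic sum $\sum_j j \cdot f_j$ by block and a crude lower bound on each piece. The reason this coarse bound is still useful downstream is that it decouples the block sizes $|E_i|$, the cumulative prefix $|\cF_i|$, and the block's minimum frequency $f_{\min}^i$, which are precisely the quantities that the potential-based analysis of $\MAE$ needs to match against the algorithm's amortized cost.
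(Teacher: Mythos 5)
Your proposal is correct and follows essentially the same route as the paper: both start from $\cost(\OPT) = m\sum_{j} j\cdot f_j$, restrict to the first $k$ blocks, replace $f_j$ by $f_{\min}^i$ within each block, and evaluate the arithmetic sum of the positions $|\cF_{i-1}|+1,\dots,|\cF_i|$ to get the factor $\tfrac{1}{2}|E_i|\,|\cF_i|$. The only cosmetic difference is that you make explicit the exchange argument justifying that $\OPT$ orders elements by decreasing covering frequency, which the paper leaves implicit.
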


\begin{proof}
Using the definitions above, $\OPT$ uses element $j$ exactly $ m \cdot f_j$ times for a total cost $ j\cdot m  \cdot f_j$. To account for all elements of the same bucket $E_i$ together, we underestimate this cost and for an element $j \in E_i$ we charge $\OPT$ only for $m \cdot f_{\min}^i \leq m \cdot f_j$ requests. We get that:
\begin{eqnarray*}
\cost(\OPT) &=& m \sum_{j=1}^n f_j \cdot j \geq m \sum_{i=1}^k \sum_{j \in E_i}f_j \cdot j \geq m \sum_{i=1}^k f_{\min}^i \sum_{j \in E_i}j
 \\
&=& m \sum_{i=1}^k f_{\min}^i \sum_{j =1}^{|E_i|}(j + |\cF_{i-1}|)= m \sum_{i=1}^k f_{\min}^i \left(|E_i||\cF_{i-1}| + \frac{|E_i|(|E_i| + 1)}{2} \right )\\
&\geq& \frac{m}{2} \sum_{i=1}^k f_{\min}^i \cdot |E_i|\cdot(|\cF_{i-1}| + |E_i|) = \frac{m}{2} \sum_{i=1}^k f_{\min}^i \cdot |E_i|\cdot |\cF_{i}| \qedhere
\end{eqnarray*}
\end{proof}

\paragraph{Upper Bound on cost of $\MAE$.} Using the blocks, we also obtain an upper bound on the cost of Move-All-Equally. This is formally stated in Lemma~\ref{l:3}.

\begin{lemma}\label{l:3}
For any request sequence the cost of MAE algorithm can be upper bounded as follows:
$$\cost(\MAE) \leq (2r)^{k+1} \cdot m  \sum_{i=1}^k  |\cF_{i}| |E_i| \cdot f_{\max}^i  + 2(2r)^{k+1}\cdot m + (2r)^{k}\cdot n^2.$$
\end{lemma}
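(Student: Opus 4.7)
The plan is an amortized analysis with a block-weighted potential function. Define weights
\[
\alpha_j = (2r)^{k+1-i} \text{ for } j \in E_i,\ i \in [k], \qquad \alpha_j = 0 \text{ for } j \in E_{k+1},
\]
and set $\Phi(t) = \sum_{j \in U} \alpha_j \cdot \pi_t(j)$. The weights decay geometrically across blocks, reflecting the block hierarchy, with heaviest weight $(2r)^k$ attained on $E_1$. The potential is nonnegative and $\Phi(0) \leq (2r)^k n^2$, which will account for the last additive term of the lemma. Since $\cost(\MAE) \leq \sum_{t=1}^m \bigl( \cost_{\MAE}(t) + \Delta \Phi(t) \bigr) + \Phi(0)$, the task reduces to controlling the per-round amortized cost.

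For the per-round analysis, fix round $t$ with access cost $k_t$ and let $e_t \in E_i$ be the element $\OPT$ uses to serve $S_t$. Using that every element of $S_t$ moves forward by $k_t - 1$ and every non-$S_t$ element moves back by at most $r$ positions (and only if passed by some $S_t$ element), the change in potential decomposes as
\[
\Delta \Phi(t) \leq -(k_t - 1)\alpha_{e_t} + r \sum_{j \notin S_t,\ j \text{ passed}} \alpha_j,
\]
where the other $S_t$ elements' contribution is dropped since it is non-positive. For $i \in [k]$ we have $\alpha_{e_t} = (2r)^{k+1-i} \geq 2r$, so the total cost $(r+1)k_t$ plus $-(k_t-1)\alpha_{e_t}$ is bounded by a constant $2r$. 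For the passed-element sum, I would partition the contributors into a "heavy" part in $E_1 \cup \cdots \cup E_{i-1}$ (at most $|\cF_{i-1}|$ elements, each with weight at most $(2r)^k$) and a "light" part in $E_i \cup \cdots \cup E_k$ (weight at most $\alpha_{e_t}$, count at most $|\cF_i|$). This yields a heavy contribution of at most $r |\cF_{i-1}|(2r)^k$ and a light contribution of at most $r \alpha_{e_t} |\cF_i|$, giving an overall per-round bound $\cost_{\MAE}(t) + \Delta\Phi(t) \lesssim (2r)^{k+1} |\cF_i|$ for $i \in [k]$. For rounds with $e_t \in E_{k+1}$, no drop is available and the amortized cost may be as large as $n(2r)^{k+1}$; however, $f_{e_t} \leq 1/n^2$ bounds the number of such rounds by $m|E_{k+1}|/n^2 \leq m/n$, so their total contribution is at most $(2r)^{k+1} m$.

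Finally, I would sum the amortized bounds over rounds, noting that the number of rounds with $e_t \in E_i$ is $m \sum_{j \in E_i} f_j \leq m |E_i| f_{\max}^i$. Aggregating block by block yields
\[
\sum_{i=1}^k m|E_i| f_{\max}^i \cdot (2r)^{k+1} |\cF_i| = (2r)^{k+1} m \sum_{i=1}^k |\cF_i| |E_i| f_{\max}^i,
\]
matching the main term of the lemma. The per-round constant slack summed over $m$ rounds together with the $E_{k+1}$ contribution yields $2(2r)^{k+1} m$, and $\Phi(0) \leq (2r)^k n^2$ supplies the last additive term. The main obstacle will be the per-round amortized bound, specifically the careful accounting of heavy non-$S_t$ elements from earlier blocks that may lie ahead of $S_t$ in $\MAE$'s permutation without any structural guarantee: the geometric ratio $2r$ between consecutive block weights must simultaneously absorb the factor-$r$ loss from moving all $r$ elements of $S_t$ and cover the maximum backward displacement per element, which is what pins down the specific form $\alpha_j = (2r)^{k+1-i}$.
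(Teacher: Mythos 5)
Your overall architecture is the same as the paper's (a potential $\Phi(t)=\sum_j \alpha_j\pi_t(j)$ with geometrically decaying block weights, a per-round amortized bound of order $(2r)^{k+1}|\cF_i|$ when $e_t\in E_i$, a frequency-counting argument for $E_{k+1}$, and $\Phi(0)\le (2r)^k n^2$ for the last term; your extra factor $2r$ in the weights just compensates for amortizing the full cost $(r+1)k_t$ rather than only the access cost $k_t$ as the paper does). However, there is a genuine gap in the key per-round step. For the ``light'' part you claim that the passed elements in $E_i\cup\cdots\cup E_k$ number at most $|\cF_i|$. That is false: $|E_i\cup\cdots\cup E_k|$ is $n$ minus the sizes of the earlier blocks and of $E_{k+1}$, and there is no structural reason why the elements sitting ahead of $S_t$ in $\MAE$'s permutation should avoid the later blocks (e.g.\ $|E_{i+1}|$ alone can be $\Theta(n)$ while $|\cF_i|=O(1)$). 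If you instead try to salvage the light part via the total-displacement bound $\sum_j(\pi_t(j)-\pi_{t-1}(j))\le r(k_t-1)$, the contribution of $E_i$ itself becomes $\alpha_{e_t}\cdot r(k_t-1)$, which the negative drift $-(k_t-1)\alpha_{e_t}$ cannot absorb (you lose a factor $r$), so the per-round bound degenerates to something growing with $k_t$, up to $n$.

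The fix is to move the block boundary by one: count $E_1\cup\cdots\cup E_i$ (at most $|\cF_i|$ elements, each displaced by at most $r$, each of weight at most $(2r)^k$), and apply the displacement argument only to the blocks \emph{strictly} after $i$, whose weights are at most $\alpha_{e_t}/(2r)$; there the factor $r$ from the total displacement is absorbed by the weight ratio $2r$, leaving $\alpha_{e_t}(k_t-1)/2$, which the drift term cancels. This is exactly the paper's decomposition into $\cF_i$ versus $U\setminus\cF_i$ in Lemma~\ref{l:4}. A smaller imprecision: $(r+1)k_t-(k_t-1)\alpha_{e_t}$ is bounded by $\alpha_{e_t}\le (2r)^k$, not by $2r$; this does not affect the final bound since $(2r)^k\le (2r)^{k+1}|\cF_i|$, but the claim as written is wrong for $i<k$.
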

\noindent Before proceeding to the proof of Lemma~\ref{l:3}, which is quite technically involved, we show how Lemma~\ref{l:main} follows from Lemmata~\ref{l:3} and~\ref{l:2}.


\begin{proof}[Proof of Lemma~\ref{l:main}]
From Lemma~\ref{l:3} we have that
\begin{align*}
\cost(\MAE) &\leq (2r)^{k+1} \cdot \underbrace{m  \sum_{i=1}^k  |\cF_{i}| |E_i| \cdot f_{\max}^i}_{\leq \beta \cdot \cost(\OPT)}  +  \underbrace{ \vphantom{\sum_{i=1}^k  |\cF_{i}| |E_i| } 2(2r)^{k+1} \cdot m}_{\leq 2(2r)^{k+1} \cdot \cost(\OPT)} + (2r)^{k}\cdot n^2\\
 & \leq 4\beta (2r)^{k+1} \cdot \cost(\OPT) + (2r)^{k}\cdot n^2,
\end{align*}
where in the inequality we used the lower bound on $\cost(\OPT)$ from Lemma~\ref{l:2} and that $f_{\max}^i = f_{\min}^i \cdot \beta$ for all $i$.
\end{proof}

\subsection*{Proof of Lemma~\ref{l:3}}\label{s:pl3}
It remains to prove Lemma~\ref{l:3} which gives the upper bound on the cost of $\MAE$ algorithm. 
The proof lies on the right selections of the coefficients $\alpha_j$ in the potential function $\Phi(t)$:

$$ \Phi(t) = \sum_{j=1}^n \alpha_j \cdot \pi_t(j),$$
where $\pi_t(j)$ is the position of element $j$ at round $t$ and $\alpha_j$ are some non-negative coefficients. More precisely, if $j \in E_i$ then $\alpha_j  = (2r)^{k - i}$ for $i =1 , \ldots , k$ and $\alpha_j  = 0$ if $j \in E_{k+1}$.

At time $t$, let $k_t$ denote the access cost of $\MAE$ and let $e_t$ be the element used by $\OPT$ to serve request $S_t$. Using the coefficients mentioned above, we can break the analysis into two different types of requests: (i) requests served by $\OPT$ using an element $e_t \in E_i$ for $1 \leq i \leq k$ and (ii) requests served by $\OPT$ using $e_t \in E_{k+1}$. At a high-level the first case is the one where the choice of coefficients is crucial; for the second, we show that the frequencies are so ``small'', such that even if $\MAE$ incurs an access cost of $n$, this does not affect the bound of Theorem~\ref{t:move_all_static}.

We start with the first type of requests. We show the following lemma. 

\begin{lemma}\label{l:4}
At time $t$, if $e_t \in E_i$, for $1 \leq i \leq k$, then we have that 
$$\acost(\MAE(t))+ \Phi(t) - \Phi(t-1) \leq (2r)^k \cdot |\cF_{i} |$$
\end{lemma}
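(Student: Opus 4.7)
The plan is to expand
$\Phi(t) - \Phi(t-1) = \sum_{j=1}^n \alpha_j (\pi_t(j) - \pi_{t-1}(j))$
and split contributions between $S_t$ and its complement. By the $\MAE$ update rule every element $s \in S_t$ has its position decreased by exactly $k_t - 1$, while each $j \notin S_t$ shifts backwards by some $\Delta_j := \pi_t(j) - \pi_{t-1}(j) \geq 0$. Two elementary combinatorial facts will drive the bookkeeping: first, $\Delta_j \leq r$ since at most $|S_t| = r$ elements can overtake a single $j$; second, $\sum_{j \notin S_t} \Delta_j = r(k_t - 1)$ by conservation of total position across the two permutations.

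From the $S_t$-sum I would retain only the credit $-\alpha_{e_t}(k_t - 1) = -(2r)^{k-i}(k_t - 1)$ contributed by $e_t$, discarding the remaining non-positive terms coming from the other $s \in S_t$. The non-$S_t$ sum is then split according to whether $j \in \cF_i$ or $j \notin \cF_i$. For $j \in \cF_i \setminus S_t$ the bounds $\alpha_j \leq (2r)^{k-1}$, $\Delta_j \leq r$, and $|\cF_i \setminus S_t| \leq |\cF_i|$ yield a contribution of at most $r(2r)^{k-1}|\cF_i| = (2r)^k |\cF_i|/2$. For $j \notin \cF_i$ the uniform bound $\alpha_j \leq (2r)^{k-i-1}$ combined with $\sum \Delta_j \leq r(k_t - 1)$ yields a contribution of at most $(2r)^{k-i-1} \cdot r(k_t - 1) = (2r)^{k-i}(k_t-1)/2$; note that when $i = k$ this contribution is simply $0$ since every $j \notin \cF_k$ lies in $E_{k+1}$ and has $\alpha_j = 0$.

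Combining everything gives $k_t + \Phi(t) - \Phi(t-1) \leq k_t - \frac{1}{2}(2r)^{k-i}(k_t - 1) + \frac{1}{2}(2r)^k|\cF_i|$ for $i < k$, and $k_t - (k_t - 1) + \frac{1}{2}(2r)^k|\cF_k| = 1 + \frac{1}{2}(2r)^k|\cF_k|$ for $i = k$. In the first case the inequality $(2r)^{k-i} \geq 2$ makes the subtracted term at least $k_t - 1$, so the right-hand side is again at most $1 + \frac{1}{2}(2r)^k|\cF_i|$; both regimes thus yield the same bound, which is at most $(2r)^k|\cF_i|$ since $(2r)^k|\cF_i| \geq 2$. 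The main subtlety I anticipate is being careful with the two shift identities for $\Delta_j$ (especially the conservation $\sum \Delta_j = r(k_t - 1)$) and ensuring that the geometric decay of the $\alpha_j$ across blocks is exploited tightly enough that the error from blocks of lower importance than $E_i$ is absorbed by the credit of $e_t$ alone.
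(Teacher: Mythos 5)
Your proof is correct and follows essentially the same route as the paper's: credit the forward move of $e_t$, charge each element of $\cF_i$ a backward shift of at most $r$ with coefficient at most $(2r)^{k-1}$, and absorb the elements outside $\cF_i$ using the conservation identity $\sum_{j\notin S_t}\Delta_j = r(k_t-1)$ together with the factor-$2r$ geometric decay of the coefficients, so that their contribution is at most half of $e_t$'s credit. The only differences are cosmetic bookkeeping (you track $k_t-1$ rather than $k_t$ and carry an additive $+1$ that is then reabsorbed), so the argument matches the paper's.
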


For the second type of requests we have the following lemma.
\begin{lemma}\label{l:5}
The total amortized cost of $\MAE$ algorithm for all requests such that $e_t \in E_{k+1}$ is at most $\lrp{(2r)^k+1} \cdot m$.
\end{lemma}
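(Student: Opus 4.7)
The plan is to exploit two facts: (a) very few requests have $e_t \in E_{k+1}$, and (b) for these requests, a crude per-step bound on $\acost(\MAE(t)) + \Phi(t)-\Phi(t-1)$ is already affordable. Unlike Lemma~\ref{l:4}, here we \emph{lose} the cancellation offered by the term $-\alpha_{e_t}(k_t-1)$ since $\alpha_{e_t}=0$, so we must pay for the full positive potential increase; fortunately the scarcity of such requests compensates.

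First I would count the bad requests. Every $j \in E_{k+1}$ satisfies $f_j \leq 1/n^2$ by the definition of the last block, and $|E_{k+1}| \leq n$. Therefore the total number of steps $t$ with $e_t \in E_{k+1}$ is
\[
M \;:=\; \sum_{j \in E_{k+1}} m\, f_j \;\leq\; m \cdot \frac{|E_{k+1}|}{n^{2}} \;\leq\; \frac{m}{n}.
\]
Next I would bound the per-step amortized access cost. Observe that $\MAE$ shifts every element of $S_t$ forward by exactly $k_t-1$ positions, and by conservation of positions the elements of $U\setminus S_t$ shift backward by a total of $r(k_t-1)$ positions, each summand being non-negative. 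Using $\alpha_j \geq 0$ everywhere together with $\max_j \alpha_j = (2r)^{k-1}$ (attained on $E_1$),
\[
\Phi(t)-\Phi(t-1) \;=\; -(k_t-1)\sum_{j\in S_t}\alpha_j \;+\; \sum_{j\notin S_t}\alpha_j\,(\pi_t(j)-\pi_{t-1}(j)) \;\leq\; (2r)^{k-1}\cdot r(k_t-1) \;=\; \tfrac{(2r)^k}{2}(k_t-1).
\]
Hence
\[
\acost(\MAE(t)) + \Phi(t)-\Phi(t-1) \;\leq\; k_t + \tfrac{(2r)^k}{2}(k_t-1) \;\leq\; k_t\Big(1 + \tfrac{(2r)^k}{2}\Big).
\]

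Finally I would sum over the $M \leq m/n$ bad requests, using the trivial bound $k_t \leq n$, to obtain
\[
\sum_{t\,:\, e_t \in E_{k+1}} \Big(\acost(\MAE(t)) + \Phi(t)-\Phi(t-1)\Big) \;\leq\; M \cdot n \cdot \Big(1 + \tfrac{(2r)^k}{2}\Big) \;\leq\; m\cdot\big((2r)^k + 1\big),
\]
which is exactly the claimed bound. The only mildly subtle step is the second one: one must recognize that $\sum_{j \notin S_t}\alpha_j(\pi_t(j)-\pi_{t-1}(j))$ can be pulled out as $\max_j \alpha_j$ times the \emph{total} backward displacement $r(k_t-1)$, rather than bounded by the much larger $\sum_j \alpha_j$, and this is precisely what makes the $e_t \in E_{k+1}$ case affordable despite the loss of the cancellation that drove Lemma~\ref{l:4}.
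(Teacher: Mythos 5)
Your proposal is correct and follows essentially the same route as the paper: bound the number of steps with $e_t \in E_{k+1}$ by $m\,|E_{k+1}|/n^2 \leq m/n$ via the frequency threshold, and bound the per-step amortized cost by roughly $(1+(2r)^k/2)\,n$. The only cosmetic difference is that you bound the potential increase by $\max_j \alpha_j$ times the total backward displacement $r(k_t-1)$, whereas the paper uses $\sum_j \alpha_j$ times the per-element displacement bound $r$; both give the same quantity $(2r)^{k-1} n r$.
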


The proofs of those Lemmas are at the end of this section. We now continue the proof of Lemma~\ref{l:3}.

\begin{proof}[Proof of Lemma~\ref{l:3}]
We upper bound the total access cost of $\MAE$, that is $\sum_{t=1}^m k_t$. Clearly, by construction of $\MAE$, the total cost is at most $(r+1) \cdot \sum_{t=1}^m k_t $. 
We have that
\begingroup
\allowdisplaybreaks
\begin{align*}
\sum_{t = 1}^m k_t + \Phi(t) - \Phi(t-1) =&
\sum_{i=1}^k \lrp{ \sum_{t: e_t \in E_i}k_t + \Phi(t) - \Phi(t-1)
+ \sum_{t: e_t \in E_{k+1}}k_t + \Phi(t) - \Phi(t-1)}\\
& \leq \sum_{i=1}^k \sum_{t: e_t \in E_i}(2r)^k \cdot |\cF_{i} | +  \lrp{(2r)^k+1} \cdot m\\
& = \sum_{i=1}^k (2r)^k \cdot |\cF_{i}| \cdot m  \cdot \sum_{j \in E_i} f_j^t  +  \lrp{(2r)^k+1} \cdot m\\
& \leq \sum_{i=1}^k (2r)^k \cdot |\cF_{i}| \cdot m  \cdot \sum_{j \in E_i} f_{\max}^i   +  \lrp{(2r)^k+1} \cdot m\\
& = \sum_{i=1}^k (2r)^k \cdot |\cF_{i}| |E_{i}| \cdot f_{\max}^i \cdot m  +  \lrp{(2r)^k+1} \cdot m,
\end{align*}
\endgroup
where the first inequality comes from Lemmata~\ref{l:4},~\ref{l:5}. Using $\Phi(0) \leq (2r)^{k-1}n^2$ we get
\begingroup
\allowdisplaybreaks
\begin{align*}
\cost(\MAE) & \leq 2r \cdot \sum_{t=1}^m k_t \leq 2r \lrp{ \sum_{i=1}^k (2r)^k \cdot |\cF_{i}| |E_{i}| \cdot f_{\max}^i \cdot m + \lrp{(2r)^k + 1} \cdot m +  (2r)^{k-1}n^2} \\
   &\leq (2r)^{k+1}\sum_{i=1}^k  \cdot |\cF_{i}| |E_{i}| \cdot f_{\max}^i \cdot m + 2(2r)^{k+1}\cdot m + (2r)^{k}\cdot n^2 \qedhere
\end{align*}
\endgroup
\end{proof}

We conclude the Section with the proofs of lemmata~\ref{l:4},~\ref{l:5}, which were omitted earlier.

\begin{proof}[Proof of Lemma~\ref{l:4}]
At time $t$, the access cost of $\MAE$ is $k_t$. We have that
\begin{eqnarray*}
k_t + \Phi(t) - \Phi(t-1) 
&=& k_t - \alpha_{e_t} \cdot k_t + \sum_{j \neq e_t} \alpha_j ( \pi_t(j) -\pi_{t-1}(j) ) \\
&\leq& k_t \cdot (1 - \alpha_{e_t} ) + (2r)^{k-1} \cdot |\cF_{i}| \cdot r + \sum_{j \in U\setminus \cF_{i}} \alpha_j ( \pi_t(j) - \pi_{t-1}(j) ),
\end{eqnarray*}
where the inequality follows from the fact that each element in $\cF_{i}$ can increase its position by at most $r$ and that the maximum coefficient $\alpha_j$ is at most $(2r)^{k-1}$. We complete the proof by showing that 
$$ k_t \cdot (1 - \alpha_{e_t} ) + \sum_{j \in U\setminus \cF_{i}} \alpha_j ( \pi_t(j) - \pi_{t-1}(j) )
\leq 0 $$
If $e_t \in E_k$ then $\alpha_{e_t}=1$ and $\alpha_j = 0$ for all $j \in U \setminus \cF_k$, thus the left hand side equals 0 and the inequality holds. It remains to analyze the case where $e_t \in E_i$ and $i < k$.
\begin{align*}
k_t \cdot (1 - \alpha_{e_t} ) + \sum_{j \in U\setminus \cF_{i}} \alpha_j ( \pi_t(j) - \pi_{t-1}(j) ) &\leq  k_t \cdot (1 - \alpha_{e_t} ) + 
\sum_{j \in U\setminus \cF_{i}} \frac{\alpha_{e_t}}{2r} \cdot |\pi_t(j) - \pi_{t-1}(j) |\\
&\leq k_t \cdot (1 - \alpha_{e_t} )   + \frac{\alpha_{e_t}}{2r} \cdot r \cdot k_t \\
&= k_t \cdot( 1 -\alpha_{e_t} + \frac{\alpha_{e_t}}{2}) \leq 0.
\end{align*}
\noindent The last inequality is due to the fact that $\alpha_{e_t} \geq r$, thus $1- \alpha_{e_t}/2$ is negative if $r \geq 2$.
\end{proof}

\begin{proof}[Proof of Lemma~\ref{l:5}]
We sum the total amortized cost over all time steps $t$ such that $e_t \in E_{k+1}$. We have that
\begin{align*}
\sum_{t: e_t \in E_{k+1}}k_t + \Phi(t) - \Phi(t-1) &\leq \sum_{t: e_t \in E_{k+1}} n + (2r)^{k-1} \cdot n \cdot r \leq \lrp{(2r)^k + 1} \cdot n  \sum_{t: e_t \in E_{k+1}} 1\\
&\leq \lrp{(2r)^k + 1} \cdot n \cdot m \cdot
\sum_{ j \in E_{k+1}}f_j \leq \lrp{(2r)^k + 1} \cdot n \cdot m \cdot
\frac{|E_{k+1}|}{n^2}\\
&\leq \lrp{(2r)^k + 1} \cdot m \qedhere
\end{align*}
\end{proof}

\end{onlyapp}

\begin{onlymain}

\section{Dynamic Online Min-Sum Set Cover }
\label{sec:dynamic}

In this section, we turn our attention to the dynamic version of  online MSSC. 
In online dynamic MSSC, the optimal solution maintains a trajectory of permutations $\pi_0^\ast, \pi_1^\ast, \ldots, \pi_t^\ast, \ldots$ and use permutation $\pi^\ast_t$ to serve each request $S_t$. The cost of the optimal dynamic solution is $\OPT_\text{dynamic} = \sum_{t} (\pi^\ast_t(S_t) + d_{\text{KT}}(\pi^\ast_{t-1}, \pi^\ast_t))$, where $\{ \pi^\ast_t \}_{t}$ denotes the  optimal permutation trajectory for the request sequence that minimizes the total access and moving cost.

We remark that the ratio between the optimal static solution and the optimal dynamic solution can be as high as $\Omega(n)$. For example, in the sequence of requests $\{1\}^{b}\{2\}^{b}\ldots \{n\}^{b}$, the optimal static solution pays $\Theta(n^2 b)$, whereas the optimal dynamic solution pays $\Theta(n^2 + n \cdot b)$ by moving the element that covers the next $n \cdot b$ requests to the first position and then incurring access cost $1$. The above example also reveals that although Algorithm~\ref{alg:lazy-det-MWU} is $\Theta(r)$-competitive against the optimal static solution, its worst-case ratio against a dynamic solution can be $\Omega(n)$.

\paragraph{MAE Algorithm.} As a first study of the dynamic problem, we investigate the competitive ratio of \textit{Move-All-Equally} (MAE) algorithm from Section~\ref{sec:algorithm}. We begin with an upper bound.

\repeattheorem{thm:mae-dyn-ub}

\noindent The approach for proving Theorem~\ref{thm:mae-dyn-ub} is generalizing that exhibited in Section~\ref{sec:algorithm} for the static case. We use a generalized potential function $\Phi(t) = \sum_{j =1}^n \alpha_j^t \cdot \pi_t(j)$; i.e, the multipliers $\alpha_j$ may change over time so as to capture the moves of $\OPT_\text{dynamic}$. To select coefficients $\alpha_j^t$ we apply a two-level approach. We observe that there is always a 2-approximate optimal solution that moves an element of $S_t$ to the front (similar to classic MTF in list update). We call this $\MTF_{\OPT}$\,. We compare the permutation of the online algorithm with the permutation maintained by this algorithm; at each time, elements the beginning of the offline permutation are considered to be ``important'' and have higher coefficients $\alpha_j^t$. The formal proof is in Appendix~\ref{app:dynamic}.

Next, we show an almost matching lower bound.

\repeattheorem{thm:mae-dyn-lb}

\paragraph{Sketch of the Construction.}
 The lower bound is based on a complicated adversarial request sequence; we sketch the main ideas. Let $k$ be an integer. During a \textit{phase} we ensure that:

\begin{enumerate}[(i)]
    \item  There are $2k$ ``important'' elements used by $\OPT$; we call them $e_1,\dotsc,e_{2k}$. In the beginning of the phase, those elements are ordered in the start of the optimal permutation $\pi^\ast$, i.e., $\pi^{\ast}[e_j] = j$. The phase contains $k$ consecutive requests to each of them, in order; thus the total number of requests is $\approx 2k^2$. $\OPT$ brings each element $e_j$ at the front and uses it for $k$ consecutive requests; thus the access cost of $\OPT$ is $2k^2$ (1 per request) and the total movement cost of $\OPT$ of order $\Theta(k^2)$. Over a phase of $2k^2$ requests, $\OPT$ incurs an overall cost $\Theta(k^2)$, i.e., an average of $O(1)$ per request.
    \item The first $k+r-2$ positions of the online permutation will be always occupied by the same set of ``not important'' elements; at each step the $r-2$ last of them will be part of the request set and MAE will move them to the front. Thus the access cost will always be  $k+1$ and the total cost more than $(r+1) \cdot k$.
\end{enumerate}
The two properties above are enough to provide a lower bound $\Omega(r \cdot k)$; the optimal cost is $O(1)$ per request and the online cost $\Omega(r \cdot k)$. The goal of an adversary is to construct a request sequence with those two properties for the largest value of $k$ possible. 

The surprising part is that although MAE moves all requested elements towards the beginning of the permutation, it never manages to bring any of the ``important'' elements in a position smaller than $r+k-2$. While the full instance is complex and described in Appendix~\ref{app:dynamic}, at a high-level, we make sure that whenever a subsequence of $k$ consecutive requests including element $e_j$ begins, $e_j$ is at the end of the online permutation, i.e., $\pi_t[e_j] = n$. Thus, even after $k$ consecutive requests where $\MAE$ moves it forward by distance $k$, it moves by $k^2$ positions; by making sure that $n-k^2> r+k-2$ (which is true for some $k = \Omega(\sqrt{n})$), we can make sure that $e_j$ does not reach the first $r+k-2$ positions of the online permutation.

\end{onlymain}

\begin{onlyapp}

\section{Deferred Proofs of Section~\ref{sec:dynamic}}
\label{app:dynamic}
In this section we present the full proofs of Theorems~\ref{thm:mae-dyn-ub} and~\ref{thm:mae-dyn-lb}. 

\paragraph{Upper bound.} We start with the upper bound on the competitive ratio of the $\MAE$ algorithm. Recall that we use the generalized potential function $\Phi(t) = \sum_{j =1}^n \alpha_j^t \cdot \pi_t(j)$ for the dynamic case, where  the multipliers $\alpha_j$ change over time so as to capture the moves of $\OPT_\text{dynamic}$. 

\repeattheorem{thm:mae-dyn-ub}

Given the solution of $\OPT_\text{dynamic}$, we construct the solution $\textit{Move-To-Front}_{ \, \text{OPT}}$ $(\text{MTF}_{ \OPT})$, that at each round $t$ moves $e_t$ (the element $\OPT_\text{dynamic}$ uses to cover $S_t$) in the first position of the permutation. We compare the cost of $\MAE$, with the cost of $\text{MTF}_{\OPT}$, which is just a $2$-approximation of $\OPT_\text{dynamic}$ (for the same reason than MTF is 2-competitive for list update -- see Lemma~\ref{l:move_to_front} for a proof). More precisely, $\alpha_j^t=1$ if the element $j$ is in one of the first $c$ positions of $\text{MTF}_{\OPT}$'s permutation at round $t$ and $\alpha_j^t=0$ otherwise (the exact value of $c$ will be determined later).

\noindent Recall that the access cost of $\MAE$ at round $t$ is $k_t$. We have that at each round $t \geq 1$,
\begin{eqnarray}\label{eq:technical}
k_t + \Phi(t) - \Phi(t-1) &=& k_t + \sum_{j=1}^n \alpha_j^t \cdot \pi_t(j) -
\sum_{j=1}^n \alpha_j^{t-1} \cdot \pi_{t-1}(j)\nonumber\\
&=& (1 - \alpha_{e_t}^t) \cdot k_t + 
\sum_{j \neq e_t}\alpha_j^t \cdot (\pi_{t}(j) - \pi_{t-1}(j)) \nonumber\\
&+& \sum_{j =1}^n(\alpha_j^t - \alpha_j^{t-1})\cdot \pi_{t-1}(j) \nonumber\\
&\leq&r \cdot c + \underbrace{\left(\sum_{j=1}^n \alpha_j^t - \alpha_j^{t-1} \right )}_{\text{bounded by } \text{MTF}_{\text{OPT}}(t)} \cdot n,
\end{eqnarray}
\noindent where in the last inequality we used that $\alpha_{e_t}^t=1$ and that for each element, its position change from round $t-1$ to round $t$ $\pi_{t}(j) - \pi_{t-1}(j) $ can be at most $r$.


\begin{lemma}\label{l:technical} The term $\sum_{j=1}^n(\alpha_j^t - \alpha_j^{t-1})$, which appears in the difference $\Phi(t)-\Phi(t-1)$ of the potential function $\Phi(t)$ can be bounded by the moving cost of the  $\MTF_{\OPT}(t)$ as follows:
$\sum_{j=1}^n(\alpha_j^t - \alpha_j^{t-1})
\leq \alpha_{e_t}^t - \alpha_{e_t}^{t-1}  \leq 
 \mcost(\MTF_{\OPT}(t))/c.
$
\end{lemma}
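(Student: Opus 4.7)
The plan is to exploit the very simple structure of a $\MTF_{\OPT}$ move: at round $t$, $\MTF_{\OPT}$ takes $e_t$ and places it in position $1$. All other elements keep their relative order; the elements that were strictly in front of $e_t$'s previous position each shift back by exactly one slot, and the elements that were behind $e_t$'s previous position stay put. The whole lemma is a direct consequence of this one structural observation together with the definition $\alpha_j^{\tau}= \mathbf{1}[\, j \text{ is in one of the first } c \text{ positions of } \MTF_{\OPT} \text{ at round } \tau\,]$.

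To obtain the first inequality, I will rewrite $\sum_{j=1}^{n}(\alpha_j^{t}-\alpha_j^{t-1}) - (\alpha_{e_t}^{t}-\alpha_{e_t}^{t-1}) = \sum_{j\neq e_t}(\alpha_j^{t}-\alpha_j^{t-1})$ and argue termwise that each summand is $\leq 0$. For any $j\neq e_t$, the move of $\MTF_{\OPT}$ either leaves $j$ fixed or shifts it one position to the right; in both cases its new position is no smaller than its old one, so the indicator $\alpha_j^{\tau}$ can only flip from $1$ to $0$, never from $0$ to $1$. Summing over $j\neq e_t$ gives the first inequality.

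For the second inequality I will use that $\MTF_{\OPT}$ always lands $e_t$ at position $1$, so $\alpha_{e_t}^{t}=1$ and the difference $\alpha_{e_t}^{t}-\alpha_{e_t}^{t-1}$ is either $0$ or $1$. The case where it equals $0$ is trivial. If it equals $1$, then $\alpha_{e_t}^{t-1}=0$, meaning $e_t$ was at some position $p> c$ in $\MTF_{\OPT}$'s permutation at round $t-1$. Pulling $e_t$ from position $p$ to position $1$ creates exactly $p-1$ new inversions, hence $\mcost(\MTF_{\OPT}(t)) = p-1 \geq c$, and dividing by $c$ yields $\alpha_{e_t}^{t}-\alpha_{e_t}^{t-1} = 1 \leq \mcost(\MTF_{\OPT}(t))/c$.

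I do not anticipate a serious obstacle; the lemma really is a bookkeeping exercise once the ``only $e_t$ moves forward, every other element moves back by at most one slot'' picture is in place. The one subtlety to get right is the sign/direction: it is precisely the monotone-rightward movement of the non-$e_t$ elements that makes the non-$e_t$ contributions in the potential sum nonpositive, and it is precisely the gap $p>c$ that forces $\mcost(\MTF_{\OPT}(t))\geq c$ in the only nontrivial case of the second inequality.
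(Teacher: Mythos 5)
Your proposal is correct and follows essentially the same route as the paper's proof: both inequalities rest on the observation that $\MTF_{\OPT}$ moves only $e_t$ forward (so every other coefficient can only drop from $1$ to $0$), and that $\alpha_{e_t}^{t-1}=0$ forces $e_t$ to start beyond position $c$, making the moving cost at least $c$. Your version merely spells out the termwise bookkeeping and the exact inversion count $p-1$ in slightly more detail than the paper does.
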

\begin{proof}
For the first part of the inequality, observe that from round $t-1$ to round $t$, $\MTF_{\OPT}$ only moves element $e_t$ towards the beginning of the permutation. The latter implies, that for all elements $j \neq e_t$, $\alpha_j^t \leq \alpha_j^{t-1}$. For the second part of the inequality, observe that if $\alpha_{e_t}^{t-1} =0 $, then by the definition of the coefficients, $e_t$ does not belong in the first $c$ positions (of $\MTF_{\OPT}$'s permutation) at round $t-1$. Since at round $t$, $\MTF_{\OPT}$ moves $e_t$ in the first position of the list, $\mcost(\MTF_{\OPT}(t)) \geq c$
\end{proof}

Notice that the overall access cost of both $\MAE$ and $\text{MTF}_{\OPT}$ is just $m$ since at each round $t$ both of the algorithms admit an element, covering $S_t$, in the first position of the permutation. As a result by setting $c = \sqrt{n/r}$ and applying Lemma~\ref{l:technical},

\begin{eqnarray*}
\cost(\text{MAE}) &=& \acost(\MAE)+\mcost(\MAE)\\
&=& m+ r \sum_{t=1}^mk_t  \\
&\leq& m+   2r^{3/2}\sqrt{n}\sum_{t=1}^m\mcost(\text{MTF}_{\OPT}(t)) + \Phi(0) - \Phi(m)\\
&\leq& 2r^{3/2}\sqrt{n} \cdot \cost(\text{MTF}_{\OPT})+ \Phi(0) - \Phi(m)\leq 2r^{3/2}\sqrt{n}\cost(\text{MTF}_{\OPT}).
\end{eqnarray*}
\noindent The first inequality follows from inequality~(\ref{eq:technical}). The last inequality follows from the fact that $\Phi(0) \leq \Phi(m)$ since at round $0$ the permutations of MAE and $\text{MTF}_{\OPT}$ are the same. We complete the section with the proof of Lemma~\ref{l:move_to_front}, stating that $\cost(\text{MTF}_{\OPT}) \leq 2 \cdot \cost(\OPT_{\text{dynamic}})$.

\begin{lemma}\label{l:move_to_front}
For any sequence of requests $(S_1,\ldots, S_m)$, 
$$ \cost(\MTF_{\OPT}) \leq 2\cdot \cost(\OPT_{\text{dynamic}})$$
\end{lemma}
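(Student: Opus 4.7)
The plan is to carry out a standard \emph{potential-function} analysis of Move-to-Front, adapted to the MSSC setting. The key observation is that, although $\MTF_{\OPT}$ is given the offline choice of which element $e_t \in S_t$ to treat as the ``served'' one, the moves it performs are always single-element moves to the front of the permutation, exactly as in ordinary list update. Thus the classical Sleator--Tarjan amortized argument should transfer almost verbatim.

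First, I will define the potential $\Psi_t = d_{\kt}(\pi_t^{\MTF}, \pi_t^{\OPT})$, i.e.\ the number of inversions between the permutation of $\MTF_{\OPT}$ and that of $\OPT_\text{dynamic}$ after round $t$; note $\Psi_0 = 0$ (both start from the same initial permutation) and $\Psi_m \geq 0$. Then I will decompose round $t$ into four events in order: $\OPT$'s move to $\pi_t^{\OPT}$, $\OPT$'s access, $\MTF_{\OPT}$'s move of $e_t$ to position $1$, and $\MTF_{\OPT}$'s access. The first event changes $\Psi$ by at most the KT-distance that $\OPT$ pays, since each elementary swap in $\pi^{\OPT}$ changes the inversion count with $\pi^{\MTF}$ by at most one.

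The core of the argument analyzes $\MTF_{\OPT}$'s move. Write $k = \pi_{t-1}^{\MTF}(e_t)$ and $p = \pi_t^{\OPT}(e_t)$; since $e_t$ is the element used by $\OPT$ to serve $S_t$, we have $p = \pi_t^{\OPT}(S_t)$. Partition the $k-1$ elements that precede $e_t$ in $\pi_{t-1}^{\MTF}$ into two sets: $B$, those that also precede $e_t$ in $\pi_t^{\OPT}$, and $A$, those that come after $e_t$ in $\pi_t^{\OPT}$. Then $|A|+|B|=k-1$ and $|B|\le p-1$. When $\MTF_{\OPT}$ moves $e_t$ to the front, each element in $A$ stops being an inversion and each element in $B$ becomes a new inversion, so $\Psi$ changes by $|B|-|A|$. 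The moving cost of $\MTF_{\OPT}$ at this step is $k-1$ and its access cost is $1$, giving the amortized bound
\[
  \text{amort}^{\MTF}_t \;\le\; (k-1+1) + m_t^{\OPT} + (|B|-|A|) \;=\; 2|B|+1 + m_t^{\OPT} \;\le\; 2p + m_t^{\OPT} \;\le\; 2\,\text{cost}^{\OPT}_t,
\]
since $a_t^{\OPT}=p$ and $\text{cost}^{\OPT}_t = m_t^{\OPT}+a_t^{\OPT}$.

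Summing over all rounds $t = 1,\dots,m$ and using $\Psi_m - \Psi_0 \ge 0$ yields $\cost(\MTF_{\OPT}) \le 2\,\cost(\OPT_\text{dynamic})$, as required. No step looks like a real obstacle: the whole argument is essentially the textbook MTF proof, and the only subtlety is confirming that the bound $|B|\le p-1$ uses the right element $e_t$, which is guaranteed by the definition of $\MTF_{\OPT}$ (it always moves the specific element $\OPT_\text{dynamic}$ uses to serve $S_t$, not an arbitrary element of $S_t$).
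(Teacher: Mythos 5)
Your proposal is correct and follows essentially the same route as the paper: the same inversion-count potential $d_{\kt}(\pi_t^{\MTF},\pi_t^{\OPT})$, the same partition of the elements preceding $e_t$ (your $A$ and $B$ are the paper's $R_t$ and $L_t$), and the same amortized bound $2|B|+1+m_t^{\OPT}\leq 2\cdot\text{cost}_t^{\OPT}$. The only cosmetic difference is that the paper absorbs $\OPT$'s move into the potential via the triangle inequality for $d_{\kt}$ rather than swap-by-swap, which is the same estimate.
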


\begin{proof}
To simplify notation, let
$x_t$ and $y_t$ respectively denote the permutations of $\MTF_{\OPT}$ and $\OPT_{\text{dynamic}}$ at round $t$.
We will use as potential the function $\Phi(t) = \text{d}_{\text{KT}}(x_t, y_t)$ i.e.
the number of inverted pairs between the permutation $x_t$ and $y_t$.
Let $L_t$ denote the set of elements that are on the left of $e_t$ in permutation $x_t$ and on the left of $e_t$ in permutation $y_t$. Respectively, $R_t$ denotes the set of elements that are on the left of $e_t$ in permutation $x_t$, but on the right of $e_t$ in permutation $y_t$. Clearly, at round $t$, $\MTF_{\OPT}$ pays $L_t + R_t$ for moving cost and $1$ for accessing cost. At same time, $\OPT_{\text{dynamic}}$ pays at least $L_t + 1$ for accessing cost and $\text{d}_{\text{KT}}(y_t,y_{t-1})$ for moving cost.
\begin{eqnarray*}
\cost(\MTF_{\OPT}) + \Phi(t) - \Phi(t-1) &=& L_t + R_t + 1 + \text{d}_{\text{KT}}(x_t,y_t) - \text{d}_{\text{KT}}(x_{t-1},y_{t})\\
&+& \text{d}_{\text{KT}}(x_{t-1},y_t) - \text{d}_{\text{KT}}(x_{t-1},y_{t-1})\\
&\leq& L_t + R_t + 1 + (L_t - R_t)\\
&+& \text{d}_{\text{KT}}(x_{t-1},y_t) - \text{d}_{\text{KT}}(x_{t-1},y_{t-1})\\
&\leq& 2\cdot(L_t + 1) + \text{d}_{\text{KT}}(y_{t},y_{t-1})
\end{eqnarray*}
\noindent The first inequality follows by the definition of the of the sets $R_t$ and $L_t$, while the second by the triangle inequality in the Kendall-tau distance $\text{d}_{\text{KT}}(\cdot,\cdot)$.
The proof is completed by summing all over $m$ and using the fact that $\text{d}_{\text{KT}}(x_0,y_0) = 0$.
\end{proof}

\paragraph{Lower bound.} Next, we prove the following theorem showing a lower bound on $\MAE$, which nearly matches its upper bound. 
\repeattheorem{thm:mae-dyn-lb}

\noindent
The constructed  request sequence $(S_1, \ldots, S_m)$ will have the following properties:    
\begin{enumerate}
    \item The size of every request $S_t$ is $r \geq 3$.
    \item Every request forces the algorithm to pay an access cost of $\Omega(\sqrt{n})$. 
    \item The request sequence contains $\Omega(\sqrt{n})$ consecutive requests that share a common element, which we call the \emph{pivot}. 
\end{enumerate}

\noindent The first two properties ensure that $\MAE$ will pay $\Omega(r\sqrt{n})$ moving cost on every request. The third property will enable the optimal dynamic solution $\OPT_{\text{dynamic}}$ to pay $1$ access cost for the consecutive requests that share the common element.

Before proceeding to formal details of the proof of Theorem~\ref{thm:mae-dyn-lb}, we will present the main ideas of the construction. To this end, let $\pi_{0}$ be the MAE's initial permutation and think of it as being divided into $k+2$ blocks. The first block contains $k+r-1$ elements and all other blocks contain $k$ elements, therefore $n=k^2+2k+r-1$. The ``important'' elements, which the optimal dynamic solution uses to \emph{cover} all requests, are the elements of the second and last block initially. Meaning that the optimal dynamic solution can cover all requests using only the $2k$ ``important'' elements. 
Figure~\ref{fig:blocks} depicts the structure of the initial permutation for $k=3$ and $r=3$, where ``important'' elements are colored red and green.
\begin{figure}[h]
    \centering
   \includegraphics[scale=0.1]{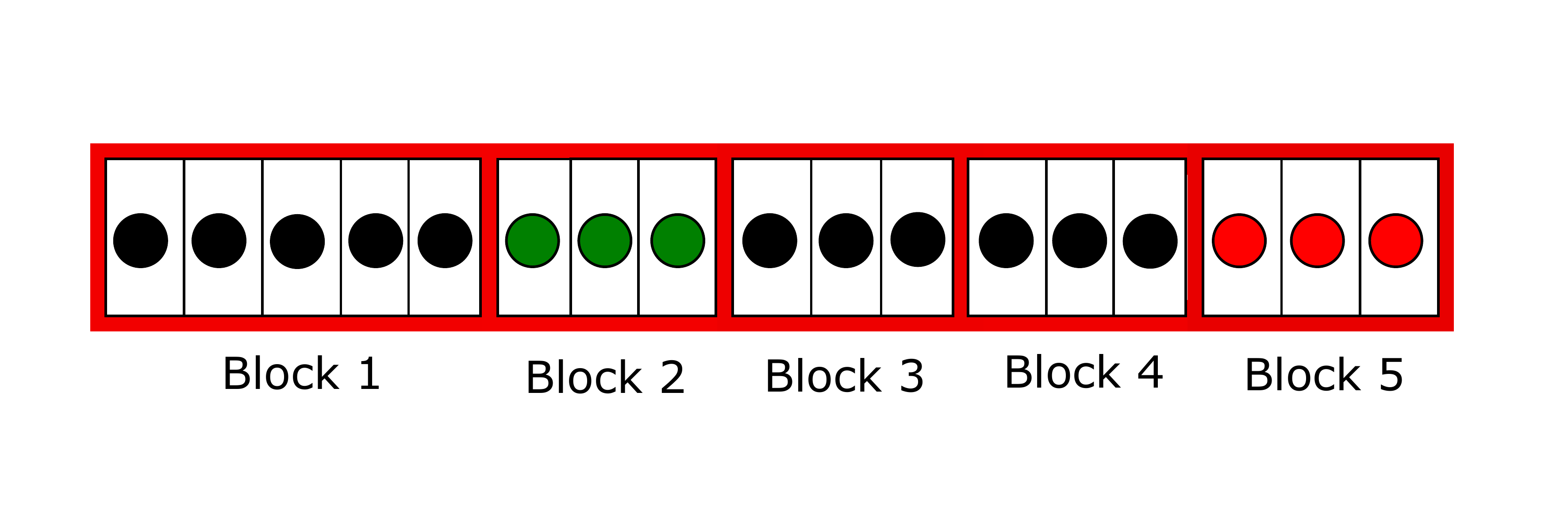}
    \caption{The structure in blocks for $k=3$ and $r=3$. Blocks are drawn with thick red borders. The elements that uses the optimal dynamic solution to cover the request sequence are colored green and red and lie in the second and last block initially .}
    \label{fig:blocks}
\end{figure}

\noindent The following definition formalizes the concept of blocks discussed above. Blocks are defined in terms of the indices of the permutation.
\begin{definition}\label{def:blocks}Let $\pi=(\pi_0, \ldots , \pi_m)$ be a sequence of permutations, where each permutation has size $n=k^2+2k+r-1$. We  divide each permutation into $k+2$ consecutive blocks $b_1, b_2, \ldots b_{k+2}$, where a block is a set of consecutive indices in each permutation with $|b_1|=k+r-1$ and $|b_2|=|b_3|= \ldots = |b_{k+2}|=k$. Therefore, $b_1=[1, \ldots , k+r-1]$ and $b_i=[(i-1)k+r, \ldots , ik+r-1]$, for $2 \leq i \leq k+2$.      
\end{definition}

For ease of exposition, we will divide the analysis in phases, which are subdivided in rounds. The first round starts with the request $S_1$, which contains the last element of the permutation $\pi_0$ ($pivot_1$) of the $\MAE$ algorithm and ends when $pivot_1$ arrives in position $k+r$. Inductively, at the start of a round $j$, $S_j$ contains the last element of permutation $\pi_{t}$ and lasts until this element arrives in position $k+r$. 
A phase  ends after $2k$ such rounds and a new phase begins. Formally:

\begin{definition}\label{def:phase} A phase  defines the period, starting with $k$ elements in block $b_2$ and $k$ elements in block $b_{k+2}$ and ending after $2k$ rounds of requests with these elements back in the blocks, where they started. A round $j$ starts when the adversary requests the last element , denoted as $pivot_j$, of the current permutation of MAE (with index $k^2+2k+r-1$) and ends when MAE places this element in the first position of the second block $b_2$ (with index $k+r$). The duration of the round is the number of requests from the start of the round until the end of the round and will be $k+1$.
\end{definition}

\begin{figure}[t]
    \centering
    \includegraphics[scale=0.1]{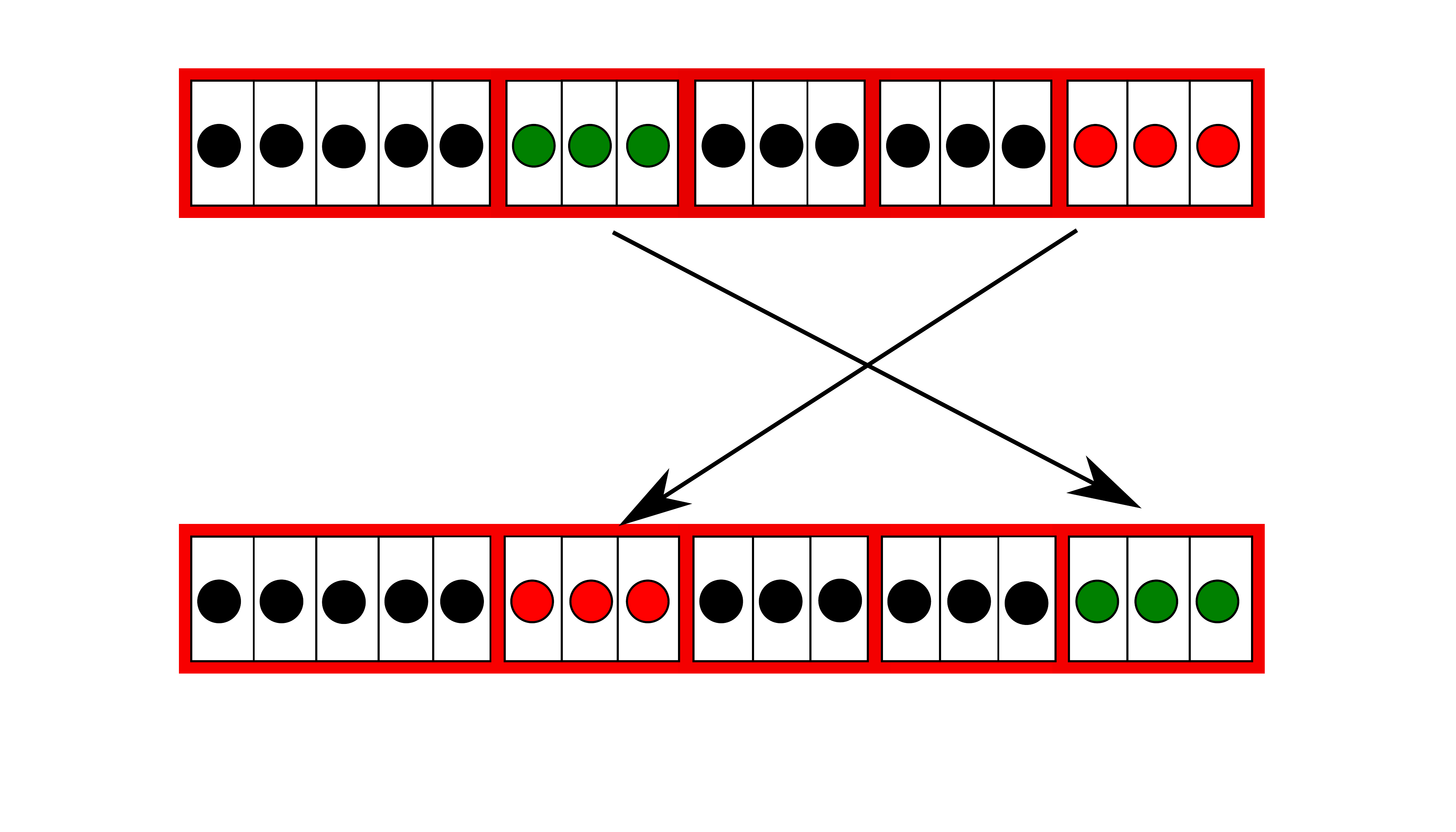}
    \caption{The permutation of $\MAE$ before the start of a round and after $k$ rounds, for $k=3$ and $r=3$. The elements of the second block have moved to the last block and vice versa.}
    \label{fig:symmetry}
\end{figure}

In order to describe the request sequence, we give the color green to $k$ elements and the color red to $k$ other. These are the elements, used by $\OPT_{\text{dynamic}}$ and every request contains one of them. The rest of them are colored black. Let all the green elements be in the second block ($b_2$) and all red elements be in the last block ($b_{k+2}$) initially. The request sequence may seem complicated, however its constructed following two basic principles. The first is to decrease the position of the $pivot$ element (which is initially red) by $k$ in the permutation and the second is to increase the position of the $k$ green  elements by one on every request. This is easily achieved by requests $S_t$ of the form:

 $$S_t = \{ \underbrace{\pi_t(k+1),\pi_t(k+2) \ldots ,\pi_t(k+r-2)}_{ black}, \underbrace{\text{element succeeding the green block}}_{black}, \underbrace{\text{pivot}}_{red}\} $$
\noindent
Then, at the end of round $j$, $pivot_j$  will be in the first position of the second block and the green elements will be in block $j+2$. After $k$ rounds, all red elements will be in the second block and all green elements will be in the last block, thus the adversary can repeat the request sequence starting again from round $1$.  Figure~\ref{fig:symmetry} depicts the positions of the ``important'' elements at the start of round $1$ and at the end of round $k$.

 The formal definition of the request sequence is shown in Algorithm~\ref{def:request-sequence}. 
\begin{algorithm}[h]
  \caption{Adversarial request sequence}\label{def:request-sequence}
Let $S_{ij}=[ p_1, \ldots , p_r ]$ be the $i$-th request of round $j$, where $p_1,\ldots, p_r$ denote the indices of the requested elements  in  the current permutation of MAE. The request sequence for $k$ consecutive rounds is (every round has $k+1$ requests):\\
\textbf{Round 1.} The $i$th request of round $1$, for $1\leq i< k$  is: \\
$S_{i1}=[k+1,\ldots, k+r-2, 2k+r-1+i, k^2+2k+r-1-(i-1)k]$\\
and the two last requests of round 1 are:\\
 $S_{k1}=[k+1,\ldots, k+r-1, 3k+r-1]$ and  $S_{(k+1)1}=[k,\ldots,k+r-2, 2k+r-1]$.\\
\textbf{Round $\mathbf{j}$ from 2 to k.}
The $i$th request of round $j$, for $i \notin \{ k-j+2, k+1 \} $ is : \\
$S_{ij}=[k+1, \ldots, k+r-2, (j+1)k+r-1+i, k^2+2k+r-1-(i-1)k]$,\\
for the request $i$ with $i= k-j+2$ is :\\
$S_{ij}=[k+1, \ldots, k+r-1, (j+1)k+r-1]$\\
and for $i=k+1$ is :\\
$S_{(k+1)j}=[k,\ldots, k+r-2 , 2k+r-1]$.\\
Then, the request sequence starts again from round 1.
\end{algorithm} 

 Notice that the last request of round $j$ places $pivot_j$ in the first position of block $b_2$, just in front of pivot elements of previous rounds. This way it is guaranteed that all $k$ red elements will be in $b_2$ after $k$ rounds. Moreover, at each round $j>1$ there is a request (the ($k-j+1$)th) that increases the positions of $k-j+2$ green elements by two, since they are passed from both the element succeeding them and the pivot element (the other $j-2$ elements are passed only by the  element succeeding the green block). Since, the adversary wants all green elements to be in block $j+2$ after round $j$, the next request does not move the $k-j+2$ green elements and the other $j-2$ are moved because  $pivot_j$ passes them (see Figure~\ref{fig:lower_bound}).

 The following lemma shows that $\MAE$ will arrive in a symmetric permutation after $k$ rounds.
 
\begin{lemma}\label{lem:symmetry}
Let $\pi$ be a permutation of definition \ref{def:blocks} at the start of a round  and let $X= \{ x_1,x_2, \ldots ,x_k \}$ and $Y= \{ y_1, y_2, \ldots ,y_k \}$ be the elements in $b_2$ and $b_{k+2}$ respectively before the start of a round. Then, after $k$ rounds, MAE has moved the elements of $Y$ in block $b_2$ and the elements of $X$ in block $b_{k+2}$. 
\end{lemma}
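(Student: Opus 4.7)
My plan is to prove Lemma~\ref{lem:symmetry} by induction on the round index $j$ within the phase, while maintaining a structural invariant that pins down the state of the MAE permutation at the start of each round. Let $X = \{x_1, \ldots, x_k\}$ denote the initial contents of $b_2$ (the ``green'' elements) and $Y = \{y_1, \ldots, y_k\}$ the initial contents of $b_{k+2}$ (the ``red'' elements). The invariant to maintain at the start of round $j$ (for $1 \le j \le k$) has four parts: (a) block $b_1$ still contains, as a set, its original black elements; (b) the first $j-1$ positions of $b_2$, namely $k+r, \ldots, k+r+j-2$, hold $\mathrm{pivot}_{j-1}, \ldots, \mathrm{pivot}_1 \in Y$ in reverse order, while the remaining $k-j+1$ positions of $b_2$ hold black elements; (c) block $b_{j+1}$ contains the $k$ elements of $X$ as a contiguous block; and (d) block $b_{k+2}$ contains the $k-j+1$ elements of $Y$ not yet used as pivots, with $\mathrm{pivot}_j$ sitting at the last position $n$, together with $j-1$ shuffled-in black elements. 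The base case $j=1$ is exactly the initial configuration of the phase. Because running round $k$ slides the green block from $b_{k+1}$ into $b_{k+2}$ and slots $\mathrm{pivot}_k$ into position $k+r$, at the end of round $k$ block $b_2$ contains exactly the $k$ elements of $Y$ and block $b_{k+2}$ contains exactly the $k$ elements of $X$, which is the conclusion of the lemma.

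For the inductive step from round $j$ to round $j+1$, I would analyze the $k+1$ requests of round $j$ one at a time using the definition of MAE from Algorithm~\ref{alg:move_all}. Each ``generic'' request, i.e. those with $i \in \{1, \ldots, k\} \setminus \{k-j+2\}$, has $k_t = k+1$, so MAE shifts every requested element exactly $k$ positions to the left. In such a request the $r-2$ entries at positions $k+1, \ldots, k+r-2$ act as black escorts from $b_1$; the entry at position $(j+1)k+r-1+i$ lies precisely one slot past the (already partially shifted) green block and therefore slides the green block one step to the right; and the entry at position $n-(i-1)k$ is the current $\mathrm{pivot}_j$, which advances $k$ slots toward the front. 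The special request at $i=k-j+2$ uses the form $[k+1, \ldots, k+r-1, (j+1)k+r-1]$, because at that moment the generic ``successor-of-green'' index and the pivot index would collide; this substitution drafts an extra black escort from $b_1$ while still preserving $k_t=k+1$, still pushing the green block right, and still advancing $\mathrm{pivot}_j$ by $k$. The closing request $i=k+1$ uses $[k, \ldots, k+r-2, 2k+r-1]$ with $k_t=k$; it translates the pivot (now at $2k+r-1$) by $k-1$ into position $k+r$, shifts $\mathrm{pivot}_{j-1}, \ldots, \mathrm{pivot}_1$ one step to the right within $b_2$, and completes the move of the green block from $b_{j+1}$ into $b_{j+2}$. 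Concatenating these $k+1$ single-request updates yields the invariant at the start of round $j+1$.

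The main technical obstacle is ensuring that the indices hard-coded into the request sequence, namely $(j+1)k+r-1+i$, $n-(i-1)k$, and the trigger $i=k-j+2$ for the alternative form, are calibrated consistently for every $j$. At each single request $i$ of round $j$, I must verify that $\mathrm{pivot}_j$ really sits at position $n-(i-1)k$ by that point, that the element at $(j+1)k+r-1+i$ really lies immediately past the current green block (which by then sits in block $b_{j+1}$ shifted by $i-1$), and that the closing request does not collide with any of the previously stored pivots in $b_2$. This boils down to a case analysis tying together the descent of the pivot and the shift of the green block; the careful design of the adversarial sequence is exactly what makes these three conditions mutually consistent across all $j \in \{1, \ldots, k\}$ and all $i \in \{1, \ldots, k+1\}$. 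Once the case analysis is done, the invariant propagates deterministically through the $k+1$ requests of every round, and taking $j = k$ in the end-of-round state yields the claimed swap $X \leftrightarrow Y$ between $b_2$ and $b_{k+2}$.
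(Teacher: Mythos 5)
Your overall strategy --- induction over the $k$ rounds of a phase, maintaining an invariant that the already-used pivots accumulate at the front of $b_2$, that $X$ sits in block $b_{j+1}$ at the start of round $j$, and that the current pivot descends by $k$ positions per request --- is exactly the paper's proof, and your four-part invariant is a cleaner statement of what the paper verifies informally. The gap is in the inductive step, at precisely the point where the construction is delicate. You model every non-closing request as advancing the green block uniformly by one position, and you explain the special request at $i=k-j+2$ as avoiding a ``collision'' between the successor-of-green index $(j+1)k+r-1+i$ and the pivot index $k^2+2k+r-1-(i-1)k$. These indices never coincide (at $i=k-j+2$ their difference is $k-j+2\geq 2$), so that is not the role of the special request. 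What actually happens is that at request $i=k-j+1$ of round $j\geq 2$ the pivot, having descended to position $(j+2)k+r-1$, jumps left over a sub-block of the green elements: those elements are passed by both the successor-of-green and the pivot and advance by \emph{two}, while the remaining $j-2$ advance by only one, so the green block temporarily fragments. The special request at $i=k-j+2$ then drops the successor-of-green (substituting the extra escort at position $k+r-1$) precisely so that only the lagging $j-2$ elements are passed by the pivot and the block re-coheres one block to the right. Consequently it is this special request, not the closing request $i=k+1$, that completes the move of $X$ into $b_{j+2}$; the closing request only slots the pivot into position $k+r$ of $b_2$.

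Because you defer the entire index verification to an unperformed ``case analysis,'' and the only nontrivial case in that analysis is exactly the interaction you have mismodelled, the plan as written breaks at request $i=k-j+1$ of every round $j\geq 2$: your invariant (c), that $X$ occupies a contiguous block shifted by $i-1$, is already false immediately after that request. The repair is the paper's bookkeeping: track the two sub-populations of $X$ separately across the pair of requests $i=k-j+1$ and $i=k-j+2$, and check that their displacements equalize before request $i=k-j+3$, after which the uniform-shift picture resumes.
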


\begin{proof}
We have to prove that after $j$ rounds, $j$ elements of $Y$ will be in $b_2$ and all elements of $X$ will be in $b_{j+2}$. Observe that the only way to increase the index of an element $e$ in a permutation by $c$ is to move $c$ elements with higher index than $e$ in the permutation to positions with lower index than $e$ (it increases by one if an element arrives at the position of $e$).

Let $pivot_j$ be the pivot element of round $j$, which by definition is the last element in the permutation at the start of a round. After $k$ requests to $pivot_j$ involving also the ($k+1$)th element and elements in positions that do not increase the index of $pivot_j$, $pivot_j$ moves a total of $k^2$ positions to the left arriving in position $2k+r-1$ of the current permutation. Then, the last request of this round will move it to the first position of $b_2$, moving the pivot element of the previous phase to the second position of $b_2$. By construction of the request sequence, $pivot_j$ is the $j$th element of $Y$ requested so far. Therefore, at the end of round $j$, $j$ elements of $Y$ will be in the first $j$ positions of $b_2$ .

We now show that all elements of $X$ will be in $b_{j+2}$ at end of round $j$. Particularly, we show inductively that at the beginning of $j$th round all elements of $X$ are in block $b_{j+1}$ and  $\MAE$ moves all elements of $X$ to block $b_{j+2}$ when the $j$th round ends.

\textit{Induction Base:} In the first round, $k$ requests contain the element succeeding $X$, $r-2$ elements in positions $k+1, \ldots, k+r-2$  of the permutation and an element, which has index higher than $k+r-2$ (which is $pivot_1$) and does not change the positions of elements of $X$. The ($k+1$)th request does not change the positions of elements of $X$, since it contains elements with lower index. Therefore,
the algorithm moves all elements of $X$ exactly $k$ positions to the right and they will be in $b_3$ when round $1$ ends.

\textit{Inductive Step:} For $j>1$, we have $k-1$ requests, where each of them forces the algorithm to move elements of $X$ one position to the right. However, there is exactly one request $S_{ij}$ with $i=k-j+1$, where $k-j+2$  elements of $X$ increase their positions by two.  These elements are passed by both the element succeeding $X$ and the pivot element. The other $j-2$ elements are passed only by the element succeeding $X$, thus increasing their positions by one. Therefore, the next request of the adversary (the $k-j+2$-th) makes the algorithm move $pivot_j$  and the elements in positions $k+1, \ldots k+r-1$, therefore moving only the $j-2$ elements of $X$ one position to the right and the other elements of $X$ remain in their positions.

We conclude that at each round the elements of $X$ move to the next right block. Since, they are initially positioned in $b_2$, after $k$ rounds they end up in block $b_{k+2}$. 
\end{proof}

\begin{figure}[h]
\begin{subfigure}{.34\textwidth}
  \includegraphics[scale=0.07]{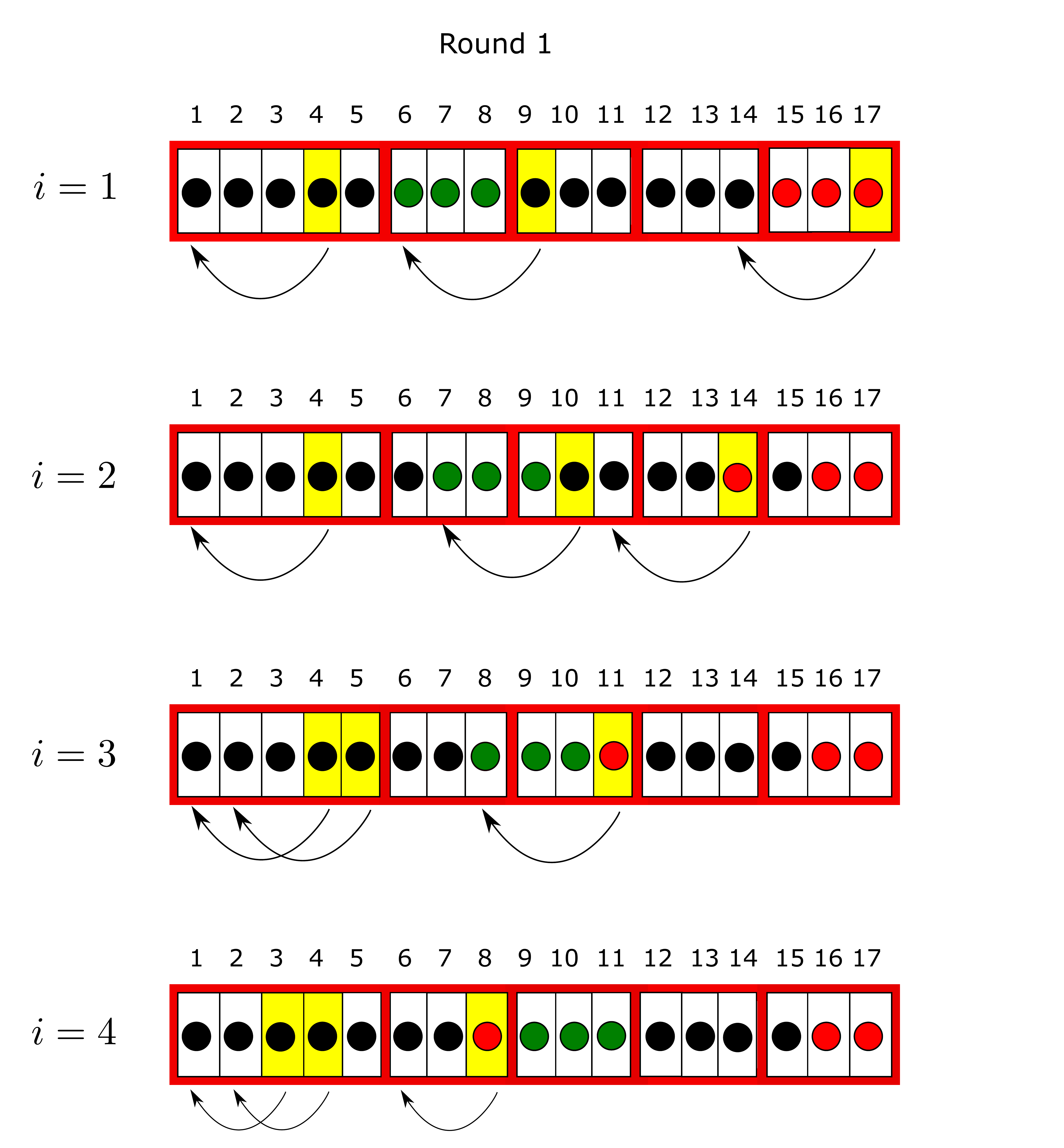}
\end{subfigure}%
\begin{subfigure}{.34\textwidth}
 \centering 
   \includegraphics[scale=0.07]{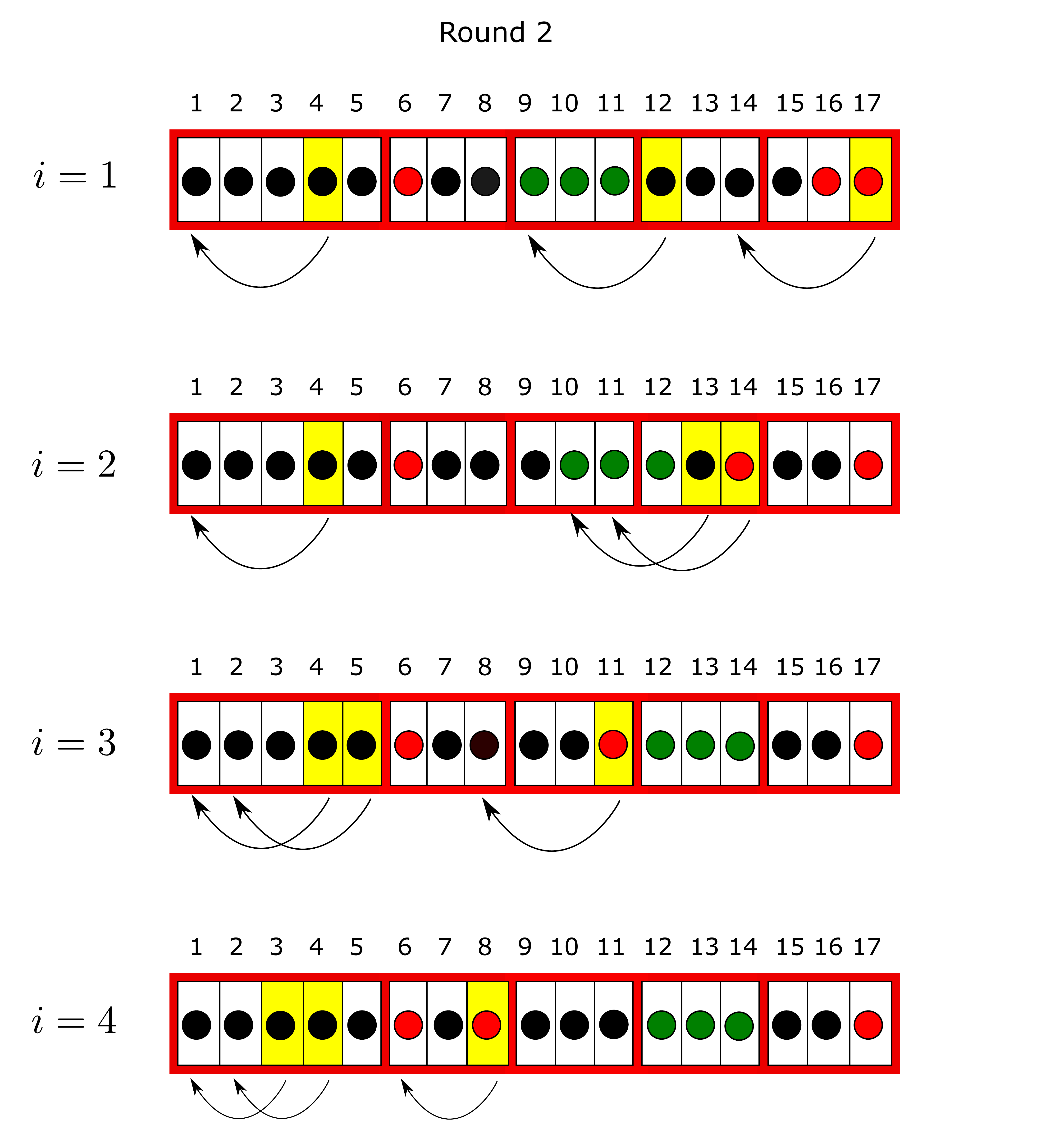}
\end{subfigure}%
\begin{subfigure}{.33\textwidth}
\includegraphics[scale=0.07]{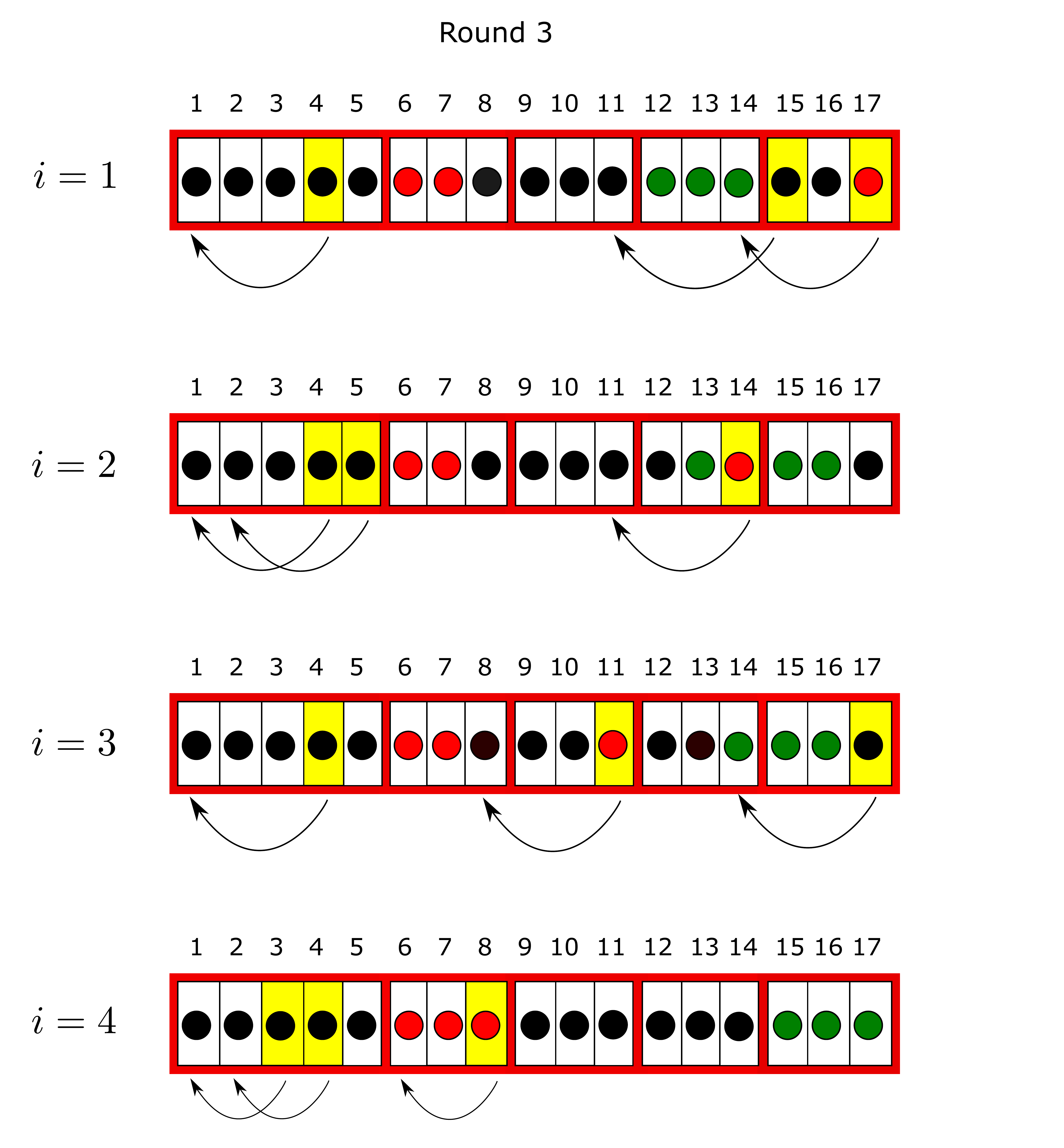}
\end{subfigure}%
\caption{Execution of $\MAE$ on the adversarial sequence for $r=3$ and $k=3$. The requested positions in the list have yellow background and the arrows indicate the positions of the elements after every request. For round $j>1$, the $i$th request with  $i=k-j+1$ makes $\MAE$ increase the position of $k-j+2$ green elements by two. The next request is such that only the $j-2$ remaining green elements increase their position by one.}
    \label{fig:lower_bound}
\end{figure}

\noindent
We are now ready to provide the proof of Theorem~\ref{thm:mae-dyn-lb}   
\repeattheorem{thm:mae-dyn-lb}
\begin{proof}
 We compute the costs paid by $\MAE$ and $\OPT_{\text{dynamic}}$  after $k$ rounds of a phase, since the request sequence of Definition \ref{def:request-sequence} is then repeated.

First, we bound the optimal cost. Initially, the optimal solution incurs a moving cost $O(k \cdot n)$ to move the $2k$ elements of $b_2$ and $b_{k+2}$ in the first $2k$ positions of its permutation.
Then, at the start of round $j$,  it brings $pivot_j$ to the first position and incurs an access cost of $1$ for $k+1$ consecutive requests. So, after $k$ rounds it pays at most $2k^2$ moving cost plus $k \cdot (k+1)$ access cost, which sums to at most $4k^2$. 

We now account the online cost. MAE pays $r \cdot k$ for the first $k$ requests of each round and $ r \cdot (k-1)$ for the last request. So, the total cost for $k$ rounds is $r\cdot k \cdot (k^2+k-1)>r \cdot k^3$. From Lemma~\ref{lem:symmetry}, all elements of $b_2$ are in $b_{k+2}$ and all elements of $b_{k+2}$ are in $b_2$ after $k$ rounds. 

The adversary can repeat the same strategy to create an arbitrarily long request sequence. Let $l$ be the number of times the same $k$-round strategy is applied. We get that  
$$\frac{\cost (\MAE)}{\cost(\OPT_{\text{dynamic}})}\geq \frac{l \cdot r \cdot k^3}{4l \cdot k^2+O(k \cdot n)}\to \Omega(k) \,.$$ The result follows for $l \to \infty $ and $k=\Omega(\sqrt{n})$.

\end{proof}

\end{onlyapp}

\section{Concluding Remarks}
\label{sec:concl}

Our work leaves several intriguing open questions. For the (static version of) Online MSSC, it would be interesting to determine the precise competitive ratio of the MAE algorithm; particularly whether it depends only on $r$ or some dependency on $n$ is really necessary. More generally, it would be interesting to determine the best possible performance of memoryless algorithms and investigate trade-offs between competitiveness and computational efficiency. 

For the online dynamic  MSSC problem, the obvious question is whether a $f(r)$-competitive algorithm is possible. Here, we showed that techniques developed for the list update problem seem to be too problem-specific and are not helpful in this direction. This calls for the use of more powerful and systematic approaches. For example, the online primal-dual method~\cite{BN09b} has been applied successfully for solving various fundamental problems~\cite{BBN12,BBMN15,BN06,BN13}. Unfortunately, we are not aware of a primal-dual algorithm even for the special case of list update; the only attempt we are aware of is in~\cite{tim16}, but this analysis basically recovers known (problem-specific) algorithms using dual-fitting. Our work gives further motivation for designing a primal-dual algorithm for list-update: this could be a starting point towards solving the online dynamic MSSC.  

In a broader context, the online MSSC is the first among a family of poorly understood online problems such as the multiple intents re-ranking problem described in Section~\ref{sec:related_work}. In this problem, when a set $S_t$ is requested, we need to cover it using $s \leq r$ elements; MSSC is the special case $s=1$. It is natural to expect that the lower bound of Theorem~\ref{thm:lb} can be generalized to $\Omega(r/s)$, i.e., as $s$ grows, we should be able to achieve a better competitive ratio. It will be interesting to investigate this and the applicability of our technique to obtain tight bounds for this problem.

\bibliographystyle{alpha}
\bibliography{references}

\newpage
\excludecomment{onlymain}
\includecomment{onlyapp}
\appendix

\end{document}